\def\R{\mathbb{R}}
\def\P{{\cal P}}
\newtheorem{theo}{Theorem}
\newtheorem{remark}{Remark}
\newtheorem{defin}[theo]{Definition}
\newtheorem{lemma}[theo]{Lemma}
\newtheorem{prop}[theo]{Proposition}
\def\id{{\mathbb I}}
\def\bra#1{\langle#1|} \def\ket#1{|#1\rangle}
\def\braket#1#2{\langle#1|#2\rangle}
\def\ketbra#1#2{\ket{#1}\!\bra{#2}}
\def\proj#1{\ket{#1}\!\bra{#1}}
\def\be{\begin{equation}}
\def\ee{\end{equation}}
\def\bea{\begin{eqnarray}}
\def\eea{\end{eqnarray}}
\def\bma{\begin{mathletters}}
\def\ema{\end{mathletters}}
\def\P{{\cal P}}
\def\q0{\underline{0}}
\def\Z{{\cal Z}}
\def\P{{\cal P}}
\def\C{{\mathbb C}}
\def\id{{\mathbb I}}
\def\R{\mathbb{R}}
\begin{document}

\title{Self-testing in prepare-and-measure scenarios and a robust version of Wigner's theorem}
\author{Miguel Navascu\'es}
\affiliation{Institute for Quantum Optics and Quantum Information (IQOQI) Vienna\\ Austrian Academy of Sciences}
\author{K\'aroly F. P\'al}
\affiliation{Institute for Nuclear Research, P.O. Box 51, H-4001 Debrecen, Hungary}
\author{Tam\'as V\'ertesi}
% \orcidlink{0000-0003-4437-9414}

\affiliation{MTA Atomki Lend\"ulet Quantum Correlations Research Group,\\
Institute for Nuclear Research, P.O. Box 51, H-4001 Debrecen, Hungary}
\author{Mateus Ara\'ujo}
\affiliation{Departamento de F\'isica Te\'orica, At\'omica y \'Optica, Universidad de Valladolid, 47011 Valladolid, Spain}

\begin{abstract}
We consider communication scenarios where one party sends quantum states of known dimensionality $D$, prepared with an untrusted apparatus, to another, distant party, who probes them with uncharacterized measurement devices. We prove that, for any ensemble of reference pure quantum states, there exists one such prepare-and-measure scenario and a linear functional $W$ on its observed measurement probabilities, such that $W$ can only be maximized if the preparations coincide with the reference states, modulo a unitary or an anti-unitary transformation. In other words, prepare-and-measure scenarios allow one to `self-test' arbitrary ensembles of pure quantum states. Arbitrary extreme $D$-dimensional quantum measurements, or sets thereof, can be similarly self-tested. Our results rely on a robust generalization of Wigner's theorem, a well-known result in particle physics that characterizes physical symmetries.

\end{abstract}

\maketitle

All experiments in quantum physics start by setting the lab equipment in a given state and end by conducting a measurement. When we assign different experimenters to each of these two tasks, namely, when one experimenter is asked to prepare certain quantum states; and another one, to probe them, then we are working in the so-called prepare-and-measure communication scenario~\cite{Pawlowski2011,Tavakoli2018PM}. This paradigm models many primitives of interest for quantum information theory, such as quantum key distribution (QKD)~\cite{Scarani2009,Woodhead2015}, quantum communication complexity~\cite{Ambainis2002} and metrology~\cite{Giovannetti2006}.

A communication protocol that does not rely on a characterization of the measurement and preparation apparatuses is said to be \emph{device-independent} (DI)~\cite{Acin2007}. The security or success of those protocols is thus guaranteed by their measurement statistics alone~\cite{colbeck2009,Pironio2010}. Unfortunately, prepare-and-measure scenarios cannot be fully DI, as an arbitrarily large classical memory suffices to explain all conceivable measurement statistics. 

It is possible, however, to devise \emph{semi-device independent} (SDI) prepare-and-measure protocols~\cite{Pawlowski2011}. SDI protocols rely both on the measurement statistics and also on some (generally, weak) promise on the preparation or measurement devices~\cite{VanHimbeeck2017energy}, \cite{Tavakoli2020info},\cite{jones2022theoryindependent}. Following most of the literature on SDI protocols (see, e.g., \cite{Gallego2010,Pawlowski2011,Woodhead2015}), in this Letter we will posit a bound on the Hilbert space dimension of the preparations. Note that the SDI paradigm allows certifying properties that the DI approach cannot, e.g.: self-testing non-projective measurements \cite{Tavakoli2020POVM}. In addition, SDI protocols are in general experimentally friendlier than their DI counterparts. 

Under the assumption that the Hilbert space dimension of the prepared systems is known, it was observed in Ref.~\cite{Tavakoli2018PM,Tavakoli2020POVM} that certain qubit states and qubit measurements could be \emph{self-tested}: namely, the only way to generate certain feasible measurement statistics in a prepare-and-measure scenario is to prepare those quantum states and conduct those measurements, modulo unitary/anti-unitary transformations. In the same spirit, the authors of \cite{Mate} show how to self-test measurements of Mutually Unbiased Bases (MUBs) \cite{MUBs} in arbitrary dimensions. These works leave us with the question of which state ensembles and measurements, in the qubit case, as well as in higher dimensions, can be self-tested.

In this paper, we answer this question, by providing a family of linear witnesses whose maximal value self-tests arbitrary ensembles of pure states and arbitrary sets of extreme POVMs (or both), in prepare-and-measure scenarios of arbitrary Hilbert space dimension. Since neither state ensembles containing mixed states nor non-extreme Positive Operator Valued Measures (POVMs) can be self-tested, our result fully characterizes the limits of self-testing in the prepare-and-measure scenario. Note that, prior to our work, general self-testing schemes only existed for scenarios with non-demolition measurements, which should be sequentially applied in the course of a single experimental round \cite{huang2022foundations}, \cite{Das_2022}.

To prove our main result, we generalize the famous Wigner's theorem~\cite{wigner1931}, which states that all physical symmetries (i.e., all maps from rays to rays that preserve the absolute value of the scalar product) can be expressed as a unitary or an anti-unitary transformation. Our generalization considers ``noisy partial symmetries'', whose domain is limited to a finite number of rays and which preserve the absolute value of the scalar product up to an error. %$\delta$. We show that, as long as the set of rays includes $O(D)$ fiducial states, there exists a unitary or anti-unitary transformation that approximately mimics the symmetry for $\delta\ll 1$.

%The scenario we consider differs from that of other works with a similar flavor. The authors of \cite{huang2022foundations}, \cite{Das_2022}, for instance, explain how to self-test quantum instruments under the assumption that the available quantum operations can be repeated in the course of a single experimental round. These remarkable results do not, however, apply to the prepare-and-measure scenario, where the only operations available are state initializations and demolition measurements.

%The work of \cite{Miklin_2021} has a much greater resemblance with ours, in that the authors present a prepare-and-measure protocol to self-test arbitrary qubit state ensembles and measurements. Nonetheless, an important assumption in \cite{Miklin_2021} is that, every time that the experiment is reset, the same state preparations and measurements are available -such scenarios are usually called independent and identically distributed (i.i.d.). In comparison, the statistical tests proposed in this Letter allow falsifying the hypothesis that the underlying states and measurements at each experimental round are not close to the reference ones (modulo unitary and anti-unitary transformations) \emph{without invoking the i.i.d. assumption}. Our results are therefore robust under the miscalibration of the preparation and measurement devices and even allow for correlations between the different experimental rounds. 

The scenario we consider differs from that of other works with a similar flavor, such as Miklin and Oszmaniec's \cite{Miklin_2021}. In this paper, the authors assume that, every time that the experiment is reset, the same state preparations and measurements are available -such scenarios are usually called independent and identically distributed (i.i.d.). To the contrary, the statistical tests proposed in this Letter do not rely on the i.i.d. assumption: our results are therefore robust under the miscalibration of the preparation and measurement devices and even allow for correlations between the different experimental rounds. 

%The structure of this paper is as follows: first, we introduce the prepare-and-measure scenario and explain the self-testing problem. Next, we introduce the linear witness whose maximum value certifies that the prepared states are indeed pure and have the same pair-wise fidelities as the reference states. To prove that this indeed suffices to self-test the states, we formulate our extension of Wigner's theorem. Finally, we will add an extra linear term to our witness to self-test arbitrary extreme POVMs. We then present our conclusions.

\vspace{10pt}
\noindent\emph{Prepare-and-measure scenarios}
\vspace{10pt}

In a prepare-and-measure scenario, one party, Alice, prepares a $D$-dimensional quantum state labeled by the index $x=1,...,X$, and sends it to a second party, Bob, who probes the state with some POVM $y\in\{1,...,Y\}$, obtaining an outcome $b\in\{1,...,d\}$. The scenario is thus specified by the vector of natural numbers $(D, X, Y, B)$. 

In the following, we denote Alice's $x^{th}$ state as $\bar{\psi}_x\in B(\C^D)$; and Bob's $y^{th}$ POVM, by $\bar{M}_y:=(\bar{M}_{b|y}\in B(\C^D):b=1,...,B)$. We will call $\bar{\psi}$ ($\bar{M}$) Alice's collection of states (Bob's collection of POVMs). That is, $\bar{\psi}=\{\bar{\psi}_x\}_{x=1}^X$, $\bar{M}:=\{\bar{M}_y\}_{y=1}^Y$. Of course, both states and POVMs are subject to the usual positivity and normalization conditions, i.e., $\bar{\psi}_x\geq 0,\tr(\bar{\psi}_x)=1,\forall x$, $\bar{M}_{b|y}\geq 0,\forall y,b$ and $\sum_b\bar{M}_{b|y}=\id_D, \forall y$.

Denoting by $P(b|x,y)$ the probability that Bob observes outcome $b$ when he performs measurement $y$ on state $x$, the experiment's measurement statistics $P:=(P(b|x,y):x,y,b)$ are given by
\begin{equation}
P(b|x,y)=\tr(\bar{\psi}_x \bar{M}_{b|y}),\forall x,y,b.
\label{def_prob}
\end{equation}
 Note that, for any unitary or antiunitary $U$, the state ensemble $U\psi U^\dagger$ and the measurements $UMU^\dagger$ generate the same probability distribution $P(b|x,y)$ as the original state ensemble $\psi$ and measurement set $M$ used by Alice and Bob. The realizations $(U \psi U^\dagger,U M U^\dagger)$, $(\psi,M)$ are therefore operationally indistinguishable within the semi-device independent paradigm. We call $Q_D$ the set of all distributions $P$ admitting some $D$-dimensional realization $(\psi, M)$.

Consider a prepare-and-measure scenario $(D, X,Y,B)$, and let $W:\R^{XYB}\to\R$ be a linear functional with $\max_{P\in Q_D}W(P)=W^\star$. For ${\cal X}\leq X$, ${\cal Y}\leq Y$, we say that $W$ \emph{self-tests} the states $(\psi_x)_{x=1}^{{\cal X}}$ and the POVMs $\{M_y\}_{y=1}^{{\cal Y}}$ if, for any feasible $P$ realized by $(\bar{\psi}, \bar{M})$ with $W(P)=W^\star$, there exists a unitary or anti-unitary map $U$ with the property that
\begin{align}
&\bar{\psi}_x=U\psi_xU^\dagger,x=1,...,{\cal X},\nonumber\\
&\bar{M}_{b|y}=U M_{b|y}U^\dagger,y=1,...,{\cal Y},b=1,...,B.
\label{self_test}
\end{align}
We say that $W$ \emph{robustly self-tests} $(\psi_x)_{x=1}^{{\cal X}}$, $\{M_y\}_{y=1}^{{\cal Y}}$ if, for all $\epsilon>0$, there exists $\epsilon'>0$ such that $W^\star- W(P)\leq \epsilon'$ implies that relations (\ref{self_test}) are satisfied up to precision $\epsilon$ in trace and operator norm, respectively.

No linear functional $W$ can self-test non-extreme states or measurements. Suppose, e.g., that $W$ were maximized by a feasible distribution $P$ whose realization involved a mixed state $\bar{\psi}_x=\sum_{j}\lambda_j\proj{\bar{\phi}_j}$, with $\lambda_j>0$. Then, the distribution $P'$ generated if we replaced $\bar{\psi}_x$ with $\proj{\bar{\phi}_1}$ would also maximize $W$. However, $\proj{\bar{\phi}_1}$ and $\bar{\psi}_x$ are not connected by a unitary or an anti-unitary transformation. The same argument holds for extreme POVMs. We arrive at the conclusion that only extreme (pure) states and extreme POVMs can be, in principle, self-tested, modulo unitary or anti-unitary transformations.

To prove that a prepare-and-measure experimental system satisfies an inequality of the form $W(P)\geq W^\star-\epsilon$, one would first think of estimating the probabilities $P(b|x,y)$ through repeated experiments and then evaluating the witness. However, such a direct approach is only feasible when the experimental setup satisfies the i.i.d. assumption. When the system is not i.i.d., the goal is to reject the null hypothesis that, in each experimental round, $W(P)< W^\star-\epsilon$. It turns out that, as long as the functional $W$ is linear, it is possible to devise a statistical test that fits the bill and yet does not rely on the i.i.d. assumption \cite{zhang}. In this test, at each experimental round the inputs $x,y$ are sampled according to some probability distribution and the output $b$ is used to generate a round score. The score of the different rounds is multiplied and a $p$-value for the null hypothesis is derived. If the system violates the hypothesis and is approximately i.i.d., the $p$ value will quickly tend to zero as the number of rounds increases \cite{zhang}.

Ref.~\cite{Tavakoli2020POVM} proposes linear witnesses to self-test some extreme POVMs and state ensembles in the qubit case ($D=2$). The goal of the rest of the paper is to generalize the results of Ref.~\cite{Tavakoli2020POVM} to robustly self-test any ensemble of pure states and collection of extreme POVMs, defined in Hilbert spaces of arbitrary dimension $D$. 

\vspace{10pt}

\noindent\emph{Self-testing of pure state ensembles}

\vspace{10pt}

We start by showing how to self-test pure state ensembles. To avoid a cumbersome notation, from now on, whenever we refer to a normalized ket $\ket{\omega}$, we will use $\omega$ to denote its corresponding rank-$1$ projector $\proj{\omega}$. For fixed dimension $D$, we call $\P\subset B(\C^D)$ the set of all rank-$1$ projectors. 

Self-testing of state ensembles is based on the following lemma.

\begin{lemma}
\label{witness_norms}
Let $\Psi\equiv\{\psi_i\}_{i=1}^N\subset \P$ be a collection of pure quantum states, such that
\begin{equation}
\sum_i\alpha_i\psi_i=\frac{\id_D}{D},
\label{max_mixed_cond}
\end{equation}
for some $\{\alpha_i\}_{i=1}^N\subset \R^+$. Consider a prepare-and-measure scenario $(D, N,\frac{N^2-N}{2},2)$, with the measurements labeled by $y\in\{(i,j):i>j,i,j=1,...,N\}$. Define the linear witness:
\begin{align}
W_\Psi(P):=&\sum_{i>j}\alpha_i\alpha_j\|\psi_i-\psi_j\|_1S_{i,j},
\end{align}
with
\begin{equation}
S_{i,j}:= P(2|x=i,y=(i,j))-P(2|x=j,y=(i,j)).
\end{equation}
Then, for all $P\in Q_D$, it holds that
\begin{equation}
W_\Psi(P)\leq 1-\frac{1}{D}.
\end{equation}
This inequality is tight, and can be saturated by preparing the states $\Psi$ and choosing the dichotomic measurements appropriately.

Moreover, if any feasible distribution $P$, realized with preparation states $\{\bar{\psi}_i\}_{i=1}^N\subset B(\C^D)$, satisfies $W_\Psi(P)\geq 1-\frac{1}{D}-\epsilon$, then it holds that 
\begin{align}
&|\tr\{\bar{\psi}_i\bar{\psi}_j\}-\tr\{\psi_i\psi_j\}|\leq O\left(\sqrt{\epsilon}\right),\forall i\neq j.  \nonumber \\
&1-\tr(\bar{\psi}_i^2)\leq O(\epsilon),\forall i
\label{basic_relation}
\end{align}
In particular, when $\epsilon=0$, then all the prepared states $\{\bar{\psi}\}_{i=1}^N$ are pure and have the same projector overlaps as the reference states $\psi$.
\end{lemma}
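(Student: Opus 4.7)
My plan is to establish $W_\Psi(P) \leq 1-1/D$ by a four-link chain --- an optimal-measurement step, a Fuchs--van de Graaf step, a weighted Cauchy--Schwarz step, and an explicit evaluation of the resulting sums via the hypothesis (\ref{max_mixed_cond}) --- and then to extract the two robust estimates by tracking the slack in each link.

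Since $\bar{\psi}_i - \bar{\psi}_j$ is traceless Hermitian and $0\leq \bar{M}_{2|(i,j)}\leq \id_D$, the optimum of $\tr[(\bar{\psi}_i-\bar{\psi}_j)\bar{M}_{2|(i,j)}]$ over admissible POVM elements equals $\tfrac12\|\bar{\psi}_i-\bar{\psi}_j\|_1$, attained by the projector onto the positive part of $\bar{\psi}_i-\bar{\psi}_j$; hence $S_{i,j}\leq \tfrac12\|\bar{\psi}_i-\bar{\psi}_j\|_1$. The Fuchs--van de Graaf inequality then upgrades this to $S_{i,j}\leq \sqrt{1-\tr(\bar{\psi}_i\bar{\psi}_j)}$, which handles mixed $\bar{\psi}_i$ cleanly and is sharp for pure ones. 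A weighted Cauchy--Schwarz over the pairs $(i,j)$ now yields $W_\Psi(P) \leq \sqrt{A\cdot B}$, where $A:=\sum_{i>j}\alpha_i\alpha_j\|\psi_i-\psi_j\|_1^2$ and $B:=\sum_{i>j}\alpha_i\alpha_j(1-\tr(\bar{\psi}_i\bar{\psi}_j))$. Taking the trace of (\ref{max_mixed_cond}) gives $\sum_i\alpha_i=1$, and combining $\tr[(\sum_i\alpha_i\psi_i)^2]=1/D$ with $\|\psi_i-\psi_j\|_1^2=4(1-\tr(\psi_i\psi_j))$ (the $\psi_i$ are rank-one projectors) yields $A=2(1-1/D)$. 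Introducing the density matrix $\bar\rho:=\sum_i\alpha_i\bar{\psi}_i$ and expanding analogously, I obtain $B=\tfrac12[1-\tr(\bar\rho^2)]-\tfrac12\sum_i\alpha_i^2 T_i\leq \tfrac12(1-1/D)$, where $T_i:=1-\tr(\bar{\psi}_i^2)\geq 0$ and I use $\tr(\bar\rho^2)\geq 1/D$. Multiplying delivers the bound, and tightness follows from $\bar{\psi}_i=\psi_i$ together with the optimal $\bar{M}$ of the first step.

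For the robust claim, I assume $W_\Psi(P)\geq 1-1/D-\epsilon$ and trace the slack backwards. The dropped non-negative term in the $B$-inequality must be $O(\epsilon)$, yielding $\sum_i\alpha_i^2 T_i\leq O(\epsilon)$; since every $\alpha_i>0$, this gives the pointwise estimate $1-\tr(\bar{\psi}_i^2)\leq O(\epsilon)$, which is the second claim. The Cauchy--Schwarz slack is likewise $O(\epsilon)$, pinning the vector with entries $\sqrt{\alpha_i\alpha_j}\sqrt{1-\tr(\bar{\psi}_i\bar{\psi}_j)}$ to within Euclidean distance $O(\sqrt{\epsilon})$ of $\tfrac12$ times the vector with entries $\sqrt{\alpha_i\alpha_j}\|\psi_i-\psi_j\|_1$ (the constant $\tfrac12$ is locked in by matching both norms to $\sqrt{(1-1/D)/2}+O(\epsilon)$). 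Comparing entry by entry and clearing the square root then delivers $|\tr(\bar{\psi}_i\bar{\psi}_j)-\tr(\psi_i\psi_j)|\leq O(\sqrt\epsilon)$ for each $i\neq j$, the first claim.

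The main obstacle will be arranging the robust argument so that its conclusions apply to the \emph{original} preparations: a convexity argument (maximization of $\|\cdot\|_1$ is attained at pure preparations) would suffice for the value of the bound but loses touch with the $\bar{\psi}_i$ that the lemma actually probes. The Fuchs--van de Graaf step is what lets me accommodate mixed states directly, and the fact that the purity defects $T_i$ appear naturally inside the $B$-inequality is precisely what closes the bootstrap self-consistently.
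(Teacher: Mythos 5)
Your proof is correct and follows essentially the same route as the paper's: the Helstrom state-discrimination bound, the Fuchs--van de Graaf inequality, a weighted Cauchy--Schwarz over the pairs $(i,j)$, and the purity bound $\tr(\bar\rho^2)\geq 1/D$, with the robust estimates extracted by tracking the slack in each link. The only difference is that you apply Fuchs--van de Graaf \emph{before} Cauchy--Schwarz rather than after, which lets you bound the distance between the overlap vector and $\tfrac12$ times the reference trace-norm vector in a single step, bypassing the paper's intermediate comparison of its vectors $\bar c(\rho)$ and $\bar d(\rho)$ and the subsequent triangle inequality --- a mild but genuine streamlining of the robustness argument.
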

This lemma can be regarded as a study of the saturation conditions of a variant of the dimension witness proposed in \cite{BNV}. The reader can find a proof in Section \ref{app_witness_norms} of the Appendix, where the exact expressions for the right-hand sides of eq. (\ref{basic_relation}) are provided.

Now, suppose that we wished to self-test an ensemble of pure-state preparations $\{\psi_i\}_{i=1}^M$. To exploit Lemma \ref{witness_norms}, we need to find (pure) states $\{\psi_i\}_{i=M+1}^N$ and positive real numbers $\{\alpha_i\}_{i=1}^N$ such that condition (\ref{max_mixed_cond}) holds. Such extra states and positive numbers always exist: consider, for instance, the maximum $\lambda\in \R$ such that the operator $V=\frac{\id}{D}-\lambda\sum_{i=1}^M\psi_i$ is positive semidefinite \footnote{This maximum is always positive, as $\lambda = 1/(MD)$ will already give a positive semidefinite operator.}. Let $\sum_{i=M+1}^{N}\beta_i\proj{\psi_i}$ be the spectral decomposition of $V$, with $\beta_i>0$ (we omit the eigenvectors with zero eigenvalue, so $N\leq M+D-1$). Then we have that $\{\psi_i\}_{i=1}^{N}$, and $\{\alpha_i\}_{i=1}^{N}$, with $\alpha_i:=\lambda$, for $i=1,\ldots,M$, and $\alpha_i := \beta_i$, for $i=M+1,\ldots,N$, satisfy condition (\ref{max_mixed_cond}). 

Given $\psi=\{\psi_i\}_{i=1}^N$, $\{\alpha_i\}_{i=1}^N$, we can thus build the witness $W_\psi(P)$. By Lemma \ref{witness_norms}, if $W_\psi(P)$ is $\epsilon$-close to its maximum value, then the prepared states $\{\phi_i\}_{i=1}^N$ will satisfy eq. (\ref{basic_relation}). The question is whether, for $\epsilon$ sufficiently small, this condition implies that $\phi_i\approx U\psi_iU^\dagger$, for all $i$, for some unitary or anti-unitary $U$.

Note the similarities with the famous Wigner's theorem \cite{wigner1931,wigner_old,Geher2014}, whose finite-dimensional version reads \footnote{The original version of Wigner's theorem is expressed in terms of Hilbert space rays $\ket{\phi},\ket{\phi'}$ and the complex modulus of their normalized overlaps $o(\phi,\phi')=\frac{|\braket{\phi}{\phi'}|}{\sqrt{\braket{\phi}{\phi}}\braket{\phi'}{\phi'}}$. Note, however, that $\tr(\proj{\tilde{\phi}}\proj{\tilde{\phi'}})=o(\phi,\phi')^2$, with $\ket{\tilde{\phi}}=\frac{\ket{\phi}}{\sqrt{\braket{\phi}{\phi}}}$. It follows that we can express Wigner's theorem in terms of rank-$1$ projectors.}:
\begin{theo}
Let the (possibly non-linear) map $\omega:\P\to\P$ have the property:
\begin{equation}
\tr(\phi \phi')=\tr(\omega(\phi) \omega(\phi')), \forall \phi,\phi'\in \P.
\label{wigner_overlap}
\end{equation}
Then, there exists a unitary or anti-unitary transformation $U$ such that
\begin{equation}
\omega(\phi)=U\phi U^\dagger, \forall \phi\in \P.
\end{equation}
\end{theo}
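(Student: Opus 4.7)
The plan is to reconstruct a unitary/anti-unitary $U$ explicitly from the action of $\omega$ on a carefully chosen family of ``diagnostic'' projectors, and then to verify that the candidate $U$ implements $\omega$ on every pure state.

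First I would fix an orthonormal basis $\{|e_i\rangle\}_{i=1}^{D}$ of $\C^D$ with projectors $e_i=\proj{e_i}$. Because $\tr(e_ie_j)=\delta_{ij}$, the hypothesis (\ref{wigner_overlap}) forces $\{\omega(e_i)\}$ to be mutually orthogonal rank-$1$ projectors, so there exist unit vectors $|f_i\rangle$ (unique up to a phase) with $\omega(e_i)=\proj{f_i}$. Next, for each $i\geq 2$, consider the projector $p_{1i}$ of $|\phi_{1i}\rangle=\tfrac{1}{\sqrt{2}}(|e_1\rangle+|e_i\rangle)$. Its overlaps with $e_1,e_i$ equal $\tfrac{1}{2}$ and with all other $e_k$ vanish, so $\omega(p_{1i})$ must be the projector of a vector of the form $\tfrac{1}{\sqrt{2}}(|f_1\rangle+e^{i\theta_i}|f_i\rangle)$. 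Redefining $|f_i\rangle\mapsto e^{i\theta_i}|f_i\rangle$ absorbs the phase; only the phase of $|f_1\rangle$ remains free. I then \emph{define} the candidate linear map $U$ by $U|e_i\rangle=|f_i\rangle$ and the candidate anti-linear map $V$ by $V(\sum_i c_i|e_i\rangle)=\sum_i c_i^\ast|f_i\rangle$.

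The core step is to show that for any pure state $|\psi\rangle=\sum_i c_i|e_i\rangle$ one has $\omega(\proj{\psi})=U\proj{\psi}U^\dagger$ or $\omega(\proj{\psi})=V\proj{\psi}V^\dagger$. Writing $\omega(\proj{\psi})=\proj{\psi'}$ and computing the overlaps $\tr(\omega(\proj{\psi})\omega(e_k))=|c_k|^2$ and $\tr(\omega(\proj{\psi})\omega(p_{1k}))=\tfrac{1}{2}|c_1+c_k|^2$ for each $k$, one reads off $|\langle f_k|\psi'\rangle|^2=|c_k|^2$ and $|c_1^\prime+c_k^\prime|^2=|c_1+c_k|^2$, where $c_k^\prime=\langle f_k|\psi'\rangle$. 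After fixing the global phase so that $c_1^\prime=c_1$, these equations pin $c_k^\prime$ down to either $c_k$ or $c_k^\ast$ for each $k$ individually. To couple the different indices I would also evaluate the overlap with $\omega(p_{jk})$ for $j,k\geq 2$ (and, if $D\geq 3$, with $\omega$ of a projector supported on three basis vectors such as the one with vector $\tfrac{1}{\sqrt{3}}(|e_1\rangle+|e_j\rangle+|e_k\rangle)$). The resulting constraints on the relative phases $c_j^\prime\overline{c_k^\prime}$ vs.\ $c_j\overline{c_k}$ force a coherent choice: either $c_k^\prime=c_k$ for all $k$ or $c_k^\prime=c_k^\ast$ for all $k$.

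It remains to rule out the mixed case where $\omega$ acts unitarily on some state and anti-unitarily on another. Here I would pick a ``generic'' witness state, say $|\chi\rangle$ with all coefficients real, positive, and pairwise distinct; the above analysis applied to $\chi$ reduces to a single choice (linear or antilinear) because real coefficients coincide with their conjugates, but I can pair $\chi$ with a second state having a genuinely complex coefficient, and the requirement that the overlap $\tr(\omega(\proj{\psi})\omega(\proj{\chi}))=|\langle\psi|\chi\rangle|^2$ be consistent selects the same linearity/antilinearity as for $\chi$. This globalizes the dichotomy and concludes that $\omega(\phi)=U\phi U^\dagger$ for all $\phi\in\P$ (unitary case) or $\omega(\phi)=V\phi V^\dagger$ for all $\phi\in\P$ (anti-unitary case).

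I expect the main obstacle to lie in the globalization step: locally, every two-dimensional section of the problem admits both a unitary and an anti-unitary explanation (since $\overline{c}=c$ whenever $c\in\R$ and complex conjugation on a $2$-dimensional real span is itself unitary), so the dichotomy really is a global phenomenon that only manifests once three linearly independent complex superpositions are compared. Making this rigorous is the delicate part; everything else reduces to careful bookkeeping of overlaps in the fixed basis.
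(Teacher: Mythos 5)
Your overall strategy -- reconstruct $U$ from the images of a fiducial family of projectors (basis states, two-term superpositions, three-term superpositions), then pin down an arbitrary state by its overlaps with those images -- is sound and is essentially the strategy behind the proof the paper relies on: the paper does not prove this theorem itself (it cites Wigner and G\'eh\'er), but its robust analogue in Appendix~B uses exactly such a fiducial set ${\cal T}$ of $Z_k$, $X_k$, $Y_k$, $XX_k$, $YY_k$ states. Your steps up to and including the per-state dichotomy are correct: with the phases of $\ket{f_i}$ fixed via $p_{1i}$, and $\omega(p_{jk})$ fixed via the real three-term superpositions, the overlap constraints give $c_k'\in\{c_k,\overline{c_k}\}$ for each $k$ and $\mathrm{Re}(c_j'\overline{c_k'})=\mathrm{Re}(c_j\overline{c_k})$, which forces a coherent conjugation choice within each single state (modulo the unaddressed but repairable case $c_1=0$).

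The genuine gap is the globalization step, and the specific device you propose cannot work. If $\ket{\chi}$ has all real coefficients, then for \emph{any} $\ket{\psi}$ one has $|\langle\chi|\psi\rangle|=|\langle\chi|K\psi\rangle|$, where $K$ is coordinatewise conjugation; so the constraint $\tr(\omega(\psi)\,\omega(\chi))=|\langle\psi|\chi\rangle|^2$ is satisfied by \emph{both} branches and selects nothing -- a real-coefficient witness is conjugation-blind. Worse, a single overlap between two genuinely complex states is also insufficient: for $\psi_1=\tfrac{1}{\sqrt2}(e_1+ie_2)$ and $\psi_2=\tfrac{1}{\sqrt2}(e_1+ie_3)$ one has $|\langle\psi_1|\psi_2\rangle|=|\langle\psi_1|K\psi_2\rangle|=\tfrac12$, so the mixed assignment survives that test. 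The standard repair -- and the one the paper's Appendix~B implements via its $Y_k$ and $YY_k$ states -- is to adjoin to the fiducial family two-term superpositions with an $i$ coefficient (e.g.\ $\tfrac{1}{\sqrt2}(\ket{e_1}+i\ket{e_k})$) together with three-term superpositions with complex coefficients (e.g.\ $\tfrac{1}{\sqrt3}(i\ket{e_j}+\ket{e_{j+1}}+i\ket{e_{j+2}})$), whose overlap pattern with the complex fiducials is \emph{not} conjugation-invariant; this forces a single global choice of conjugation across all the complex fiducials, after which every remaining state's branch is determined by its overlaps with them (this is the tomographic-completeness step, Lemma~\ref{lemma_hamiltonian_tomography} in the paper's robust version). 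Without such an ingredient your argument stalls exactly where you suspected it would.
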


We wish to generalize this result in two ways. First, in our case the domain of $\omega$ only covers a finite set of rank-$1$ projectors, namely, $\{\psi_i\}_{i=1}^N$. Second, we are interested in situations where eq. (\ref{wigner_overlap}) only holds approximately. This leads us to define what from now on we call the \emph{Wigner property}.

\begin{defin}
\label{def_wigner}
We say that a set of pure states $\{\psi_i\}_{i=1}^N\subset\P$ satisfies the Wigner property if, for all $\delta'>0$ and for any set of (not necessarily pure) states $\{\bar{\psi}_i\}_{i=1}^N\subset B(\C^D)$, there exists $\delta>0$ such that the relation
\begin{equation}
|\tr(\psi_i\psi_j)-\tr(\bar{\psi}_i\bar{\psi}_j)|\leq \delta, \forall i,j
\label{overlaps}
\end{equation}
implies that there exist a(n) (anti-)unitary transformation $U$ with $\|\psi_i-U\bar{\psi}_i U^\dagger\|_1\leq \delta'$, for all $i$.
\end{defin}

As observed in \cite{Miklin_2021}, for $D=2$ all pure state ensembles satisfy the Wigner property. In that case, 
\begin{equation}
\psi_i=\frac{\id+\vec{m}^i\cdot\vec{\sigma}}{2},\bar{\psi}_i=\frac{\id+\bar{n}^i\cdot\vec{\sigma}}{2},
\end{equation}
for some vectors $\{\vec{m}^i,\vec{n}^i:\|\vec{m}^i\|,\|\vec{n}^i\|\leq 1\}_{i=1}^N\subset \R^3$. Here $\vec{\sigma}=(\sigma_x,\sigma_y,\sigma_z)$ denotes the three Pauli matrices. Setting $\delta= 0$ in eq. (\ref{overlaps}), we have that $\vec{m}^i\cdot\vec{m}^j= \vec{n}^i\cdot\vec{n}^j$, for all $i,j$. It follows that there exists an orthogonal transformation $O$ such that $\vec{n}^i= O\vec{m}^i$, for all $i$. Any orthogonal transformation in $\R^3$ can be expressed as either $R$ or $TR$, where $R$ represents a rotation; and $T$, a reflection. In the first case, there exists a unitary $U$ such that $\phi_i=U\psi_iU^\dagger$, for all $i$. In the second case, there exists an anti-unitary operation $V$ such that $\phi_i= V\psi_iV^\dagger$. In Section \ref{app:wigner_2} of the Appendix we provide a robust version of this argument, which proves that any ensemble of pure states in dimension $D=2$ satisfies the Wigner property with $\delta'=O(\delta^{1/4})$.

How about higher dimensions? Do pure state ensembles in, say, dimension $3$, satisfy the Wigner property, too? In general, no. In Section \ref{app:SIC} of the Appendix we present several examples of pairs of state ensembles in dimension $3$ that, despite having the same overlaps, are not related via unitary or anti-unitary transformations. Furthermore, we find that generic ensembles of three two-dimensional pure states, embedded in $B(\C^3)$, do not satisfy the Wigner property, even if we restrict it to the zero-error case ($\delta=0$).

How can we then certify state ensembles of dimensions greater than two? A possible way to solve this problem is to look for inspiration in the recent literature on Wigner's theorem. In this regard, the proof of Wigner's theorem in \cite{Geher2014} relies on the existence of a set of $5D-6$ pure states ${\cal T}\subset \P$ with the following property: for any ensemble of pure states $\Psi=\{\psi_i\}_{i=1}^N\subset \P$ such that $\bra{k}\psi_i\ket{k}>0,\forall k, i$, the overlaps between the states in $\Psi\cup{\cal T}$ uniquely identify this latter set, modulo a(n) (anti-)unitary transformation. In Section \ref{app:robust} of the Appendix we make this statement robust. Namely, we prove the following result. 

\begin{lemma}
Let $\{\psi_i\}_{i=1}^N\subset \P$ be such that $\tr(\psi_i\proj{k})\geq f>0$, for $k=1,...,D$. Then, the ensemble of pure states ${\cal T}\cup\{\psi_i\}_{i=1}^N$ satisfies the Wigner property with $\delta'=O(\sqrt{f}D^{7/4}\delta^{1/8})$.
\label{wigner_discrete}
\end{lemma}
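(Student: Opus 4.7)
The plan is to reduce the statement to two sub-problems: (a) reconstructing the reference set $\cal T$ itself (up to an (anti-)unitary) from its internal overlaps, and (b) using the reconstructed $\cal T$ as a coordinate frame in which to read off each $\bar{\psi}_i$ entrywise. Concretely, the structured set ${\cal T}$ used by Géher consists, up to a fixed choice of basis, of the $D$ computational basis projectors $\proj{k}$ together with projectors onto the $4(D-1)-2 = 4D-6$ vectors of the form $(\ket{1}+e^{i\theta}\ket{k})/\sqrt{2}$ and $(\ket{2}+e^{i\theta}\ket{k})/\sqrt{2}$ for $\theta\in\{0,\pi/2\}$ and appropriate $k$; its internal overlap matrix is rigid in the sense of Wigner.

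First, I would establish a quantitative Wigner property for ${\cal T}$ alone: given $\{\bar{\tau}_k\}\subset B(\C^D)$ satisfying $|\tr(\tau_k\tau_l)-\tr(\bar{\tau}_k\bar{\tau}_l)|\leq\delta$ for all $k,l$ (with $\tau_k\in{\cal T}$), there exists an (anti-)unitary $U$ such that $\|\tau_k-U\bar{\tau}_kU^\dagger\|_1\leq g_1(\delta)$ for every $k$. This is done by bootstrapping on the rank of the reconstructed frame: the $D$ basis projectors are identified from the zero overlaps $\tr(\tau_k\tau_l)=\delta_{kl}$, which after robustification via the Fuchs--van~de~Graaf inequality gives each $\bar{\tau}_k$ close to a rank-$1$ projector in a mutually near-orthogonal set. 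Diagonalizing the near-POVM so obtained rotates us into a genuine basis, fixing $U$. Then the superposition members of ${\cal T}$ pin down the relative phases between basis vectors, breaking the residual diagonal-unitary ambiguity; the remaining global complex conjugation freedom is exactly the (anti-)unitary alternative. Each step loses a square root, giving $g_1(\delta)=O(\mathrm{poly}(D)\,\delta^{1/2^c})$ for a small $c$.

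Second, absorbing $U$ into the $\bar{\psi}_i$, I may assume $\bar{\tau}_k\approx\tau_k$ in trace norm. Hölder then gives $\tr(\bar{\psi}_i\tau_k)=\tr(\bar{\psi}_i\bar{\tau}_k)+O(g_1(\delta))=\tr(\psi_i\tau_k)+O(\delta+g_1(\delta))$, and a small linear algebra lemma says that the quantities $\{\tr(\bar{\psi}_i\tau_k):\tau_k\in{\cal T}\}$ determine all matrix entries of $\bar{\psi}_i$ in the fixed basis: diagonals come from $\proj{k}$, real and imaginary parts of off-diagonals involving $1$ or $2$ come from the superposition projectors, and the remaining off-diagonals $(k,l)$ with $k,l\geq 3$ are read off from the overlaps with $(\ket{1}+e^{i\theta}\ket{k})$-type vectors together with those involving $\ket{2}$ via a triangulation-style identity (this is where the hypothesis $\tr(\psi_i\proj{k})\geq f$ enters: Cauchy--Schwarz gives $|(\psi_i)_{kl}|\leq \sqrt{(\psi_i)_{kk}(\psi_i)_{ll}}$, and dually ensures that the triangulation for $\psi_i$ is non-degenerate, so that the correct phase branch can be selected robustly). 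Entrywise closeness in a $D\times D$ matrix lifts to Frobenius and hence trace-norm closeness at the price of a factor $D$.

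The main obstacle is propagating errors coherently through Step 1: each square root lost in robustifying orthogonality, phase identification, and the conversion between overlap distance and trace distance must be tracked to obtain the stated $\delta^{1/8}$ exponent, and the polynomial prefactor $D^{7/4}$ requires care when converting between $\ell_\infty$-entrywise and trace norms across the $\Theta(D^2)$ coordinates. A secondary subtlety is distinguishing the unitary from the anti-unitary branch in the robust setting: with exact overlaps the two choices are mutually exclusive whenever the $\psi_i$ have genuinely complex off-diagonal entries, but under noise one must exhibit a single branch whose reconstruction error is $O(\delta^{1/8})$, which is precisely the role of the lower bound $f$ on diagonal elements, as it guarantees non-negligible magnitude for the phase-bearing off-diagonals.
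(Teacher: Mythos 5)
Your overall architecture (first rigidify the fiducial frame ${\cal T}$, then read off the $\bar{\psi}_i$ against it) matches the paper's, but Step~1 as written has a genuine gap: the fiducial set you propose --- the $D$ basis projectors together with only \emph{two-term} superpositions $(\ket{1}+e^{i\theta}\ket{k})/\sqrt{2}$, $(\ket{2}+e^{i\theta}\ket{k})/\sqrt{2}$ --- is not rigid, not even in the exact case $\delta=0$. Replace, say, $(\ket{1}+i\ket{3})/\sqrt{2}$ by $(\ket{1}-i\ket{3})/\sqrt{2}$ while keeping every other member fixed: a direct check shows that \emph{every} pairwise overlap modulus is unchanged (each such overlap involves at most one cross term $\bar{a}b$ whose modulus is insensitive to the sign of the imaginary unit), yet no label-preserving unitary or anti-unitary relates the two sets, since any $U$ fixing all $\proj{k}$ and all $(\ket{j}+\ket{k})/\sqrt{2}$ must either preserve or conjugate the relative phase in \emph{every} coordinate plane simultaneously. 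In other words, pairwise superpositions cannot detect independent ``partial conjugations'' of the individual $(j,k)$-planes, so your claimed quantitative Wigner property for ${\cal T}$ alone is false and the rest of the argument has nothing to bootstrap from. This is exactly why the paper's ${\cal T}$ (eq.~(\ref{def_T})) also contains the three-term superpositions $XX_k$ and $YY_k=(i\ket{k}+\ket{k+1}+i\ket{k+2})/\sqrt{3}$: the overlap of $YY_k$ with $Y_k$ differs from its overlap with $\overline{Y_k}$, which couples the conjugation choices of consecutive planes and (via Lemma~\ref{lemma_Y} and an induction along the chain) forces a single global branch, either all $Y_k$ or all $\overline{Y_k}$. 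Your remark that the hypothesis $\tr(\psi_i\proj{k})\geq f$ resolves the branch ambiguity cannot substitute for this: the Wigner property must hold for arbitrary $\{\bar{\psi}_i\}$, including real ones, so the conjugation consistency has to be enforced by ${\cal T}$ itself.

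A secondary, non-fatal difference: for your Step~2 you recover $\bar{\psi}_i$ entrywise and pay a factor of $D$ converting to trace norm, with a ``triangulation'' to reach entries $(k,l)$, $k,l\geq 3$. The paper instead never reconstructs entries; it builds, for each $\psi_j$, an operator $H=\sum_k\lambda_k Z_k+\sum_k\mu_k X_k+\nu_k Y_k$ with $\proj{\psi_j}-H\geq 0$ and $\tr(H\psi_j)=1$ (Lemma~\ref{lemma_hamiltonian_tomography}), obtained from the spectral gap of a tridiagonal operator, and uses it as an SDP dual certificate to lower-bound $\tr(\psi_j^{(3)}\psi_j)$. The hypothesis $\tr(\psi_j\proj{k})\geq f$ enters there to control the gap $\xi(D,f)$ and hence the coefficient sums, which is where the $\sqrt{f}$ and $D^{7/4}$ factors in the final bound come from; any replacement argument along your lines would need to supply the analogous quantitative control.
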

To arrive at robustness bounds that scale well with $D$, the proof makes use of Hausladen and Wootters' pretty good measurement \cite{square_root}, duality theory \cite{sdp} and exactly solvable tridiagonal matrices. The exponent $1/8$ on $\delta$ is admittedly very inconvenient. Presumably, one could achieve better robustness bounds by taking a tomographically complete fiducial set ${\cal T}'$ instead of ${\cal T}$. This is the approach used in \cite{huang2022foundations}, which follows more closely the original proof of Wigner's theorem. The tomographic approach has the disadvantage of requiring $O(D^2)$ new state preparations, instead of $O(D)$.

Now, suppose that we wish to self-test the ensemble of preparations $\psi=\{\psi_i\}_{i=1}^M\subset \P$. First, we transform the computational basis $\{\ket{k}\}_{k=1}^{D}$ with a unitary to ensure that all $M$ states in $\psi$ satisfy $\tr(\psi_i \proj{k})>0,\forall k$ (a random unitary will achieve this with probability $1$). Next, we consider the ensemble of preparations $\tilde{\psi}:=\psi\cup {\cal T}\cup {\cal R}$, where ${\cal R}$ are extra states (not to be self-tested) that we might need to add to ensure that the ensemble $\tilde{\psi}$ satisfies condition (\ref{max_mixed_cond}) for some positive numbers $(\alpha_i)_i$. 

Suppose that the corresponding dimension witness $W_{\tilde{\psi}}(P)$ is maximized by the set of (necessarily pure) states $\bar{\psi}\cup\bar{{\cal T}}\cup\bar{{\cal R}}$. Then condition (\ref{overlaps}) and Lemma \ref{wigner_discrete} guarantee that the ensembles of states $\psi\cup {\cal T}$, $\bar{\psi}\cup \bar{{\cal T}}$, are related by a unitary or an anti-unitary transformation. In particular, the witness $W_{\tilde{\psi}}$ self-tests the reference states $\{\psi_i\}_{i=1}^M$. This result can be made robust by applying Lemmas \ref{witness_norms} and \ref{wigner_discrete} in sequence. Thus, a value of $W_\psi$ that is $\epsilon$-short from maximum indicates that $U\bar{\psi}_iU^\dagger$ is $O(\epsilon^{1/16})$-away from $\psi_i$, for all $i$.

\vspace{10pt}

\noindent\emph{Self-testing of extremal POVMs}

\vspace{10pt}

We now turn to the problem of self-testing extremal POVMs. We will rely on the characterization of extreme quantum measurements by D'Arianno et al. \cite{POVMs_dariano}. Namely: a POVM $(M_b)_b$ is extremal iff, for any set of $D\times D$ Hermitian matrices $(H_b)_b$, with $\mbox{Supp}(H_b)\subset \mbox{Supp}(M_b),\forall b$, the condition $\sum_b H_b=0$ implies that $H_b=0,\forall b$.

Now, let $(M_b)_b$ be an extreme POVM, and, for each $b$, let $Z_b$ be a projector onto the kernel of $M_b$. Then, the only maximizer of the POVM optimization problem $\max_{\bar{M}}-\sum_b\tr(Z_b\bar{M}_b)$ is $M$. Indeed, first note that the maximum value of the problem is zero, which can, indeed, be achieved by the solution $\bar{M}=M$. Now, suppose that there exists another solution $M^\star$ of the optimization problem. Then, $\tr(M^\star_b Z_b)=0$, for all $a$, which implies that $\mbox{Supp}(M^\star_b)\subset \mbox{Kern}(Z_b)=\mbox{Supp}(M_b)$. Define then $H_b:=M^\star_b-M_b$. Then on one hand we have that $\mbox{Supp}(H_b)\subset \mbox{Supp}(M_b)$, for all $b$. On the other hand, $\sum_bH_b=\sum_bM^\star_b-\sum_bM_b=\id-\id=0$. By the extremality of $M$ it thus follows that $H_b=0$, for all $b$, and so $M^\star=M$.

Combined with our tool for self-testing states, this observation is enough to self-test $(M_b)_{b=1}^B$. Let $Z_b$ admit a spectral decomposition as $Z_b=\sum_{i=1}^{d_b}\proj{\psi^a_i}$ and define the pure state ensemble $\psi:=\{\psi_i^b:i,b\}\cup{\cal T}\cup{\cal R}$, where ${\cal R}$ is again a set of pure states such that $\psi$ satisfies eq. (\ref{max_mixed_cond}). We define a prepare and measure scenario with $X=\sum_b d_b+|{\cal T}|+D-1$, $Y=(X^2-X)/2+1$, where measurements $y=1,...,Y-1$ are dichotomic and measurement $y=Y$ has $B$ outcomes, and consider the witness
\begin{equation}
W_{\bar{M}}(P)=W_\psi(P)-\sum_b\sum_{i=1}^{d_b}P(b|x=(i,b),y=Y).
\label{witness_POVMs}
\end{equation}
The maximum value of this witness is clearly $1-\frac{1}{D}$, achievable by preparing the states $\psi$, conducting the optimal dichotomic measurements to distinguish every pair of states in $\psi$ and also $M$ as the $Y^{th}$ POVM. 
Now, suppose that the maximum value of the witness is achieved by preparing states $\bar{\psi}$ and conducting POVM $\bar{M}$ for measurement $y=Y$. Since the witness $W_\psi(P)$ is saturated, there exists a(n) (anti-)unitary transformation $U$ such that $U\bar{\psi} U^\dagger =\psi$. Let us then consider the POVM $M':=U\bar{M}U^\dagger$. Then this POVM satisfies $\sum_b\tr(Z_b M_b')=0$, and thus, by the previous reasoning, $M'_b=M_b$, for all $b$. 
We demonstrate the above construction for self-testing on a specific extremal non-projective POVM in Section~\ref{app:examplePOVM} of the Appendix.  On the other hand, in Section \ref{app:robust_POVM} of the Appendix, we present a robust version of this argument. Namely, we show that, if $W_{\bar{M}}(P)>1-\frac{1}{D}-\epsilon$, then there exists a unitary or anti-unitary $U$ such that $U\tilde{M}_bU^\dagger=\bar{M}_a+\delta$, with $\delta=O(\epsilon^{1/16})$ in the $D=2$ case or $\delta=O(\epsilon^{1/32})$, otherwise.

The above construction allows one to self-test several extremal POVMs at a time: indeed, it suffices to add more terms of the form $\sum_b\sum_{i=1}^{d_{b|y}}P(b|x=(i,b,y),y)$ to (\ref{witness_POVMs}) and update the state preparation witness to self-test the states $\psi_{i}^{b|y}$ required to express the projector onto the kernel of the desired POVM $(M_{b|y}:b)$. In sum, we can devise a prepare-and-measure experiment to self-test as many pure states and extremal POVMs as we wish.

\vspace{10pt}

\noindent \emph{Dealing with higher dimensional leakages}

\vspace{10pt}

Throughout this Letter, we were assuming that the Hilbert space where the preparations took place had dimension $D$. In a realistic experiment, though, it is more plausible that Alice's preparations $\{\bar{\psi}_i\}_i$ actually live in $B(\C^{E})$, with $E>D$, possibly with $E=\infty$. Unfortunately, it is impossible to devise a $D$-dimensional experiment that self-tests preparations and measurements under the assumption that both objects act on a Hilbert space of dimension smaller than or equal to $E>D$, even if $E=D+1$ \footnote{Consider, e.g., a $1$-shot $D$-dimensional behavior $P(b|x,y)$ generated by the reference states $\{\ket{\psi_i}\}\in \mbox{span}(\ket{1},...,\ket{D})$ and measurements $\{(M_{b|y})_b:y\}\subset B(\C^D)$, and suppose that $\tr(\psi_1\psi_2)\not=0$. Then, one can obtain the same experimental behavior in a $D+1$-dimensional state if we replace $\psi_1$ by $\proj{D+1}$, and $M_{b|y}$ by $M_{b|y}+P(b|1,y)\proj{D+1}$. In this new representation, though, the first two preparations are orthogonal and thus not connected by a unitary or an anti-unitary transformation.}. Our results are, however, robust under the assumption that there exists a $D$-dimensional projector $\Pi_D$ such that the prepared states satisfy $\tr(\bar{\psi}_i\Pi_D)\geq 1-\delta,\forall i$. Indeed, as long as $\delta$ is small enough, it is easy to see that a close-to-maximal value of our linear witnesses implies the existence of an (anti-)isometry that approximately transforms the actual states and measurements into the reference ones \footnote{Note that the $D$-dimensional ensemble $\tilde{\psi}_i:=\frac{\Pi_D\psi_i\Pi_D}{\tr(\psi_i\Pi_D)}, i=1,...,N$ satisfies $\|\bar{\psi}_i-\tilde{\psi}_i\|_1\leq O(\sqrt{\delta}),\forall i$. Thus, the $D$-dimensional distribution $\tilde{P}(b|x,y):=\tr(\tilde{\psi}_i\tilde{M}_{b|y})$, with $\tilde{M}_{b|y}:=\Pi_DM_{b|y}\Pi_D$, is $O(\sqrt{\delta})$-away from the experimental $1$-shot distribution $P(b|x,y)$. It follows that $W(P)> W^\star -\epsilon$ implies that $W(\tilde{P})> W^\star -\epsilon-O(\sqrt{\delta})$. Thus, for $\epsilon,\delta$ small enough, there exists a(n) (anti-)unitary $U$ such that $U\tilde{\psi}U^\dagger\approx \psi_i$ $U\tilde{M}_{b|y}U^\dagger\approx M_{b|y}$. Consequently, the (anti-)isometry $V:=U\Pi_D$ approximately maps $\{\bar{\psi}_i\}_i$, $\{\bar{M}_{b|y}\}_{b,y}$ to the reference states and measurements $\{\psi_i\}_i$, $\{M_{b|y}\}_{b,y}$.}.

\vspace{10pt}

\noindent \emph{Conclusion}

\vspace{10pt}

In this Letter, we have completely characterized the limits of robust self-testing in the prepare-and-measure scenario, under a promise on the Hilbert space dimension of the prepared states. Namely, we have proven that, for any ensemble of pure quantum states and any set of extremal POVMs, one can devise a linear witness whose maximal value implies that the underlying states and measurements are related to the reference ones by a unitary or an anti-unitary transformation. 

Regrettably, the analytic robustness bounds we found, while exhibiting a reasonably good dependence on the Hilbert space dimension, scale with the experimental error $\epsilon$ as $\epsilon^{1/32}$. As such, they are impractical for realistic implementations. For small dimensions and a small number of preparation states, it might be feasible, though, to obtain more accurate predictions by combining the swap technique of \cite{swap} with standard semidefinite programming relaxations of the set $Q_D$ \cite{finite_hierarchy}, \cite{finite_hierarchy_2}.

%One wonders which other causal networks allow self-testing. For once, our results imply that prepare-transform-measure networks \cite{dim_non_class} (where, between the preparation and the measurement stages, the system is subject to a quantum channel) can be also self-tested, as long as all the considered reference channels are unitary. It is also intriguing whether other three-partite networks, such as the bilocality scenario \cite{bilocality}, could be used to self-test quantum systems without the promise of a dimension bound.

%Finally, our work on extensions of Wigner's theorem leads to a natural question regarding the structure of pure states. Namely, given a set of quantum states $\psi=\{\psi_i\}_i\subset \P$, what is the minimum set $\phi\subset \P$ such that $\psi\cup\phi$ satisfies the Wigner property? For $D=2$, we showed that the answer is $\phi=\emptyset$; for general $D$, we found that the minimum cardinality of $\phi$ is upper bounded by $O(D)$. An algorithm to produce one such minimum set would be very welcome.

%\vspace{10pt}
%\noindent \emph{Acknowledgements}

\begin{acknowledgments}

\begin{wrapfigure}{r}[0cm]{2cm}
\begin{center}
\includegraphics[width=2cm]{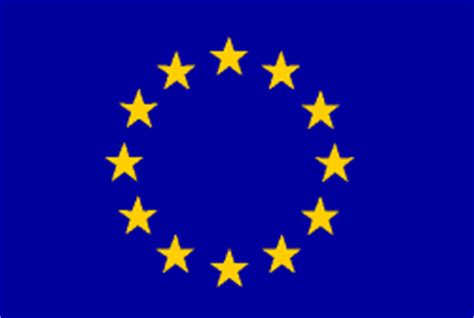}    
\end{center}
\end{wrapfigure} 

%\vspace{10pt}

This project was funded within the QuantERA II Programme that has received funding from the European Union's Horizon 2020 research and innovation programme under Grant Agreement No 101017733, and from the Austrian Science Fund (FWF), project I-6004. T.V. acknowledges the support of the EU (QuantERA eDICT) and the National Research, Development and Innovation Office NKFIH (No.~2019-2.1.7-ERA-NET-2020-00003). M.A. acknowledges support by the Spanish Ministry of Science and Innovation (MCIN) with funding from the European Union Next Generation EU (PRTRC17.I1) and the Department of Education of Castilla y Le\'on (JCyL) through the QCAYLE project, as well as MCIN projects PID2020-113406GB-I00 and RED2022-134301-T.

\end{acknowledgments}

\bibliography{biblios_wigner}

\onecolumngrid
\begin{appendix}
\section{Proof of Lemma \ref{witness_norms}}
\label{app_witness_norms}
To prove Lemma \ref{witness_norms} (in the main text), it is convenient to define, given a set of quantum states $\rho:=\{\rho_i\}_{i=1}^N\subset B(\C^D)$, the vector $\bar{c}(\rho)=(c_{ij}(\rho))_{i>j}$, with
\begin{equation}
c_{ij}(\rho):=\sqrt{\alpha_i\alpha_j}\|\rho_i-\rho_j\|_1.
\label{c_def}
\end{equation}

To prove the theorem, we will make use of the following proposition
\begin{prop}
\label{prop_bound}
\begin{align}
\|c(\rho)\|_2^2\leq 2\left(1-\tr\left\{\left(\sum_i\alpha_i\rho_i\right)^2\right\}\right)\leq 2\left(1-\frac{1}{D}\right),
\label{bound_purity}
\end{align}
where the first inequality is tight iff 
\begin{equation}
\|\rho_i-\rho_j\|_1=2\sqrt{1-\tr\{\rho_i\rho_j\}},\forall i>j.
\label{scalar_trace}
\end{equation}
In particular, it is tight if $\{\rho_i\}_i$ are pure states. The second inequality is saturated iff 
\begin{equation}
\sum_i\alpha_i\rho_i=\frac{\id}{D}.
\label{id_res}
\end{equation}
\end{prop}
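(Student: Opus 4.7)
The plan is to establish the two inequalities in turn, tracking when each is saturated.

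For the first inequality, I will bound each term $\|\rho_i - \rho_j\|_1^2$ via the Fuchs--van de Graaf chain $\|\rho_i - \rho_j\|_1^2 \leq 4(1 - F(\rho_i,\rho_j)^2) \leq 4(1 - \tr(\rho_i\rho_j))$, where the second step uses the elementary bound $F(\rho,\sigma)^2 \geq \tr(\rho\sigma)$. Substituting into the definition of $c(\rho)$ and folding $\sum_{i>j}$ into $\tfrac{1}{2}\sum_{i\neq j}$ gives
\[\|c(\rho)\|_2^2 \;\leq\; 2\sum_{i\neq j}\alpha_i\alpha_j\bigl(1 - \tr(\rho_i\rho_j)\bigr).\]
Setting $\sigma := \sum_i \alpha_i \rho_i$ and noting that tracing eq.~(\ref{max_mixed_cond}) gives $\sum_i \alpha_i = 1$, the identities $\sum_{i\neq j}\alpha_i\alpha_j = 1 - \sum_i \alpha_i^2$ and $\sum_{i,j}\alpha_i\alpha_j\tr(\rho_i\rho_j) = \tr(\sigma^2)$ collapse the right-hand side to $2\bigl(1 - \tr(\sigma^2)\bigr) - 2\sum_i \alpha_i^2 \bigl(1 - \tr(\rho_i^2)\bigr)$. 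Dropping the nonpositive second piece yields the first inequality of the proposition.

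Saturation would require both (a) the Fuchs--van de Graaf chain tight for every pair $i>j$ and (b) $\tr(\rho_i^2)=1$ for every $i$. The key observation is that (a) already entails (b) as soon as $N\geq 2$: if two states $\rho_i,\rho_j$ coincide as mixed operators, condition (\ref{scalar_trace}) reads $0 = 2\sqrt{1 - \tr(\rho_i^2)}$, forcing purity; if they are distinct, the sharp equality conditions of Fuchs--van de Graaf and of $F(\rho,\sigma)^2 \geq \tr(\rho\sigma)$ each force the two states in the pair to be pure. Applying this pair-by-pair gives purity of every $\rho_i$, so condition (\ref{scalar_trace}) alone is equivalent to saturation, matching the proposition. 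The second inequality is just Cauchy--Schwarz on the eigenvalues of the density matrix $\sigma$: $1 = (\sum_k\lambda_k)^2 \leq D\sum_k\lambda_k^2 = D\tr(\sigma^2)$, with equality iff all $\lambda_k = 1/D$, i.e., iff $\sigma = \id/D$, which is exactly eq.~(\ref{id_res}).

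The only non-routine step is the saturation analysis of the first inequality, where one must verify that (\ref{scalar_trace}) alone — without separately assuming purity of each $\rho_i$ — already forces saturation; this reduces to the standard sharp form of Fuchs--van de Graaf. Everything else is direct algebra plus a one-line Cauchy--Schwarz.
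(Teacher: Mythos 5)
Your derivation of the two inequalities is essentially the paper's: the same per-pair chain $\|\rho_i-\rho_j\|_1\le 2\sqrt{1-F(\rho_i,\rho_j)^2}\le 2\sqrt{1-\tr(\rho_i\rho_j)}$, the same completion of $\sum_{i>j}$ to the full double sum so that $\tr\{(\sum_i\alpha_i\rho_i)^2\}$ appears (you track the diagonal deficit $2\sum_i\alpha_i^2(1-\tr(\rho_i^2))$ explicitly, where the paper simply adds the nonnegative diagonal terms), and the same purity bound $\tr(\sigma^2)\ge 1/D$ for the second inequality, with equality iff $\sigma=\id/D$. That part is correct.

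The saturation analysis of the first inequality contains a genuine error. You claim that condition (\ref{scalar_trace}) alone forces each $\rho_i$ to be pure because the equality conditions of Fuchs--van de Graaf and of $F(\rho,\sigma)^2\ge\tr(\rho\sigma)$ "each force the two states in the pair to be pure." That is false: take $\rho_i=\frac12(\proj{0}+\proj{1})$ and $\rho_j=\frac12(\proj{2}+\proj{3})$ in $D=4$. Then $\|\rho_i-\rho_j\|_1=2=2\sqrt{1-\tr(\rho_i\rho_j)}$ (indeed $F=0$ and $\sqrt{\rho_i}\sqrt{\rho_j}=0$ has rank zero, so every inequality in the chain is an equality), yet neither state is pure. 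For this pair the diagonal deficit is strictly positive, and with $N=2$, $\alpha_1=\alpha_2=\frac12$ one gets $\|c(\rho)\|_2^2=1<\frac32=2(1-\tr(\sigma^2))$: condition (\ref{scalar_trace}) holds but the first inequality is not saturated. So your reduction of the "if" direction to the sharp form of Fuchs--van de Graaf does not work. In fairness, the "if" direction of the proposition as literally stated fails on the same example, and the paper's own proof does not attempt it either --- it only records that the per-pair bound is tight for pure states, which is all that is used downstream (for the reference ensemble $\psi$). Your instinct that purity is the extra ingredient needed for saturation was correct; the accurate equality condition is (\ref{scalar_trace}) \emph{together with} $\tr(\rho_i^2)=1$ for all $i$, and the latter cannot be derived from the former.
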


\begin{proof}
For any pair of states $\sigma,\omega$,
\begin{equation}
\|\sigma-\omega\|_1\leq 2\sqrt{1-F(\sigma,\omega)^2},
\end{equation}
where $F(\sigma,\omega)=\tr(|\sqrt{\sigma}\sqrt{\omega}|)$ denotes Uhlmann's fidelity \cite{Jozsa}. In turn, 
\begin{equation}
\tr(|\sqrt{\sigma}\sqrt{\omega}|)\geq \sqrt{\tr(\sigma\omega)},    
\end{equation}
\noindent and thus 
\begin{equation}
\|\sigma-\omega\|_1\leq 2\sqrt{1-\tr(\sigma\omega)}.
\label{bound_trace}
\end{equation}
Note that the upper bound is tight if $\sigma,\omega$ are pure states.

Substituting the bound (\ref{bound_trace}) in eq. (\ref{c_def}), we have that

\begin{align}
\|c(\rho)\|_2^2&\leq\sum_{i>j}\alpha_i\alpha_j4(1-\tr\{\rho_i\rho_j\})\nonumber\\
&\leq 2\sum_{i,j}\alpha_i\alpha_j(1-\tr\{\rho_i\rho_j\})\nonumber\\
&=2\left(1-\tr\left\{\left(\sum_i\alpha_i\rho_i \right)^2 \right\}\right)\nonumber\\
&\leq 2\left(1-\frac{1}{D}\right),
\label{norm2overlap}
\end{align}
where, in order to arrive at the last bound, we invoked the fact that $\sigma:=\sum_i\alpha_i\rho_i$ is a normalized $D$-dimensional quantum state, and, as such, its purity $\tr(\sigma^2)$ cannot be below $\frac{1}{D}$ \footnote{That this is indeed the minimum purity achievable in dimension $D$ can be verified by solving the problem $\min \sum_{i=1}^D\lambda_i^4$ subject to the constraint $\sum_{i=1}^D\lambda_i^2=1$, where $\{\lambda^2_i\}_i$ represent the eigenvalues of $\sigma$. This optimization problem can be solved, e.g., through Lagrange multipliers.}.
\end{proof}

We are now ready to prove Lemma \ref{witness_norms} in the main text. By assumption, $\psi$ are pure states satisfying $\sum_i\alpha_i\psi_i=\frac{\id}{D}$. By the above Proposition, it thus follows that
\begin{align}
\|\bar{c}(\psi)\|^2_2&=2\left(1-\tr\left\{\left(\sum_i\alpha_i\psi_i\right)^2\right\}\right)\nonumber\\
&=2\left(1-\frac{1}{D}\right).
\end{align}
Now, let $\rho=\{\rho_i\}_{i=1}^N\subset B(\C^D)$ be a set of quantum states; let $\{M^{(i,j)}:i>j, i,j=1,...,N\}$ be a collection of POVMs; and define $P(b|x,y):=\tr(\rho_xM^y_b)$. We have that

\begin{align}
W_\psi(P)=&\sum_{i>j}c_{i,j}(\psi)\sqrt{\alpha_i\alpha_j}\tr\{(\rho_i-\rho_j)M^{(i,j)}_1\}\nonumber\\
&\leq\sum_{i>j}c_{i,j}(\psi)\sqrt{\alpha_i\alpha_j}\|\rho_i-\rho_j\|_1\nonumber\\
&=\frac12\bar{c}(\psi)\cdot\bar{c}(\rho)\nonumber\\
&\leq \frac12 \|\bar{c}(\psi)\|_2\|\bar{c}(\rho)\|_2\nonumber\\
&\leq 1-\frac{1}{D}.
\label{intermezzo}
\end{align}
The first inequality comes from the well-known state discrimination inequality, namely that $\tr\{(\rho_i-\rho_j)M^{(i,j)}\} \le \frac12 \|\rho_i-\rho_j\|_1$ for any states $\rho_i,\rho_j$ and POVM elements $M^{(i,j)}$. The next inequality is just the Cauchy-Schwarz inequality, while the last one is a straightforward application of Proposition \ref{prop_bound}. This proves the first part of Lemma \ref{witness_norms} in the main text. With measurements $\{M^{(i,j)}:i>j\}$ that are optimal for state discrimination, the set of preparation states $\{\psi_i\}_i$ trivially saturates the bound.

Let us now assume that $W_\psi(P)=1-\frac{1}{D}-\epsilon$, for some $\epsilon \in [0,1-\frac1D]$. By eq. (\ref{intermezzo}), we have that 
\begin{equation}
\frac12\bar{c}(\psi)\cdot\bar{c}(\rho)\geq 1-\frac{1}{D}-\epsilon=\frac12\|\bar{c}(\psi)\|^2_2-\epsilon.
\end{equation}
This implies that
\begin{align}
\|\bar{c}(\psi)-\bar{c}(\rho)\|_2^2&=\|\bar{c}(\psi)\|_2^2+\|\bar{c}(\rho)\|_2^2-2\bar{c}(\psi)\cdot\bar{c}(\rho)\nonumber\\
&\leq 2\|\bar{c}(\psi)\|_2^2-2\bar{c}(\psi)\cdot\bar{c}(\rho))\leq 4\epsilon,
\label{vect_diff}
\end{align}
where, in the second line, we invoked the relation $\|\bar{c}(\psi)\|_2\geq \|\bar{c}(\rho)\|_2$, which follows from eq. (\ref{bound_purity}).

%In particular, for $i\not=j$ we have that

%\begin{equation}
%\sqrt{\alpha_i\alpha_j}\left|\|\rho_i-\rho_j\|_1-\|\psi_i-\psi_j\|\right|\leq \sqrt{2\epsilon}.
%\end{equation}

From eq. (\ref{intermezzo}) it also follows that
\begin{equation}
\|\bar{c}(\rho)\|_2\geq \|\bar{c}(\psi)\|_2-\frac{2\epsilon}{\|\bar{c}(\psi)\|_2}.
\end{equation}
Hence, for $2\epsilon\le\|\bar{c}(\psi)\|_2^2$, we have that
\begin{align}
\|\bar{c}(\rho)\|^2_2&\geq \left(\|\bar{c}(\psi)\|_2-\frac{2\epsilon}{\|\bar{c}(\psi)\|_2}\right)^2\nonumber\\
&=\|\bar{c}(\psi)\|_2^2+\frac{4\epsilon^2}{\|\bar{c}(\psi)\|_2^2}-4\epsilon\geq \|\bar{c}(\psi)\|_2^2-4\epsilon.
\label{bound_norm}
\end{align}
Eq. (\ref{norm2overlap}) states that
\begin{equation}
\|\bar{c}(\rho)\|^2_2 \leq \|\bar{d}(\rho)\|^2_2 \leq \|\bar{c}(\psi)\|_2^2,
\label{domin_norm}
\end{equation}
where the vector $\bar{d}(\rho)\equiv (d_{ij}(\rho))_{i>j}$ is defined through
\begin{equation}
\bar{d}_{ij}(\rho):=2\sqrt{\alpha_i\alpha_j}\sqrt{1- \tr\{\rho_i\rho_j\}}.    
\end{equation}
From the relation $\|\bar{c}(\psi)\|_2^2\geq \|\bar{d}(\rho)\|^2$  and eq. (\ref{bound_norm}) it thus follows that
\begin{equation}
\sum_{i>j}\alpha_i\alpha_j\left[4\left(1-\tr\{\rho_i\rho_j\}\right)-\|\rho_i-\rho_j\|^2_1\right]=\|\bar{d}(\rho)\|^2_2-\|\bar{c}(\rho)\|^2_2\leq 4\epsilon.
\label{part1}
\end{equation}
The left hand side of this equation is lower-bounded by 
\begin{equation}
\sum_{i>j}\alpha_i\alpha_j\left[2\sqrt{1-\tr\{\rho_i\rho_j\}}-\|\rho_i-\rho_j\|_1\right]^2=\|\bar{d}(\rho)-\bar{c}(\rho)\|_2^2.
\label{simplif}
\end{equation}
This follows from applying the relation $z^2-t^2\geq (z-t)^2$, valid for all $z\geq t\geq 0$, to $z=2\sqrt{1-\tr(\rho_i\rho_j)}$, $t=\|\rho_i-\rho_j\|_1$. 

Combining $\|\bar{d}(\rho)-\bar{c}(\rho)\|_2^2\leq 4\epsilon$ with eq. (\ref{vect_diff}), we have that
\begin{align}
\|\bar{d}(\rho)-\bar{d}(\psi)\|_2 &\leq \|\bar{d}(\rho)-\bar{c}(\rho)\|_2 + \|\bar{d}(\psi)-\bar{c}(\rho)\|_2 \\
&\leq 4\sqrt\epsilon,
\label{cross_prods_lemma}
\end{align}
where we made use of the triangle inequality and the identity $\bar{d}(\psi)=\bar{c}(\psi)$, valid by virtue of $\|\psi_i-\psi_j\|_1=2\sqrt{1-\tr\{\psi_i\psi_j\}}$. It follows that, if the experimental value of the witness is $\epsilon$-away from its maximal value, then for every $i\not=j$, 
\begin{equation}
\left|\sqrt{1-\tr\{\rho_i\rho_j\}}-\sqrt{1-\tr\{\psi_i\psi_j\}}\right|\leq \sqrt{\frac{2}{\alpha_i \alpha_j}} \sqrt\epsilon.
\end{equation}
Multiplying this expression by $\sqrt{1-\tr\{\rho_i\rho_j\}}+\sqrt{1-\tr\{\psi_i\psi_j\}}$, which is upperbounded by 2, we arrive at
\begin{equation}
\left|\tr\{\rho_i\rho_j\}-\tr\{\psi_i\psi_j\}\right|\leq \sqrt{\frac{8}{\alpha_i \alpha_j}} \sqrt\epsilon.
\end{equation}

It remains to prove that $\{\rho_i\}_i$ are approximately pure. To do so, we consider the inequality
\begin{equation}
\frac{1}{D}\leq \tr\left\{\left(\sum_i\alpha_i\rho_i\right)^2\right\},
\end{equation}
which is a consequence of Proposition \ref{prop_bound}. Expanding the right-hand side, we find that
\begin{align}
\frac{1}{D}&\leq \sum_i\alpha_i^2\tr\{\rho_i^2\}+2\sum_{i>j}\alpha_i\alpha_j\tr\{\rho_i\rho_j\}\nonumber\\
&=\sum_i\alpha_i^2\tr\{\rho_i^2\}-2\sum_{i>j}\alpha_i\alpha_j(1-\tr\{\rho_i\rho_j\})+\sum_{i,j}\alpha_i\alpha_j-\sum_i\alpha_i^2\nonumber\\
&= -\sum_i\alpha_i^2(1-\tr\{\rho_i^2\})-\frac{1}{2}\|\bar{d}(\rho)\|_2^2+1\nonumber\\
&\leq -\sum_i\alpha_i^2(1-\tr\{\rho_i^2\})-\frac{1}{2}\|\bar{c}(\rho)\|^2_2+1\nonumber\\
&\le -\sum_i\alpha_i^2(1-\tr\{\rho_i^2\})+2\epsilon-\frac{1}{2}\|\bar{c}(\psi)\|^2+1\nonumber\\
&=-\sum_i\alpha_i^2(1-\tr\{\rho_i^2\})+2\epsilon +\frac{1}{D}.
\label{casi_fin}
\end{align}
To arrive at the fourth line, we invoked eq. (\ref{domin_norm}); the fifth line follows from eq. (\ref{bound_norm}). Rearranging, we conclude that
\begin{equation}
\sum_i\alpha_i^2(1-\tr\{\rho_i^2\})\leq 2\epsilon.
\end{equation}
As $\epsilon$ tends to zero, the purity $\tr\{\rho_i^2\}$ of each preparation state $\rho_i$ is thus $\epsilon$-away from $1$, as the second line of eq. (\ref{basic_relation}) in the main text states.

Putting all together, we arrive at the following robustness result.
\begin{prop}
\label{prop_robust_overlaps}
Let $0 \le \epsilon \le 1-\frac{1}{D}$, and suppose that $W_\psi(P)=1-\frac{1}{D}-\epsilon$. Then, the prepared states $\{\rho_i\}_{i=1}^N$ satisfy:
\begin{gather}
\left|\tr\{\rho_i\rho_j\}-\tr\{\psi_i\psi_j\}\right|\leq \sqrt{\frac{8}{\alpha_i \alpha_j}} \sqrt\epsilon \quad\forall i\neq j \\
1-\tr\{\rho_i^2\}\leq \frac{2}{\sum_{i'}\alpha_{i'}^2}\epsilon \quad \forall i
\end{gather}
\end{prop}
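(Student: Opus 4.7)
The plan is to trace backwards through the Cauchy--Schwarz step that established the upper bound $W_\psi(P) \leq 1-1/D$ and extract quantitative consequences of near-saturation. Writing the witness as $W_\psi(P) \leq \tfrac12 \bar{c}(\psi)\cdot \bar{c}(\rho)$ via the state-discrimination bound $\tr\{(\rho_i-\rho_j) M^{(i,j)}_1\} \leq \tfrac12 \|\rho_i-\rho_j\|_1$, and recalling from Proposition \ref{prop_bound} that $\|\bar{c}(\psi)\|_2^2 = 2(1-1/D)$ (since $\psi$ is pure and $\sum_i \alpha_i\psi_i = \id/D$) while $\|\bar{c}(\rho)\|_2 \leq \|\bar{c}(\psi)\|_2$, the hypothesis $W_\psi(P) = 1-1/D-\epsilon$ gives simultaneously $\bar{c}(\psi)\cdot \bar{c}(\rho) \geq \|\bar{c}(\psi)\|_2^2 - 2\epsilon$ and, after rearranging a quadratic, $\|\bar{c}(\rho)\|_2^2 \geq \|\bar{c}(\psi)\|_2^2 - 4\epsilon$. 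Expanding $\|\bar{c}(\rho)-\bar{c}(\psi)\|_2^2$ and inserting these bounds yields $\|\bar{c}(\rho) - \bar{c}(\psi)\|_2^2 \leq 4\epsilon$.

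The second step is to transfer control from the trace-norm vector $\bar{c}(\rho)$ to the overlap vector $\bar{d}(\rho)$ with entries $2\sqrt{\alpha_i\alpha_j(1-\tr\{\rho_i\rho_j\})}$. The Fuchs--van de Graaf--type inequality $\|\rho_i-\rho_j\|_1 \leq 2\sqrt{1-\tr\{\rho_i\rho_j\}}$ gives componentwise $\bar{c}_{ij}(\rho) \leq \bar{d}_{ij}(\rho)$, while the same purity bound $\tr\{(\sum_i\alpha_i\rho_i)^2\} \geq 1/D$ that underpins Proposition \ref{prop_bound} forces $\|\bar{d}(\rho)\|_2^2 \leq 2(1-1/D) = \|\bar{c}(\psi)\|_2^2$. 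Hence $\|\bar{d}(\rho)\|_2^2 - \|\bar{c}(\rho)\|_2^2 \leq 4\epsilon$, and the elementary identity $z^2-t^2 \geq (z-t)^2$ for $z\geq t \geq 0$, applied coordinate by coordinate, upgrades this to $\|\bar{d}(\rho)-\bar{c}(\rho)\|_2^2 \leq 4\epsilon$. Since the reference states are pure, $\bar{d}(\psi) = \bar{c}(\psi)$, so the triangle inequality delivers $\|\bar{d}(\rho) - \bar{d}(\psi)\|_2 \leq 4\sqrt\epsilon$.

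Reading off one coordinate produces $|\sqrt{1-\tr\{\rho_i\rho_j\}} - \sqrt{1-\tr\{\psi_i\psi_j\}}| \leq \sqrt{2/(\alpha_i\alpha_j)}\sqrt\epsilon$, and multiplying by the (at most $2$) conjugate sum yields the advertised overlap bound. For the purity bound, I would expand $\tr\{(\sum_i\alpha_i\rho_i)^2\} \geq 1/D$ into diagonal and off-diagonal contributions and rewrite the off-diagonal part using $\|\bar{d}(\rho)\|_2^2$, which by the preceding step is at least $\|\bar{c}(\psi)\|_2^2 - 4\epsilon = 2(1-1/D) - 4\epsilon$. This isolates $\sum_i \alpha_i^2(1-\tr\{\rho_i^2\}) \leq 2\epsilon$ and hence the per-state purity bound.

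The step I expect to be least obvious is the passage from $\|\bar{c}(\rho)-\bar{c}(\psi)\|_2^2 \leq 4\epsilon$ to quantitative control of the overlaps, because Fuchs--van de Graaf is only an inequality and could in principle hide large discrepancies between trace distance and $\sqrt{1-\tr\{\rho_i\rho_j\}}$ when the $\rho_i$ are mixed. The key trick is to introduce the auxiliary vector $\bar{d}(\rho)$, bound $\|\bar{d}(\rho)\|_2^2$ via the same purity argument that bounds $\|\bar{c}(\rho)\|_2^2$, and close the gap using $z^2-t^2 \geq (z-t)^2$. Everything else reduces to Cauchy--Schwarz, the triangle inequality, and expansion of $\tr\{(\sum_i \alpha_i \rho_i)^2\}$.
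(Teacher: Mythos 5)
Your proposal is correct and follows essentially the same route as the paper's own proof: the reverse Cauchy--Schwarz analysis giving $\|\bar{c}(\rho)-\bar{c}(\psi)\|_2^2\leq 4\epsilon$ and $\|\bar{c}(\rho)\|_2^2\geq\|\bar{c}(\psi)\|_2^2-4\epsilon$, the auxiliary vector $\bar{d}(\rho)$ squeezed between $\|\bar{c}(\rho)\|_2$ and $\|\bar{c}(\psi)\|_2$, the componentwise $z^2-t^2\geq(z-t)^2$ upgrade, the triangle inequality via $\bar{d}(\psi)=\bar{c}(\psi)$, and the expansion of $\tr\{(\sum_i\alpha_i\rho_i)^2\}\geq 1/D$ for the purity bound. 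No substantive differences.
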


\section{Wigner property for quantum state ensembles in dimension $D>2$}
\label{app:robust}
The previous section showed us how to robustly certify that the overlaps between the prepared states $(\rho_i)_i$ are similar to those of the reference pure states $(\psi_i)_i$. In this section we will show that reference states of the form $\Phi:=\{\psi_i\}_{i=1}^M\cup{\cal T}$, with ${\cal T}$ as defined below in eq. (\ref{def_T}) and $\tr(\psi_i Z_k)>0$, for $i=1,...,M$, $k=1,...,D$, satisfy the Wigner property robustly. That is, any ensemble of states $\{\rho_i\}_i$, pure or mixed, having similar overlaps is close in trace norm to $\psi$, modulo a unitary or anti-unitary transformation.

\subsection*{Notation to be used throughout the robustness proof}
The reference pure states we wish to self-test will be denoted as $\{\psi_k\}_k$; the auxiliary states constituting ${\cal T}$ are the following:
\begin{align}
&\ket{Z_k}:=\ket{k}, k=1,...,D,\nonumber\\
&\ket{X_k}:=\frac{1}{\sqrt{2}}(\ket{k}+\ket{k+1}), k=1,...,D-1,\nonumber\\
&\ket{Y_k}=\frac{1}{\sqrt{2}}(\ket{k}+i\ket{k+1}, , k=1,...,D-1,\nonumber\\
&\ket{XX_k}:=\frac{1}{\sqrt{3}}(\ket{k}+\ket{k+1}+\ket{k+2}),k=1,...,D-2,\nonumber\\
&\ket{YY_k}=\frac{1}{\sqrt{3}}(i\ket{k}+\ket{k+1}+i\ket{k+2}), k=1,...,D-2.
\label{def_T}
\end{align}
Given a symbol $A_k$, corresponding to an element of either $\{\psi_i\}_i$ or ${\cal T}$, we will use the symbol $A^{(0)}_k$ to refer to the actual physical state prepared in the experiment. E.g.: the physical preparation $Z_k^{(0)}$ corresponds to the reference state $Z_k$; we wish to self-test state $\psi_k$, but we actually prepared state $\psi_k^{(0)}$. In the course of the proof, we will use the super-index in brackets to keep track of the (anti-)unitary transformations that we will subject the prepared states to. Namely, $A_k^{(j)}=U_jA_k^{(j-1)}U^\dagger_j$, where $U_j$ is a unitary or anti-unitary operator. We will denote by $\Phi$ the union of the set of target states $\{\psi_i\}_{i=1}^M$ and ${\cal T}$. Correspondingly, $\Phi^{(0)}$ will denote the set of all the corresponding physical preparations, and $\Phi^{(j)}=U_j\Phi^{(j-1)}U_j^\dagger$.

\subsection*{Main result}
Our main result is the following lemma.
\begin{lemma}
\label{lemma_robust_state_self_testing}
Let $\Phi:=\{\psi_i\}_{i=1}^M\cup{\cal T}\subset \P$, with $\tr(\psi_j Z_k)>\frac{1}{f_j}$, for $j=1,...,M$, $k=1,...,D$, and let $\Phi^{(0)}=\{\psi^{(0)}_i\}_{i=1}^M\cup{\cal T}^{(0)}$ be quantum states (mixed or pure) such that
\begin{align}
&|\tr(\alpha^{(0)}\beta^{(0)})-\tr(\alpha\beta)|\leq \delta_o, \forall\alpha,\beta\in\Phi,\alpha\not=\beta.\label{real_overlaps}\\
&1-\tr((\alpha^{(0)})^2)\leq \delta_p, \forall \alpha\in\Phi,
\label{purity_bound}
\end{align}
Define the numbers $\delta_Z',\delta_Z,\delta_X',\delta_X,\delta_{XX},\delta_Y,\delta_{YY},\delta^\psi_{j}$ through the recurrent equations: 
\begin{align}
&\delta'_Z=2\delta_p+\delta_o,\nonumber\\
&\delta_Z=2\delta_p+2\sqrt{1-\left(1-\frac{\delta'_Z}{8}(D-1)^2\right)^2},\nonumber\\
&\delta'_X=\frac{\delta_Z}{2}+ \delta_p+\delta_o,\nonumber\\
&\delta_X=2\sqrt{2}\sqrt{\delta_X'}+2\delta_p,\nonumber\\
&\delta_Y=2\delta_p+2\sqrt{1-f\left(\left[\frac{\delta_X+\delta_Z}{2}+2\delta_p+2\delta_o\right]^2+\left[\frac{\delta_Z}{2}+\delta_p+\delta_o\right]^2,1-2\left[\frac{\delta_Z}{2}+\delta_p+\delta_o\right]\right)},\nonumber\\
&\delta_{XX}=2\sqrt{\frac{6}{5}\delta_X+\frac{9}{10}\delta_Z+\frac{21}{5}\delta_o},\nonumber\\
&\delta_{YY}=2\sqrt{\frac{3}{2}\delta_Z+2\delta_Y+7\delta_0},\nonumber\\
&\frac{(\delta^\psi_j)^2}{4}:=\left(D+\frac{(D-1)(1+\sqrt{2})f_j}{2\xi(D,f_j)}\right)\delta_Z+\frac{f_j(D-1)}{2\xi(D,f_j)}\delta_X+\frac{f_j(D-1)}{2\xi(D,f_j)}\delta_Y\nonumber\\
&+\left(D+\frac{(D-1)(1+\sqrt{2})f_j}{\xi(D,f_j)}+2\frac{f_j(D-1)}{\xi(D,f_j)}\right)\delta_o,
\end{align}
where the quantity $\xi(D,F)$ and the function $f$ are defined as
\begin{equation}
\xi(D,F):=\left(\frac{1}{1-\frac{D-1}{F}}\right)\left(1-\cos\left(\frac{\pi}{D}\right)\right),
\end{equation}
\begin{equation}
f(\mu,\nu):= \frac{1}{2}\left(\nu + \sqrt{\nu^2-4\mu}\right).
\label{def_f}
\end{equation}
Further suppose that the quantities above satisfy the relations:
\begin{align}
&\delta'_Z<\frac{1}{(D-1)^2},\nonumber\\
&\left[1-2\left(\frac{\delta_Z}{2}+\delta_p+\delta_o
\right)\right]^2-4\left(\left[\frac{\delta_X+\delta_Z}{2}+2\delta_p+2\delta_o\right]^2+\left[\frac{\delta_Z}{2}+\delta_p+\delta_o\right]^2\right)\geq 0,\nonumber\\
&-0.2938 + 0.7970\delta_Y+2.2555\delta_Z+0.5\delta_{XX}+7.1049\delta_o<0.
\end{align}
Then, there exists a unitary or anti-unitary transformation ${\cal U}$ such that
\begin{align}
&\|{\cal U}X^{(0)}_k{\cal U}^\dagger-X_k\|_1\leq \delta_X,\|{\cal U}Y^{(0)}_k{\cal U}^\dagger-Y_k\|_1\leq \delta_Y,k=1,...,D-1,\nonumber\\
&\|{\cal U}Z^{(0)}_k{\cal U}^\dagger-Z_k\|_1\leq \delta_Z,k=1,...,D,\nonumber\\
&\|{\cal U}XX^{(0)}_k{\cal U}^\dagger-XX_k\|_1\leq \delta_{XX},\|{\cal U}YY^{(0)}_k{\cal U}^\dagger-YY_k\|_1\leq \delta_{YY},k=1,...,D-2,\nonumber\\
&\|{\cal U}\psi^{(0)}_j{\cal U}^\dagger-\psi_j\|_1\leq \delta^\psi_j, j=1,...,M.
\end{align}
\end{lemma}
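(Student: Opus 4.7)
The plan is to proceed in stages, at each stage absorbing the physical preparations into the reference states by composing a suitable (anti-)unitary $U_j$; the final ${\cal U}=U_n\cdots U_1$ will be the transformation claimed. Stage 1 fixes the $Z$-frame: the purity bound $1-\tr((Z_k^{(0)})^2)\le\delta_p$ says each $Z_k^{(0)}$ lies close to a pure state $\tilde Z_k$, and the overlap bound then forces the $\tilde Z_k$ to be nearly mutually orthogonal with pairwise inner product at most $\delta'_Z=2\delta_p+\delta_o$. A Gram--Schmidt/perturbation argument, with quadratic accumulation of pairwise overlaps that explains the $(D-1)^2/8$ factor in the definition of $\delta_Z$, produces an orthonormal basis close to the $\tilde Z_k$, and a unitary $U_1$ rotates it to the computational basis $\{|k\rangle\}$. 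This is why $\delta'_Z<1/(D-1)^2$ must be assumed: otherwise the Gram matrix of the $\tilde Z_k$ may fail to be invertible.

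Stage 2 fixes the $X$-frame. After $U_1$, each $X_k^{(0)}$ has overlap $\simeq 1/2$ with both $Z_k$ and $Z_{k+1}$ and nearly vanishing overlap with every other $Z_{k'}$, so (using the $Z$-frame as a measurement basis) it projects into $\mathrm{span}(|k\rangle,|k+1\rangle)$ with balanced moduli up to the error $\delta'_X$. The remaining freedom is a relative phase $e^{i\theta_k}$, which a diagonal unitary $U_2$ removes, and converting the fidelity error to trace distance (using $\|\sigma-\omega\|_1\le 2\sqrt{1-F^2}$) introduces the square root giving $\delta_X=2\sqrt{2\delta'_X}+2\delta_p$. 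Stage 3, the delicate one, uses the $Y_k$ states to decide between a unitary and an anti-unitary. With the $Z_k$ and $X_k$ already pinned, the overlap constraints on $Y_k^{(0)}$ yield a quadratic for its amplitudes in the $\{|k\rangle,|k+1\rangle\}$ subspace, whose two roots are precisely $f(\mu,\nu)=\frac12(\nu\pm\sqrt{\nu^2-4\mu})$. The third hypothesis — nonnegativity of the discriminant — is exactly what ensures a valid real amplitude recovery, and the sign of the imaginary part of the relative phase (which is $\pm i$ in the reference) selects unitary versus anti-unitary; we apply $U_3=$ identity or complex conjugation accordingly.

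Stage 4 handles the triples $XX_k, YY_k$. Once $Z_k, X_k, Y_k$ are close to the reference, the $XX_k, YY_k$ amplitudes on $\{|k\rangle,|k+1\rangle,|k+2\rangle\}$ are determined by their overlaps with the preceding states; linearly combining the relevant overlap errors and converting via the same fidelity-to-trace-distance bound yields $\delta_{XX}, \delta_{YY}$. Crucially, these triple states force global (rather than local, per-$k$) consistency of the anti-unitary choice: any attempt to independently conjugate in one two-dimensional block is incompatible with the overlap constraints imposed by $XX_k$ and $YY_k$ spanning three consecutive blocks — this is why the Géher construction uses $5D-6$ states. Finally, for each target $\psi_j$: the lower bound $\tr(\psi_j Z_k)>1/f_j$ means its amplitude in every $|k\rangle$ is nonvanishing, so knowing all overlaps $\tr(\psi_j X_k)$ and $\tr(\psi_j Y_k)$ determines its relative phases through a chain of $D-1$ phase differences; the accumulated error in inverting this chain is bounded by the spectral gap $1-\cos(\pi/D)$ of the associated tridiagonal recovery matrix, which produces the $\xi(D,f_j)^{-1}$ prefactor, and summing the contributions from $\delta_Z,\delta_X,\delta_Y,\delta_o$ gives $\delta^\psi_j$.

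The principal obstacles are Stage 3 and the tridiagonal analysis in Stage 4. Stage 3 requires solving a quadratic arising from two simultaneous overlap equations and tracking its conditioning — the explicit $f(\mu,\nu)$ and the discriminant assumption in the hypotheses are the fingerprint of this step. The tridiagonal bound requires computing (or invoking) the exact eigenvalues of the $D\times D$ bidiagonal matrix encoding the phase-accumulation map, which yields the clean $1-\cos(\pi/D)$ factor; the paper's allusion to ``exactly solvable tridiagonal matrices'' and Hausladen--Wootters' pretty good measurement indicates that this estimate is essential for obtaining a robustness bound that scales polynomially in $D$ rather than exponentially. All other stages are essentially triangle inequalities, fidelity-to-trace-distance conversions, and propagation of the $\delta_o, \delta_p$ errors through algebraic relations involving the known reference overlaps.
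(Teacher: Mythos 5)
Your roadmap matches the paper's proof stage for stage: purify and orthonormalize the $Z$-frame, fix the $X$-phases with a diagonal unitary, pin each $Y_k$ to $Y_k$ or $\overline{Y_k}$ via the quadratic whose roots are $f(\mu,\nu)$, use the three-site states $XX_k,YY_k$ to force a globally consistent conjugation choice, and finish with the tridiagonal-gap argument that yields the $\xi(D,f_j)^{-1}$ prefactor. You also correctly identify which hypothesis guards which step. Two concrete points where your named technique would not deliver the stated bounds, however. First, in Stage 1 the paper does not use Gram--Schmidt: it uses Hausladen--Wootters' pretty good measurement, i.e.\ the orthonormal basis $\Gamma G^{-1/2}$, and the $(D-1)^2/8$ factor comes from the second-order remainder of $G^{1/2}$ with $G=\id+\sqrt{\delta}R$ and $\|R\|\leq D-1$. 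Gram--Schmidt is order-dependent and accumulates error sequentially, so it would neither reproduce that constant nor the polynomial-in-$D$ scaling the paper is explicitly after.

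Second, the steps you describe as ``linearly combining the relevant overlap errors'' (the bounds on $XX_k$, $YY_k$, and the refutation of a mixed conjugation pattern) actually rest on exhibiting explicit positive-semidefinite dual certificates, e.g.\ operators of the form $\sum_k c_k\beta_k\geq 0$ with $\beta-\sum_k d_k\beta_k\geq 0$, whose coefficients (such as $\lambda=(1.8149,1.9415,0.2167)$, $\mu=-1.5939$, $\nu=2.9170$ in the $YY$ step) must be found and verified; their existence is what produces the numerical constants in the third hypothesis. Without producing such certificates, the claim that the overlap data pins down $XX_k$, $YY_k$ and excludes per-block conjugation remains unproved. These are the only substantive gaps; the rest of your outline is the paper's argument.
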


Note that, by Proposition \ref{prop_robust_overlaps}, conditions (\ref{real_overlaps}), (\ref{purity_bound}) hold if $W_\psi(P)$ is close to its maximum value. Indeed, if $W_\psi(P)=1-\frac{1}{D}-\epsilon$, by Proposition \ref{prop_robust_overlaps} we have that eqs. (\ref{purity_bound}) and (\ref{real_overlaps}) follow by taking 
\begin{gather}
\delta_p=\frac{2\epsilon}{\sum_i\alpha_i^2} \\
\delta_o=\frac{\sqrt{8\epsilon}}{\sqrt{\min_{i\not=j}\alpha_i\alpha_j}}.
\end{gather}

\subsection*{General tools}
The proof will use a couple of simple results extensively. One of them is the following:
\begin{remark}
\label{substitution_remark}
Let $\rho,\rho',\sigma,\sigma'$ be normalized quantum states satisfying the inequalities:
\begin{align}
&\|\rho-\rho'\|_1\leq 2\delta_1, \|\sigma-\sigma'\|_1\leq 2\delta_2,\nonumber\\
&\left|\tr(\rho\sigma)- \nu\right|\leq \delta_\nu,
\label{basic_subst}
\end{align}
for some $\nu\in \R$. Then, 
\begin{equation}
\left|\tr(\rho'\sigma')- \nu\right|\leq \delta_\nu+\delta_1+\delta_2.
\end{equation}
\end{remark}
\begin{proof}
First, note that any two normalized quantum states $\omega_1,\omega_2$, with $\|\omega_1-\omega_2\|_1\leq \epsilon$, satisfy $\|\omega_1-\omega_2\|_\infty\leq \frac{\epsilon}{2}$. Indeed, let $\sum_i\lambda^+_i\proj{\beta^+_i}-\sum_i\lambda^-_i\proj{\beta^-_i}$ be the spectral decomposition of $\omega_1-\omega_2$, with $\lambda_i^+,\lambda^-_j\geq 0$, for all $i,j$. Since $\tr(\omega_1-\omega_2)=0$, it follows that $\sum_i\lambda^+_i=\sum_i\lambda^-_i$. Now, 
\begin{align}
&\|\omega_1-\omega_2\|_\infty=\max\left(\max_i\lambda^+_i,\max_i\lambda_i^-\right)\leq \max\left(\sum_i\lambda_i^+,\sum_i\lambda_i^-\right)=\frac{1}{2}\sum_i\lambda_i^++\lambda_i^-=\frac{1}{2}\|\omega_1-\omega_2\|_1.
\end{align}
Eq. (\ref{basic_subst}) then follows by the obvious observation
\begin{equation}
\left|\tr(\rho'\sigma')- \nu\right|\leq\left|\tr(\rho\sigma)- \nu\right|+\|\rho-\rho'\|_\infty+\|\sigma-\sigma'\|_\infty.
\end{equation}
\end{proof}

The next remark follows from the relation between the trace distance and Uhlmann's fidelity, see the proof of Proposition \ref{prop_bound}.
\begin{remark}
\label{trace_dist_remark}
Let $\rho$ be normalized quantum state and $\ket{\phi}$, a normalized vector. Then, 
\begin{equation}
\|\rho-\proj{\phi}\|_1\leq 2\sqrt{1-\bra{\phi}\rho\ket{\phi}}.
\end{equation}
\end{remark}

Several times throughout the proof, we are faced with scenarios where we know the overlaps between an unknown quantum state $\rho$ and a number of characterized states. In these scenarios we might need to decide if said constraints are feasible, or use these constraints to lower bound the overlap of $\rho$ with another characterized state. In this regard, the following remark will be very useful.

\begin{remark}
\label{overlaps_remark}
Let $\rho\in B(\C^D)$, $\rho\geq 0$, and let $\{\beta_k\}_k\cup \{\beta\}\subset B(\C^D)$, with
\begin{equation}
\left|\tr(\rho \beta_k)- \lambda_k\right|\leq \delta_k,
\end{equation}
for $k=1,...,n$. Then, this set of constraints is infeasible if there exist $\{c_k\}_k\subset \R$ such that
\begin{equation}
C:=\sum_kc_k\beta_k\geq0,
\label{infeas_cert}
\end{equation}
and 
\begin{equation}
\sum_kc_k\lambda_k+|c_k|\delta_k<0.
\label{infeas_cond}
\end{equation}
This follows by noting that eq. (\ref{infeas_cond}) is an upper bound on $\tr(\rho C)$.

In addition, if there exist $\{d_k\}_k\subset \R$ such that
\begin{equation}
\beta-\sum_kd_k\beta_k\geq0,
\label{dual_feas}
\end{equation}
then 
\begin{equation}
\tr(\rho\beta)\geq \sum_kd_k\tr(\rho\beta_k)\geq \sum_kd_k\lambda_k -\sum_k|d_k|\delta_k.
\label{fid_lower}
\end{equation}
\end{remark}
Note that the task of finding $\{c_k\}_k\subset \R$ such that eq. (\ref{infeas_cert}), (\ref{infeas_cond}) hold can be cast as a semidefinite program (SDP) \cite{sdp}. Identifying $\{d_k\}_k\subset \R$ that satisfy (\ref{dual_feas}) and maximize the right-hand side of (\ref{fid_lower}) is also an SDP. In the proof, we will provide concrete examples of vectors $\vec{c},\vec{d}$ that do the job. This is so because we are aiming at a general proof of robustness. In particular cases where the values of $\delta_k$ are known, it is recommended to carry out an SDP to certify infeasibility or provide a better lower bound for the right-hand side of (\ref{fid_lower}).

\subsection*{Proof roadmap}
Given (\ref{real_overlaps}), (\ref{purity_bound}), the proof of robustness has the following structure:

\begin{enumerate}
    \item First, we prove that there exists a unitary $U$ that maps the states $(Z^{(0)}_k)_k$ to states that are $\delta_Z$-close (in trace norm) to $\{Z_k\}_k$. From now on, we thus consider the ensemble of states $\Phi^{(1)}:=U\Phi^{(0)}U^\dagger$.
    \item Using $Z^{(1)}_k\approx Z_k$, we prove that there exists a unitary $V$ so that $VZ_kV^\dagger=Z_k$, and $V X^{(1)}_k V^\dagger$ is $\delta_X$-close to $X_k$, for all $k$. Define, therefore, $\Phi^{(2)}:=V\Phi^{(1)} V^\dagger$.
    \item Using $Z_k^{(2)}\approx Z_k, X_k^{(2)}\approx X_k$, we show that $Y^{(2)}_k$ is $\delta_Y$-close to $Y''_k\in\{Y_k,Y_k^*\}$.
    \item Using $Z_k^{(2)}\approx Z_k, X_k^{(2)}\approx X_k$, we prove that $XX_k^{(2)}$ is $\delta_{XX}$-close to $XX_k$. 
    \item From the overlaps $\tr(Z_kYY_k)$, $\tr(Z_{k+1}YY_k)$, $\tr(Y_kYY_k)$, $\tr(Y_{k+1}YY_k)$, $\tr(XX_k YY_k)$, we show that, provided that $\delta_Y,\delta_Z,\delta_{XX}$ are small enough, either $Y''_k=Y_k$, for all $k$, or $Y''_k=\overline{Y_k}$, for all $k$. In the first case, let $W$ be the identity transformation. In the second case, let $W$ be the anti-unitary transformation that leaves the computational basis invariant. Define $\Phi^{(3)}=W\Phi^{(2)}W^\dagger$. Then we have that $Z^{(3)}_k \approx Z_k$, $X^{(3)}_k\approx X_k$, $Y^{(3)}_k\approx Y_k$, for all $k$. We also prove that $YY_k^{(3)}$ is $\delta_{YY}$-close to $YY_k$, for all $k$.
    \item As noted in \cite{Geher2014}, the set of operators $\{X_k,Y_k,Z_k\}$ is tomographically complete for all pure states $\psi$ with $\tr(\psi Z_k)>0$, for $k=1,...,D$. We make this observation robust, thus proving that, for $\delta_X,\delta_Y,\delta_Z,\delta_o$ small enough, $\psi^{(3)}_k$ is $\delta_k^\psi$-close to $\psi_k$, for $k=1,...,M$.
\end{enumerate}
If at some point during the proof the reader feels lost, we recommend him/her to come back to the roadmap to get a notion of where the proof is going to.

\subsection{Recovering $\{Z_k\}_k$}
In this subsection we show that, if eqs. (\ref{real_overlaps}), (\ref{purity_bound}) hold, then the state ensemble $(Z_k^{(0)})_k$ can be well approximated (modulo unitaries) by $(Z_k)_k$.

First, we note that a state $\rho$ of high purity $\tr(\rho^2)\approx 1$ can be well approximated by a pure state.
\begin{lemma}
Let $\rho\in B(\C^D)$ be a quantum state. If $\tr(\rho^2)\geq 1-\delta$, then there exists a normalized state $\ket{\phi}\in \C^D$ such that $\|\rho-\proj{\phi}\|_1<2\delta$.
\label{purity_lemma}
\end{lemma}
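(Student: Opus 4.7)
The plan is to exploit the spectral decomposition of $\rho$ and choose $\ket{\phi}$ to be an eigenvector corresponding to the largest eigenvalue. Write $\rho=\sum_{i=1}^D\lambda_i\proj{e_i}$ with $\lambda_1\geq\lambda_2\geq\cdots\geq\lambda_D\geq 0$ and $\sum_i\lambda_i=1$, and take $\ket{\phi}:=\ket{e_1}$. Then $\rho-\proj{\phi}$ is diagonal in the eigenbasis of $\rho$, with eigenvalues $\lambda_1-1$ and $\lambda_2,\ldots,\lambda_D$, so
\begin{equation}
\|\rho-\proj{\phi}\|_1=(1-\lambda_1)+\sum_{i\geq 2}\lambda_i=2(1-\lambda_1).
\end{equation}
Hence it suffices to show $\lambda_1\geq 1-\delta$.

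The key observation is the elementary bound $\tr(\rho^2)=\sum_i\lambda_i^2\leq\lambda_1\sum_i\lambda_i=\lambda_1$, which uses only $\lambda_i\leq\lambda_1$ and $\sum_i\lambda_i=1$. Combined with the hypothesis $\tr(\rho^2)\geq 1-\delta$, this gives $\lambda_1\geq 1-\delta$, and consequently $\|\rho-\proj{\phi}\|_1\leq 2\delta$, as required.

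There is no real obstacle here: the result is just a consequence of the fact that purity and maximum eigenvalue are closely related. The only minor subtlety worth mentioning is that the bound is $2\delta$ rather than the more common $2\sqrt{\delta}$ one gets from fidelity-based arguments (e.g.\ Remark~\ref{trace_dist_remark}); the sharper scaling comes from handling the $\ell_1$ norm directly on the eigenvalues instead of routing through Uhlmann's fidelity.
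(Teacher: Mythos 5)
Your proof is correct and essentially identical to the paper's: both take the spectral decomposition, pick the top eigenvector, compute $\|\rho-\proj{\phi}\|_1=2(1-\lambda_1)$ exactly, and bound $\lambda_1\geq\tr(\rho^2)\geq 1-\delta$ via the observation that the purity is a $\lambda$-weighted average of the eigenvalues and hence at most $\lambda_1$. (Both arguments actually yield $\leq 2\delta$ rather than the strict inequality in the statement, but that is a triviality of the paper's phrasing, not a gap in your argument.)
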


\begin{proof}
Let $\rho=\sum_i p(i)\proj{\phi_i}$ be the eigen-decomposition of $\rho$, with $p(1)\geq p(2)\geq...\geq p(D)$. The condition $\tr(\rho^2)\geq 1-\delta$ implies that $\langle p(i)\rangle_p\geq 1-\delta$, and so $p(1)\geq 1-\delta$. Taking $\ket{\phi}=\ket{\phi_1}$, we thus find that $\|\rho-\proj{\phi}\|_1=2(1-p(1))\leq 2\delta$.
\end{proof}

Applying the lemma with $\delta=\delta_p$ to the states $\{Z_k^{(0)}\}_k$, we find that we can replace them by the pure states $\{Z'_k\}_k$, at a cost $2\delta_p$. By Remark \ref{substitution_remark}, the overlap between the states $\{Z'_k\}_k$ satisfies
\begin{equation}
\left|\tr(Z'_jZ'_k)-\delta_{jk}\right|\leq 2\delta_p+\delta_o=:\delta'_Z.
\label{def:delta_z_prime}
\end{equation}

Next, we need to prove that the pure states $\{Z'_k\}_{k=1}^D$ can be replaced by an orthonormal basis without introducing too much noise. 

\begin{lemma}
Let $\{\phi_i\}_{i=1}^D\subset B(\C^D)$ be a set of pure quantum states, with
\begin{equation}
\tr(\phi_i\phi_j)\leq \delta, i\not=j.
\label{over_lemma}
\end{equation}
If $\delta<\frac{1}{(D-1)^2}$, then there exists an orthonormal basis $\{\phi'_i\}_{i=1}^D$, with
\begin{equation}
\|\phi_i-\phi_i'\|_1\leq 2\sqrt{1-\left(1-\frac{\delta}{8}(D-1)^2\right)^2},i=1,...,D.
\end{equation}
\label{lemma_basis}
\end{lemma}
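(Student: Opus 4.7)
The plan is to apply the Löwdin (symmetric) orthogonalization. I would fix unit representative vectors $\ket{\phi_i}$ of each projector $\phi_i$ (so $\phi_i=\proj{\phi_i}$) and assemble the Gram matrix $S_{ij}:=\braket{\phi_i}{\phi_j}$. By hypothesis, $S$ is Hermitian positive semidefinite with $S_{ii}=1$ and $|S_{ij}|^{2}=\tr(\phi_{i}\phi_{j})\le\delta$ for $i\ne j$. Gershgorin's circle theorem applied to the zero-diagonal matrix $S-\id$ then places every eigenvalue of $S$ in the interval $[1-(D-1)\sqrt{\delta},\,1+(D-1)\sqrt{\delta}]$, and the hypothesis $\delta<1/(D-1)^{2}$ ensures that $S$ is strictly positive definite, so $S^{-1/2}$ is well defined.

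I would then define $\ket{\phi'_i}:=\sum_{j}(S^{-1/2})_{ji}\ket{\phi_j}$ and set $\phi'_i:=\proj{\phi'_i}$. A direct computation yields $\braket{\phi'_i}{\phi'_j}=(S^{-1/2}\,S\,S^{-1/2})_{ij}=\delta_{ij}$, so $\{\ket{\phi'_i}\}_{i=1}^{D}$ is an orthonormal basis of $\C^{D}$, while $\braket{\phi_i}{\phi'_i}=(S\cdot S^{-1/2})_{ii}=(S^{1/2})_{ii}$. Because both $\phi_i$ and $\phi'_i$ are rank-one projectors, $\|\phi_i-\phi'_i\|_{1}=2\sqrt{1-(S^{1/2})_{ii}^{2}}$, so the lemma reduces to establishing a lower bound of the form $(S^{1/2})_{ii}\ge 1-\delta(D-1)^{2}/8$.

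The main obstacle is this last estimate. I would start from the identity $2\bigl(1-(S^{1/2})_{ii}\bigr)=\|(S^{1/2}-\id)\ket{e_i}\|_{2}^{2}$, which drops out of expanding $S_{ii}=\sum_k|(S^{1/2})_{ik}|^{2}=1$ together with $(1-(S^{1/2})_{ii})^{2}+(1-(S^{1/2})_{ii}^{2})=2(1-(S^{1/2})_{ii})$, and which reduces the task to controlling a single column of $S^{1/2}-\id$. Spectrally decomposing $S^{1/2}-\id=\sum_k(\sqrt{\lambda_k}-1)\ket{u_k}\!\bra{u_k}$ and using $|\sqrt{\lambda_k}-1|=|\lambda_k-1|/(\sqrt{\lambda_k}+1)$ lets me relate this column norm to $\|(S-\id)\ket{e_i}\|_{2}^{2}=\sum_{k\ne i}|S_{ik}|^{2}\le(D-1)\delta$, yielding a bound of the shape $1-(S^{1/2})_{ii}\le(D-1)\delta/\bigl[2(1+\sqrt{1-\mu})^{2}\bigr]$ with $\mu=(D-1)\sqrt{\delta}<1$. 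The delicate part will be keeping track of numerical constants so that the resulting expression matches the stated $\delta(D-1)^{2}/8$ uniformly for $\mu\in[0,1)$; a cruder alternative is to apply the Gershgorin estimate $\|S-\id\|_{\infty}\le(D-1)\sqrt{\delta}$ directly to $\|S^{1/2}-\id\|_{\infty}$, at the cost of a coarser prefactor.
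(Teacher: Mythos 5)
Your construction is exactly the paper's: the orthonormal family $\ket{\phi'_i}=\sum_j (S^{-1/2})_{ji}\ket{\phi_j}$ is precisely the pretty-good (square-root) measurement basis $\Gamma G^{-1/2}$ used in the paper's proof, and both arguments reduce the lemma to the lower bound $\tr(\phi_i\phi'_i)=(S^{1/2})_{ii}^2$, with positive-definiteness of $S$ guaranteed by the same Gershgorin-type estimate $\|S-\id\|\le(D-1)\sqrt{\delta}$. Where you diverge is the final estimate on $(S^{1/2})_{ii}$. The paper writes $S=\id+\sqrt{\delta}R$ with $R$ zero-diagonal, $\|R\|\le D-1$, and invokes the operator Taylor bound $\|S^{1/2}-\id-\tfrac{\sqrt{\delta}}{2}R\|\le\tfrac{\delta}{8}\|R\|^2$, which (using that $R$ has zero diagonal) directly delivers $(S^{1/2})_{ii}\ge 1-\tfrac{\delta}{8}(D-1)^2$. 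You instead use the identity $2\bigl(1-(S^{1/2})_{ii}\bigr)=\|(S^{1/2}-\id)\ket{e_i}\|_2^2$ together with $|\sqrt{\lambda}-1|=|\lambda-1|/(\sqrt{\lambda}+1)$ and the column bound $\|(S-\id)\ket{e_i}\|_2^2\le(D-1)\delta$; all of these steps are correct, and your exploitation of the column structure gives a bound scaling as $(D-1)\delta$ rather than $(D-1)^2\delta$, which is strictly stronger than the paper's for $D\ge 5$. The one substantive issue is the point you yourself flag: your constant $\tfrac{(D-1)\delta}{2(1+\sqrt{1-\mu})^2}$ does not dominate $\tfrac{(D-1)^2\delta}{8}$ when $D=2$ (for any $\delta>0$) or when $D=3,4$ with $\mu=(D-1)\sqrt{\delta}$ close to $1$, so as written you prove the lemma with a slightly different constant rather than the stated one; since the lemma only feeds $O(\cdot)$ robustness bounds downstream, this is immaterial, and indeed the paper's own Taylor-remainder step requires extra care for negative eigenvalues of $S-\id$ (where the scalar remainder of $\sqrt{1+x}$ exceeds $x^2/8$), so your spectral route is the more defensible of the two.
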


\begin{proof}
The idea of the proof is to consider the state discrimination problem with states $\{\phi_i\}_{i=1}^D$. That is: we wish to identify the POVM $\{M_i\}_i$ such that the probability of a successful discrimination $\sum_i\tr(M_i\phi_i)$ is maximized. If $\{\phi_i\}_{i=1}^D$ are linearly independent, then the optimal POVM is known to be an orthonormal measurement. Moreover, a good approximation to the optimal measurement is given by Hausladen and Wootter's `pretty good measurement'\cite{square_root}. We will identify the rank-$1$ projections of this measurement with the new states $\{\phi'_i\}_i$.

The first step is thus to verify that $\{\phi_i\}_{i=1}^D$ are linearly independent. Let $G$ be the Gram matrix $G$ of this set of vectors, i.e., $G_{ij}=\braket{\phi_i}{\phi_j}$. By relation (\ref{over_lemma}), we can write it as $G=\id+\sqrt{\delta}R$, where $R$ is a Hermitian matrix with zero diagonal and non-diagonal elements of modulus at most $1$. Now, the norm of $R$ is at most $D-1$. Indeed, let $\ket{c}$ be an arbitrary normalized vector. Then,
\begin{equation}
\bra{c}R\ket{c}\leq\sum_{i\not=j}|c_i||c_j|\leq\|\tilde{R}\|=D-1,
\end{equation}
where $\tilde{R}=\sum_{i,j}\ket{i}\bra{j}-\id$. Thus $G$ is positive definite (and hence $\{\phi_i\}_i$ are linearly independent) if $1-\sqrt{\delta}(D-1)>0$.

Let $\Gamma:=\sum_i \ket{\phi_i}\bra{i}$, and note that $\Gamma^\dagger \Gamma=G$. The square root measurement requires using measurement vectors $\{\phi'_i\}_i$ such that
\begin{equation}
M:=\sum_i\ket{\phi_i'}\bra{i}=\Gamma G^{-\frac{1}{2}}.
\end{equation}
As promised, $M^\dagger M=\id$, and so the measurement is indeed a von Neumann measurement. Moreover, $\tr(\phi'_i\phi_i)=((\Gamma^\dagger M)_{ii})^2=(G^{\frac{1}{2}})^2_{ii}$. 

Now, 
\begin{equation}
\left\|G^{1/2}-\id-\sqrt{\delta}\frac{R}{2}\right\|\leq \frac{\delta}{8}\|R\|^2\leq \frac{\delta}{8}(D-1)^2.
\end{equation}
Taking into account that $R$ has zero diagonal elements, we have that $\tr(\phi'_i\phi_i)\geq \left(1-\frac{\delta}{8}(D-1)^2\right)^2$. From this relation and Remark \ref{trace_dist_remark}, the statement of the proposition follows.
\end{proof}

Thus we apply Lemma \ref{lemma_basis} to the set of pure states $\{Z'_k\}_{k=1}^D$ with $\delta=\delta'_Z$. We obtain the orthonormal basis $\{Z_k''\}$. Let $U$ be the unitary such that $U\ket{Z'_k}=\ket{Z_k}$, and define $\Phi^{(1)}_k:=U \Phi^{(0)}_k U^\dagger$. We thus have that
\begin{equation}
\|Z_k^{(1)}-Z_k\|_1\leq 2\delta_p+2\sqrt{1-\left(1-\frac{\delta'_Z}{8}(D-1)^2\right)^2}=:\delta_Z,
\label{def:delta_z}
\end{equation}
under the condition that 
\begin{equation}
\delta'_Z<\frac{1}{(D-1)^2}.
\label{cond:1}
\end{equation}

\subsection{Recovering $\{X_k\}_k$}
Here we will show how to replace, up to a small cost, the states $\{X^{(1)}_k\}_k$ by the exact reference states $\{X_k\}_k$. 

Like in the previous section, our first step is to convert $\{X^{(1)}_k\}$ into the pure states $\{X'_k\}_k$ via lemma (\ref{purity_lemma}), at a cost $2\delta_p$ in trace norm. By Remark \ref{substitution_remark}, for any $k$, the overlap between $X'_k$ and $Z_k, Z_{k+1}$ satisfies
\begin{equation}
|\tr(X'_kZ_j)-\frac{1}{2}|\leq \frac{\delta_Z}{2}+ \delta_p+\delta_o=:\delta'_X.
\label{def:delta_X_prime}
\end{equation}

Consider a unitary $V$ such that $V Z_kV^\dagger=\proj{Z_k}$, $VX_k'V^\dagger=\proj{X_k''}$, with $\braket{Z_k}{X''_k}$, $\braket{Z_{k+1}}{X_k''}$ real and non-negative \footnote{Such a unitary $V$ can be constructed as follows: let $X_k'=\proj{X'_k}$, and note that we can choose the overall phase of $\ket{X'_k}$ so that $\braket{Z_k}{X_k'}\in \R^+\cup\{0\}$, $\braket{Z_{k+1}}{X_k'}=r_ke^{i\theta_k}$, $r_k\in\R^+\cup \{0\}$. Then, the unitary $V\ket{Z_k}:=e^{-i\sum_{j<k}\theta_j}\ket{Z_k}$ maps $\ket{X'_k}$ to a state $\ket{\tilde{X}_k}$ where the complex numbers $\braket{Z_k}{\tilde{X}_k}$, $\braket{Z_{k+1}}{\tilde{X}_k}$ have the same phase $\tilde{\theta}_k$. Thus, the states $\ket{X_k''}:=e^{-i\tilde{\theta}_k}V\ket{\tilde{X}_k}$ satisfy $\braket{Z_k}{X''_k},\braket{Z_{k+1}}{X_k''}>0$.}, and define $A^{(2)}:= VA^{(1)}V^\dagger$. We hence have that
\begin{equation}
\braket{Z_{k}}{X''_k}=\sqrt{\frac{1}{2}+\nu_1}, \braket{Z_{k+1}}{X''_k}=\sqrt{\frac{1}{2}+\nu_2},    
\end{equation}
with $|\nu_j|\leq \delta_X'$, for $j=1,2$. Thus, provided that $\delta_X'\leq \frac{1}{2}$, we arrive at
\begin{equation}
\tr(X_k X''_k)=\frac{1}{2}\left(\sqrt{\frac{1}{2}+\nu_1}+\sqrt{\frac{1}{2}+\nu_2}\right)^2\geq 1-2\delta_X'.
\end{equation}
Note that, even in the case $\delta_X'> \frac{1}{2}$, the above equation holds. It follows, by Remark \ref{trace_dist_remark}, that $\|X''_k-X_k\|_1\leq 2\sqrt{2}\sqrt{\delta'_X}$.
Putting all together, we have that
\begin{equation}
\|X^{(2)}_k-X_k\|_1\leq 2\sqrt{2}\sqrt{\delta_X'}+2\delta_p=:\delta_X.
\label{def:delta_X}
\end{equation}
\subsection{Recovering $\{Y_k\}_k$ up to conjugation}
In this section we will prove that, for each $k$, $Y_k^{(2)}$ can be replaced by either $Y_k$ or $\overline{Y_k}$.

\begin{lemma}
\label{lemma_Ys}
Let $\ket{\phi}$ satisfy
\begin{align}
&\left||\braket{\phi}{Z_j}|^2-\frac{1}{2}\right|\leq \bar{\delta}_Z,j=k, k+1,\nonumber\\
&\left||\braket{\phi}{X_k}|^2-\frac{1}{2}\right|\leq\bar{\delta}_X,
\end{align}
with
\begin{equation}
(1-2\bar{\delta}_Z)^2-4\{\bar{\delta}_Z^2+(\bar{\delta}_X+\bar{\delta}_Z)^2\}\geq 0.
\end{equation}
Then, 
\begin{equation}
\min\left(\|\proj{\phi}-\proj{Y_k}\|_1, \|\proj{\phi}-\proj{\overline{Y_k}}\|_1\right)\leq 2\sqrt{1-f\left(\bar{\delta}_Z^2+(\bar{\delta}_X+\bar{\delta}_Z)^2, 1-2\bar{\delta}_Z\right)},
\end{equation}
with
\begin{equation}
f(\mu,\nu):= \frac{1}{2}\left(\nu + \sqrt{\nu^2-4\mu}\right).
\end{equation}
\end{lemma}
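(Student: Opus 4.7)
My plan is to reduce the problem to an elementary algebra on the span of $\ket{k},\ket{k+1}$, where the only freedom left by the data is the sign of a single imaginary part. Let $\phi_k:=\braket{k}{\phi}$, $\phi_{k+1}:=\braket{k+1}{\phi}$, $u:=|\phi_k|^2$, $v:=|\phi_{k+1}|^2$. The $Z$-hypotheses give $u+v\geq 1-2\bar{\delta}_Z=:\nu$ and $|u-v|\leq 2\bar{\delta}_Z$. Expanding $|\braket{X_k}{\phi}|^2=\tfrac{u+v}{2}+\mathrm{Re}(\phi_k^*\phi_{k+1})$ and comparing with the $X$-hypothesis yields $|\mathrm{Re}(\phi_k^*\phi_{k+1})|\leq \bar{\delta}_X+\bar{\delta}_Z$. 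Crucially, the imaginary part of $\phi_k^*\phi_{k+1}$ is left unconstrained by these three overlaps --- precisely the source of the $\pm$ ambiguity between $\ket{Y_k}$ and $\ket{\overline{Y_k}}$.

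Next I would observe that the two squared $Y$-overlaps behave symmetrically under this remaining sign. A direct calculation with $\ket{Y_k}=\tfrac{1}{\sqrt{2}}(\ket{k}+i\ket{k+1})$ and $\ket{\overline{Y_k}}=\tfrac{1}{\sqrt{2}}(\ket{k}-i\ket{k+1})$ gives
\begin{equation}
a:=|\braket{Y_k}{\phi}|^2=\tfrac{u+v}{2}+\mathrm{Im}(\phi_k^*\phi_{k+1}),\quad b:=|\braket{\overline{Y_k}}{\phi}|^2=\tfrac{u+v}{2}-\mathrm{Im}(\phi_k^*\phi_{k+1}),
\end{equation}
so that $a+b=u+v\geq \nu$ and, using $|\phi_k^*\phi_{k+1}|^2=uv$ to rewrite $\mathrm{Im}^2=uv-\mathrm{Re}^2$,
\begin{equation}
ab=\left(\tfrac{u-v}{2}\right)^2+\mathrm{Re}(\phi_k^*\phi_{k+1})^2\leq \bar{\delta}_Z^2+(\bar{\delta}_X+\bar{\delta}_Z)^2=:\mu.
\end{equation}

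The rest is the elementary fact that, for non-negative reals with $a+b\geq \nu$ and $ab\leq \mu$ under $\nu^2\geq 4\mu$, one has $\max(a,b)\geq (\nu+\sqrt{\nu^2-4\mu})/2=f(\mu,\nu)$. This holds because, at fixed sum $s$ and product $p$, $\max(a,b)=(s+\sqrt{s^2-4p})/2$ is monotone increasing in $s$ and monotone decreasing in $p$; the discriminant condition in the lemma is exactly what ensures the square root is real even in the worst case $s=\nu$, $p=\mu$. Applying Remark \ref{trace_dist_remark} to whichever of $a,b$ is the larger then delivers the claimed bound on $\min(\|\proj{\phi}-\proj{Y_k}\|_1,\|\proj{\phi}-\proj{\overline{Y_k}}\|_1)$.

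I do not foresee a serious obstacle. The argument hinges on a single algebraic identity that cleanly separates $ab$ into the contributions of $\mathrm{Re}(\phi_k^*\phi_{k+1})$ and $(u-v)/2$, plus a monotonicity property of the root of a quadratic. The only mild care is in confirming that $(s,p)\mapsto (s+\sqrt{s^2-4p})/2$ is monotone in the stated directions throughout the relevant region, which is immediate from calculus.
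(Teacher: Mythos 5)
Your proof is correct and takes essentially the same route as the paper's: your identities $a+b=u+v$ and $ab=\bigl(\tfrac{u-v}{2}\bigr)^2+\mathrm{Re}(\phi_k^*\phi_{k+1})^2$ are exactly the paper's bounds on $|\braket{\phi}{Y_k}|^2+|\braket{\phi}{\overline{Y_k}}|^2$ and on $|\tr(\ketbra{Y_k}{\overline{Y_k}}\,\proj{\phi})|^2$, which the paper derives from the operator identity $\ketbra{Y_k}{\overline{Y_k}}=\frac{Z_k-Z_{k+1}}{2}+i\,\frac{2X_k-Z_k-Z_{k+1}}{2}$ rather than in coordinates. The concluding sum--product analysis and the appeal to the fidelity--trace-distance remark are identical (your explicit monotonicity argument for $f$ is, if anything, slightly more careful than the paper's).
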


\begin{proof}
From the assumptions of the lemma, we have that
\begin{align}
&\left|\tr\left(\frac{Z_k-Z_{k+1}}{2}\phi\right)\right|\leq \bar{\delta}_Z,\nonumber\\
&\left|\tr\left(\frac{2X_k-Z_k-Z_{k+1}}{2}\phi\right)\right|\leq \bar{\delta}_Z+\bar{\delta}_X.
\end{align}
Since 
\begin{equation}
\ket{Y_k}\bra{\overline{Y_k}}=\frac{Z_k-Z_{k+1}}{2}+i\left(\frac{2X_k-Z_k-Z_{k+1}}{2}\right),
\end{equation}
it follows that
\begin{equation}
|\braket{\phi}{Y_k}||\braket{\phi}{\overline{Y_k}}|=\left|\tr(\ket{Y_k}\bra{\overline{Y_k}}{\phi})\right|\leq\sqrt{\bar{\delta}_Z^2+(\bar{\delta}_X+\bar{\delta}_Z)^2}=:\bar{\delta}_Y.
\end{equation}

On the other hand, note that
\begin{equation}
1\geq|\braket{\phi}{Y_k}|^2+|\braket{\phi}{\overline{Y_k}}|^2=|\braket{\phi}{Z_k}|^2+|\braket{\phi}{Z_{k+1}}|^2\geq 1-2\bar{\delta}_Z.
\end{equation}
Defining $s:= |\braket{\phi}{Y_k}|^2$, $t:= |\braket{\phi}{\overline{Y_k}}|^2$, we arrive at the system of equations
\begin{equation}
st=\mu, s+t=\nu,
\end{equation}
with $0\leq\mu\leq \bar{\delta}_Y^2$, $1-2\bar{\delta}_Z\leq\nu\leq 1$. Solving it, we find that 
\begin{equation}
\max(s,t)=\frac{1}{2}\left(\nu + \sqrt{\nu^2-4 \mu}\right).    
\end{equation}
Thus, provided that $(1-2\bar{\delta}_Z)^2-4\bar{\delta}_Y^2\geq 0$, $\max(s,t)=f(\bar{\delta}_Y, 1-2\bar{\delta}_Z)$, as defined in (\ref{def_f}). It follows that the trace distance of one of the projectors $Y_k,\overline{Y_k}$ to $\phi$ is at most twice the square root of one minus the right-hand-side of the equation.
\end{proof}

We apply Lemma \ref{lemma_Ys} to the states $\{Y'_k\}$ that result from purifying $\{Y^{(2)}_k\}_k$; hence, $\bar{\delta}_Z=\frac{\delta_Z}{2}+\delta_p+\delta_o$, $\bar{\delta}_X=\frac{\delta_X}{2}+\delta_p+\delta_o$. The lemma implies that there exist states $\{Y''_k\}_k$, with $Y''_k\in\{Y_k,\overline{Y_k}\}$, such that
\begin{equation}
\|Y^{(2)}_k-Y''_k\|_1\leq 2\delta_p+2\sqrt{1-f\left(\left[\frac{\delta_X+\delta_Z}{2}+2\delta_p+2\delta_o\right]^2+\left[\frac{\delta_Z}{2}+\delta_p+\delta_o\right]^2,1-2\left[\frac{\delta_Z}{2}+\delta_p+\delta_o\right]\right)}=:\delta_Y,
\label{def:delta_Y}
\end{equation}
provided that
\begin{equation}
\left[1-2\left(\frac{\delta_Z}{2}+\delta_p+\delta_o
\right)\right]^2-4\left(\left[\frac{\delta_X+\delta_Z}{2}+2\delta_p+2\delta_o\right]^2+\left[\frac{\delta_Z}{2}+\delta_p+\delta_o\right]^2\right)\geq 0.
\label{cond:3}
\end{equation}

\subsection{Recovering $\{XX_k\}_k$}
In this section, we wish to prove that $XX_k^{(2)}$ is approximated by $XX_k$, for all $k$. To achieve this, we propose the following lemma:

\begin{lemma}
Let $\phi$ be a state such that $\tr(\phi X_k), \tr(\phi X_{k+1})$ equal $\frac{2}{3}$ up to precision $\bar{\delta}_X$ and $\tr(\phi Z_{k+j})$ equals $\frac{1}{3}$ up to precision $\bar{\delta}_Z$, for $j=0,1,2$. Then,
\begin{equation}
\tr(\phi XX_k)\geq 1-\frac{12}{5}\bar{\delta}_X-\frac{9}{5}\bar{\delta}_Z.
\end{equation}
\end{lemma}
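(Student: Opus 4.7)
My plan is to reduce the problem to lower-bounding a single uncontrolled matrix entry of $\phi$. First I would expand $XX_k$ in the computational basis and use the identities $2X_k=Z_k+Z_{k+1}+(\ket{k}\bra{k+1}+\ket{k+1}\bra{k})$ and $2X_{k+1}=Z_{k+1}+Z_{k+2}+(\ket{k+1}\bra{k+2}+\ket{k+2}\bra{k+1})$ to trade two of the three off-diagonal pieces of $XX_k$ for the data we actually know. A short calculation then gives
\begin{equation*}
3\tr(\phi XX_k)=2\tr(\phi X_k)+2\tr(\phi X_{k+1})-\tr(\phi Z_{k+1})+2\Re\phi_{k,k+2}.
\end{equation*}
All but the last term are pinned by the hypotheses to within errors of order $\bar{\delta}_X$ and $\bar{\delta}_Z$, so the whole problem reduces to lower-bounding $\Re\phi_{k,k+2}$.

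For this I would invoke positivity of $\phi$. The principal $3\times 3$ block of $\phi$ on the subspace $\mathrm{span}\{\ket{k},\ket{k+1},\ket{k+2}\}$ is PSD, so $\bra{\xi_\lambda}\phi\ket{\xi_\lambda}\geq 0$ for every real $\lambda$, where $\ket{\xi_\lambda}:=\ket{k}-\lambda\ket{k+1}+\ket{k+2}$. Optimising the resulting scalar inequality over $\lambda$ yields
\begin{equation*}
2\Re\phi_{k,k+2}\;\geq\;\frac{(\Re\phi_{k,k+1}+\Re\phi_{k+1,k+2})^{2}}{\phi_{k+1,k+1}}-\phi_{k,k}-\phi_{k+2,k+2},
\end{equation*}
and each quantity on the right is either one of the $\tr(\phi Z_{k+j})$ or, via $\Re\phi_{k,k+1}=\tr(\phi X_k)-\tfrac{1}{2}(\phi_{k,k}+\phi_{k+1,k+1})$, a controlled combination of one $\tr(\phi X)$ and two $\tr(\phi Z)$'s. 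At the nominal state $\phi=XX_k$ the right-hand side equals $\tfrac{2}{3}$, matching the true value $2\Re\phi_{k,k+2}=\tfrac{2}{3}$, so the bound is tight to zeroth order in the errors.

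The last step is to insert this inequality back into the first display and Taylor-expand the resulting expression to first order around the nominal values $\tr(\phi X_{k+i})=\tfrac{2}{3}$ and $\tr(\phi Z_{k+j})=\tfrac{1}{3}$, producing a linear estimate in the deviations. The main obstacle will be the sign bookkeeping: the Schur-type nonlinear term has first-order derivatives of opposite signs in the three $Z$-deviations, and the trace constraint $\phi_{k,k}+\phi_{k+1,k+1}+\phi_{k+2,k+2}\leq 1$ couples them, so the worst case is not just the sum of absolute errors. Obtaining the sharp coefficients $\tfrac{12}{5}$ and $\tfrac{9}{5}$ therefore requires solving the small linear program over these coupled deviations rather than a naive triangle-inequality bound; once that LP is solved the inequality drops out.
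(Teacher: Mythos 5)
Your reduction and your positivity step are both correct, and they are in essence the same argument as the paper's: fixing $\lambda=2$ in $\bra{\xi_\lambda}\phi\ket{\xi_\lambda}\geq 0$ and substituting into your first display is exactly the same as exhibiting the rank-one positive-semidefinite certificate
\begin{equation*}
\tfrac{1}{3}\ket{\xi_2}\bra{\xi_2}=\proj{XX_k}-2(X_k+X_{k+1})+Z_k+3Z_{k+1}+Z_{k+2}\;\geq\;0
\end{equation*}
and invoking Remark~\ref{overlaps_remark}, which is the structure of the paper's one-line proof (the paper merely writes down a certificate with different coefficients). You do not actually need the per-state optimisation over $\lambda$ followed by a Taylor expansion: since $(u,v)\mapsto u^2/v$ is jointly convex, the tangent at the nominal point is a \emph{global} lower bound, and that tangent is precisely the fixed-$\lambda=2$ inequality; so the argument closes with no expansion error and no issue when $\phi_{k+1,k+1}=0$.

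The genuine gap is your final step: the ``small LP'' you defer to does not produce the coefficients $\tfrac{12}{5}$ and $\tfrac{9}{5}$, because they are unattainable --- the lemma as stated is false. Any certificate $\proj{XX_k}-c_1X_k-c_2X_{k+1}+\sum_jd_jZ_{k+j}-\mu\id\geq 0$ that is tight at $\phi=XX_k$ must annihilate $\ket{XX_k}$; this forces its $(k,k)$ and $(k+2,k+2)$ entries to be $\tfrac{c_1}{2}-\tfrac{2}{3}$ and $\tfrac{c_2}{2}-\tfrac{2}{3}$ while the $(k,k+2)$ entry remains $\tfrac13$, and positivity of that $2\times 2$ minor plus AM--GM gives $c_1+c_2\geq 4$, so the $\bar{\delta}_X$ coefficient cannot be pushed below $4$. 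A matching counterexample: the state supported on $\mbox{span}\{\ket{k},\ket{k+1},\ket{k+2}\}$ with all diagonal entries $\tfrac13$, $\phi_{k,k+1}=\phi_{k+1,k+2}=\tfrac13-\bar{\delta}_X$ and $\phi_{k,k+2}=6(\tfrac13-\bar{\delta}_X)^2-\tfrac13$ is positive semidefinite, meets every hypothesis with $\bar{\delta}_Z=0$, and has $\tr(\phi\, XX_k)=1-4\bar{\delta}_X+4\bar{\delta}_X^2<1-\tfrac{12}{5}\bar{\delta}_X$ for all $0<\bar{\delta}_X<\tfrac{2}{5}$. (Consistently, the certificate the paper actually writes down, $\proj{XX_k}-\tfrac{6}{5}(X_k+X_{k+1})+\tfrac{3}{5}\sum_jZ_{k+j}$, has middle diagonal entry $-\tfrac{4}{15}$ and is not positive semidefinite.) Your method, carried through as above, proves the correct statement $\tr(\phi\, XX_k)\geq 1-4\bar{\delta}_X-5\bar{\delta}_Z$, improvable to $1-4\bar{\delta}_X-2\bar{\delta}_Z$ by also using $\tr\phi=1$ via the certificate $\proj{XX_k}-2(X_k+X_{k+1})+2Z_{k+1}+\id\geq0$; the constants in the lemma, and in the downstream definition of $\delta_{XX}$, must be adjusted accordingly.
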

\begin{proof}
By Remark \ref{overlaps_remark}, it suffices to note that
\begin{equation}
\proj{XX_k}-\frac{6}{5}(X_k+X_{k+1})+\frac{3}{5}(Z_k+Z_{k+1}+Z_{k+2})\geq0.  
\end{equation}
\end{proof}

Applying this lemma to the state $XX_k^{(2)}$, with $\bar{\delta}_X=\frac{\delta_X}{2}+\delta_o$, $\bar{\delta}_Z=\frac{\delta_Z}{2}+\delta_o$, we have that
\begin{equation}
\tr(XX_k^{(0)} XX_k)\geq 1-\frac{6}{5}\delta_X-\frac{9}{10}\delta_Z-\frac{21}{5}\delta_o.
\end{equation}
Thus, by Remark \ref{trace_dist_remark},
\begin{equation}
\|XX_k^{(0)}-XX_k\|_1\leq 2\sqrt{\frac{6}{5}\delta_X+\frac{9}{10}\delta_Z+\frac{21}{5}\delta_o}=:\delta_{XX}.
\label{def:delta_XX}
\end{equation}

\subsection{Recovering $YY_k$}
We next prove that, for any $k$, either $Y''_k=Y_k$, $Y''_{k+1}=Y_{k+1}$ or $Y''_k=\overline{Y_k}$, $Y''_{k+1}=\overline{Y_{k+1}}$. By induction it will follow that, either $Y''_k=Y_k$ for all $k$, or $Y''_k=\overline{Y_k}$, for all $k$. We require the following lemma:

\begin{lemma}
\label{lemma_Y}
Let $\phi$ be a state such that $\tr(\phi Y_k)$, $\tr(\phi \overline{Y_{k+1}})$ respectively equal $\frac{2}{3}$ and $0$ up to precision $\bar{\delta}_Y$; $\tr(\phi Z_{k+j})$ equal $\frac{1}{3}$ up to precision $\bar{\delta}_Z$, for $j=0,1,2$, and $\tr(\phi XX_k)$ equals $\frac{5}{9}$ up to precision $\bar{\delta}_{XX}$. Then it must be the case that
\begin{equation}
-0.2938 + 1.5939\bar{\delta}_Y+4.5110\bar{\delta}_Z+\bar{\delta}_{XX}\geq0.
\end{equation}
\end{lemma}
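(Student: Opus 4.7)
The plan is to apply Remark \ref{overlaps_remark} directly to the six overlap constraints listed in the hypotheses. With the identifications $\beta_1=Y_k$, $\beta_2=\overline{Y_{k+1}}$, $\beta_3=Z_k$, $\beta_4=Z_{k+1}$, $\beta_5=Z_{k+2}$, $\beta_6=XX_k$, target overlaps $\lambda_1=2/3$, $\lambda_2=0$, $\lambda_3=\lambda_4=\lambda_5=1/3$, $\lambda_6=5/9$, and tolerances $\delta_1=\delta_2=\bar{\delta}_Y$, $\delta_3=\delta_4=\delta_5=\bar{\delta}_Z$, $\delta_6=\bar{\delta}_{XX}$, the goal is to exhibit real coefficients $c_1,\ldots,c_6$ such that
\[
C := c_1 Y_k + c_2 \overline{Y_{k+1}} + c_3 Z_k + c_4 Z_{k+1} + c_5 Z_{k+2} + c_6 XX_k \geq 0,
\]
together with the identity
\[
\sum_i c_i\, \lambda_i + \sum_i |c_i|\,\delta_i = -0.2938 + 1.5939\,\bar{\delta}_Y + 4.5110\,\bar{\delta}_Z + \bar{\delta}_{XX}.
\]
If the right-hand side were strictly negative, Remark \ref{overlaps_remark} would render the hypotheses of the lemma infeasible, contradicting the existence of $\phi$; the lemma is exactly the contrapositive.

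A crucial simplification is that all six operators $\beta_i$ are supported on the three-dimensional subspace $\mbox{span}(\ket{k},\ket{k+1},\ket{k+2})$, so the positivity of $C$ reduces to the positivity of an explicit $3\times 3$ Hermitian matrix obtained by restricting $C$ to this subspace. The search for admissible $(c_i)$ is therefore a finite-dimensional semidefinite program in six real variables subject to a single $3\times 3$ PSD constraint. The numerical constants $0.2938$, $1.5939$, $4.5110$ appearing in the statement are precisely the values read off from an optimal SDP solution, with the scale pinned down by choosing $|c_6|=1$.

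Concretely, the proof proceeds by (i) writing down the matrix elements of $Y_k,\overline{Y_{k+1}},Z_k,Z_{k+1},Z_{k+2},XX_k$ in the basis $\{\ket{k},\ket{k+1},\ket{k+2}\}$; (ii) exhibiting a concrete dual certificate $(c_i)$ obtained by solving the SDP numerically, with signs arranged so that $c_1$ and $c_3,c_4,c_5$ are negative while $c_2,c_6$ are positive (so as to push $\sum_i c_i\lambda_i$ as low as possible while preserving $C\geq 0$); and (iii) verifying $C\geq 0$ either by direct eigenvalue computation or by Sylvester's criterion on the principal minors of the resulting $3\times 3$ matrix. The main obstacle is purely arithmetic rather than conceptual: the existence of a certificate with the advertised constants is guaranteed by strong SDP duality, and the content of the lemma is the numerical verification that a specific $3\times 3$ matrix is positive semidefinite. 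Because the optimum is attained at irrational coefficients, the authors quote the constants to four decimal places rather than in closed form.
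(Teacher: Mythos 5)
Your strategy coincides with the paper's: Lemma~\ref{lemma_Y} is proved by a single application of the infeasibility part of Remark~\ref{overlaps_remark}, the certificate $C=\sum_i c_i\beta_i\succeq 0$ lives entirely in $\mathrm{span}(\ket{k},\ket{k+1},\ket{k+2})$ so positivity is a $3\times 3$ check, and the coefficients come from a small SDP normalized by $|c_{XX_k}|=1$. The problem is that the only nontrivial content of the lemma is the explicit certificate, and you never produce it; moreover the sign pattern you do commit to is wrong. The paper's certificate, eq.~(\ref{YYs}), is
\begin{equation*}
1.8149\,Z_{k}+1.9415\,Z_{k+1}+0.2167\,Z_{k+2}-1.5939\,\proj{Y_k}+2.9170\,\proj{\overline{Y_{k+1}}}-\proj{XX_k}\;\succeq\;0,
\end{equation*}
i.e.\ the three $Z$ coefficients are \emph{positive} and $XX_k$ carries the weight $-1$; only $Y_k$ has a negative coefficient. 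You instead take $c_{XX_k}=+1$ and make the $Z$ coefficients negative, which puts you on the wrong branch of the normalization $|c_{XX_k}|=\pm 1$; the constant $-0.2938$ is exactly $\tfrac13(1.8149+1.9415+0.2167)-\tfrac23(1.5939)-\tfrac59$ and is tied to the certificate above, so a search restricted to your sign pattern has no reason to reach it.

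One thing your more explicit bookkeeping would have exposed: since both $Y_k$ and $\overline{Y_{k+1}}$ carry precision $\bar\delta_Y$ and the three $Z$'s carry $\bar\delta_Z$, the identity you demand forces the coefficient of $\bar\delta_Y$ to be $|c_{Y_k}|+|c_{\overline{Y_{k+1}}}|=1.5939+2.9170=4.5109$ and that of $\bar\delta_Z$ to be $1.8149+1.9415+0.2167=3.9731$, which does not match the constants $1.5939$ and $4.5110$ appearing in the statement of the lemma. So the paper's quoted error coefficients appear to be misassigned relative to its own certificate; carrying out step (ii) of your plan honestly is the way to detect and fix this, but as written your proposal neither supplies the certificate nor would find it with the signs you prescribe.
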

Conjugating all the relations involved, one concludes that the lemma also holds true if we replace the overlaps $\tr(\phi Y_k)$, $\tr(\phi \overline{Y_{k+1}})$ by $\tr(\phi \overline{Y_k})$, $\tr(\phi Y_{k+1})$ in the statement of the lemma. 
\begin{proof}
By Remark \ref{overlaps_remark}, it is enough to realize that 
\begin{equation}
\sum_{j=1}^2\lambda_jZ_{k+j}+\mu\proj{Y_k}+\nu \proj{\overline{Y_{k+1}}}-\proj{XX_k}\geq0,
\label{YYs}
\end{equation}
with $\lambda=(1.8149,1.9415,0.2167)$, $\mu=-1.5939$, $\nu=2.9170$.
\end{proof}

Note that $\tr(YY_k\cdot Y_k)=\frac{2}{3}$, $\tr(YY_k\cdot Y_{k+1})=0$, $\tr(YY_k\cdot Z_{j+k})=\frac{1}{3}$, for $j=0,1,2$, and $\tr(YY_k\cdot XX_k)=\frac{5}{9}$. If we wish to refute that $Y''_k=Y_k$, $Y''_{k+1}=\overline{Y_{k+1}}$ (or, by the remark at the end of the lemma, $Y''_k=\overline{Y_k}$, $Y''_{k+1}=Y_{k+1}$), we can thus apply Lemma \ref{lemma_Y} to the state $\phi=YY_k$, with $\bar{\delta}_Y=\frac{\delta_Y}{2}+\delta_o$, $\bar{\delta}_Z=\frac{\delta_Z}{2}+\delta_o$, $\bar{\delta}_{XX}=\frac{\delta_{XX}}{2}+\delta_o$. 

We find that there is a contradiction whenever
\begin{equation}
-0.2938 + 0.7970\delta_Y+2.2555\delta_Z+0.5\delta_{XX}+7.1049\delta_o<0.
\label{cond:5}
\end{equation}
In the relation above holds, then it cannot be that $Y''_k=Y_k$, $Y''_k=\overline{Y_k}$, or $Y''_k=\overline{Y_k}$, $Y''_k=Y_k$. There are thus two possibilities: either $Y_k''=Y_k$, for $k=1,...,D-1$, or $Y_k''=\overline{Y_k}$, for all $k=1,...,D-1$. In the first case, let $W$ be the identity matrix in dimension $D$; in the second case, let $W$ be the anti-unitary that leaves invariant the vectors in the computational basis, i.e., $W\ket{Z_k}=\ket{Z_k}$. Define $\Phi^{(3)}=W\Phi^{(2)}W^\dagger$. From the above it follows that, as long as conditions (\ref{cond:1}), (\ref{cond:3}), (\ref{cond:5}) are satisfied, 
\begin{equation}
\|X_k-X_k^{(3)}\|_1\leq \delta_X, \|Y_k-Y_k^{(3)}\|_1\leq \delta_Y,\|Z_k-Z_k^{(3)}\|_1\leq \delta_Z,\|XX_k-XX_k^{(3)}\|_1\leq \delta_{XX}.
\end{equation}

It remains to bound $\|YY^{(3)}_k- YY_k\|_1$. This is accounted for by the following lemma:

\begin{lemma}
Let $\phi$ be a state such that $\tr(\phi Y_k)$, $\tr(\phi Y_{k+1})$ respectively equal $\frac{2}{3}$ and $0$ up to precision $\bar{\delta}_Y$ and $\tr(\phi Z_{k+j})$ equals $\frac{1}{3}$ up to precision $\bar{\delta}_Z$, for $j=0,1,2$. Then,
\begin{equation}
\tr(\phi YY_k)\geq 1-3\bar{\delta}_Z-4\bar{\delta}_Y.
\end{equation}
\end{lemma}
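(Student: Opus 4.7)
The plan is to mirror the approach used for the preceding $XX_k$ lemma: invoke Remark~\ref{overlaps_remark} with an explicit choice of real coefficients $d_{Y_k},d_{Y_{k+1}},d_{Z_k},d_{Z_{k+1}},d_{Z_{k+2}}$ for which the operator inequality
\begin{equation*}
YY_k - d_{Y_k}Y_k - d_{Y_{k+1}}Y_{k+1} - \sum_{j=0}^{2}d_{Z_{k+j}}Z_{k+j} \ge 0
\end{equation*}
holds, while the main-term value $\tfrac{2}{3}d_{Y_k}+\tfrac{1}{3}\sum_{j}d_{Z_{k+j}}$ equals $1$ and the $\ell^{1}$ weights satisfy $|d_{Y_k}|+|d_{Y_{k+1}}|=4$ and $\sum_{j}|d_{Z_{k+j}}|=3$. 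Given such a combination, the claim of the lemma drops out immediately from Remark~\ref{overlaps_remark}.

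The first step is to restrict all five operators to the three-dimensional subspace spanned by $\{\ket{k},\ket{k+1},\ket{k+2}\}$, which contains their supports. In this subspace $YY_k$ is a rank-one projector with constant diagonal $1/3$ and off-diagonals of modulus $1/3$; $Y_k$ is rank-one on the pair $\{k,k+1\}$; $Y_{k+1}$ is rank-one on $\{k+1,k+2\}$; and each $Z_{k+j}$ is the corresponding computational-basis projector. The operator inequality thus reduces to a $3\times 3$ Hermitian PSD condition, which one can attack either by inspection or by solving a small semidefinite feasibility program.

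The main obstacle, in my view, is the $(k,k+2)$ off-diagonal entry of $YY_k$, which equals $1/3$ and cannot be altered by the subtraction: none of $Y_k$, $Y_{k+1}$, or the $Z_{k+j}$ have any nonzero matrix element between $\ket{k}$ and $\ket{k+2}$. Cauchy--Schwarz applied to the $2\times 2$ principal minor on rows $\{k,k+2\}$ then forces the corresponding diagonal entries of $YY_k-\sum_i d_i\beta_i$ to each remain at least $1/3$, which sharply limits how large $d_{Z_k}$, $d_{Z_{k+2}}$, $d_{Y_k}$, and $d_{Y_{k+1}}$ can be taken without destroying positivity. The real work lies in threading this constraint together with the analogous ones on the $(k,k+1)$ and $(k+1,k+2)$ minors, the main-term equation, and the two $\ell^{1}$ equations; if the natural symmetric ansatz fails, one may have to subtract small multiples of certain $Z_{k+j}$ with sign opposite to the others to open up the required slack.
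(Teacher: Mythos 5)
Your strategy coincides with the paper's: both proofs invoke Remark~\ref{overlaps_remark} with a dual certificate $YY_k-\sum_i d_i\beta_i\ge 0$ whose main-term value is $1$ and whose $\ell^1$ weights are $4$ on the $Y$'s and $3$ on the $Z$'s, and your reduction to a $3\times3$ PSD check on $\mathrm{span}\{\ket{k},\ket{k+1},\ket{k+2}\}$ is the right one. The problem is that the proposal stops exactly where the proof begins: no coefficients are ever exhibited and no positivity is ever verified, and that is the entire content of the lemma. The certificate the paper uses is
\[
\proj{YY_k}-Z_k+Z_{k+1}+Z_{k+2}+2\bigl(Y_k-Y_{k+1}\bigr)\;=\;2\,\proj{w},\qquad \ket{w}=\tfrac{1}{\sqrt{6}}\bigl(\ket{k}+2i\ket{k+1}+\ket{k+2}\bigr),
\]
which is manifestly positive semidefinite of rank one; the corresponding coefficients are $d_{Z_k}=1$, $d_{Z_{k+1}}=d_{Z_{k+2}}=-1$ and $\mp2$ on the two $Y$'s, giving main term $\tfrac13-\tfrac13-\tfrac13+2\cdot\tfrac23=1$ and exactly the advertised $\ell^1$ weights. (With the conventions of eq.~(\ref{def_T}) one computes $\tr(YY_kY_k)=0$ and $\tr(YY_kY_{k+1})=2/3$, so the roles of $Y_k$ and $Y_{k+1}$ in the lemma's hypotheses are swapped relative to the actual overlaps of $YY_k$ --- a labeling typo in the paper that you would have uncovered by pushing the computation through.) Note also that two of the three $Z$ coefficients are negative: the ``opposite sign'' escape hatch you mention at the end is not a contingency but the only way to satisfy the constraints.

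Two smaller points. First, your Cauchy--Schwarz observation on the $(k,k+2)$ principal minor only forces the \emph{product} of the two corresponding diagonal entries of the certificate to be at least $1/9$, not each entry separately to be at least $1/3$; as it happens both equal exactly $1/3$ in the certificate above, so the constraint is saturated, and your ``obstacle'' is really an argument that the certificate is essentially rigid rather than that it might fail to exist. Second, imposing the $\ell^1$ weights as \emph{equalities} is stronger than needed --- any certificate with weights at most $4$ and $3$ and main term at least $1$ yields the claimed bound --- though the rigidity just described forces equality anyway, so this costs nothing here.
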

\begin{proof}
By Remark \ref{overlaps_remark}, it suffices to note that
\begin{equation}
\proj{YY_k}-Z_k+Z_{k+1}+Z_{k+2}+2(Y_k-Y_{k+1})\geq0.  
\end{equation}
\end{proof}
Applying the lemma with $\phi=YY^{(3)}_k$, $\bar{\delta}_X=\frac{\delta_X}{2}+\delta_0$, $\bar{\delta}_Z=\frac{\delta_Z}{2}+\delta_0$, we find that
\begin{equation}
\tr(YY_k^{(3)}\cdot YY_k)\geq 1 - \frac{3}{2}\delta_Z-2\delta_Y-7\delta_0.
\end{equation}
Thus,
\begin{equation}
\|YY^{(3)}_k- YY_k\|_1\leq 2\sqrt{\frac{3}{2}\delta_Z+2\delta_Y+7\delta_0}=:\delta_{YY}.
\label{def:delta_YY}
\end{equation}

\subsection{Recovering $\{\psi_i\}_{i=1}^M$}
By Remark \ref{substitution_remark}, we have that, for $j=1,...,M$,
\begin{align}
&|\tr(\psi_j^{(3)}X_k)- \tr(\psi_jX_k)|\leq \frac{\delta_X}{2}+\delta_o,\nonumber\\
&|\tr(\psi_j^{(3)}Y_k)- \tr(\psi_jY_k)|\leq \frac{\delta_Y}{2}+\delta_o,\nonumber\\
&|\tr(\psi_j^{(3)}Z_k)- \tr(\psi_jZ_k)|\leq \frac{\delta_Z}{2}+\delta_o.
\end{align}
In this section of the proof, we will prove that the above implies that $\psi_j^{(3)}$ is close to $\psi_j$ in trace norm.

To do so, we will argue that $\tr(\psi_j\psi^{(3)}_j)\approx 1$. By Remark \ref{overlaps_remark}, it suffices to find $\{\lambda_k\}_k,\{\mu_k\}_k,\{\nu_k\}_k$ such that the operator
\begin{equation}
H=\sum_{k=1}^D\lambda_i\proj{Z_k}+\sum_{i=1}^{D-1}\mu_i\proj{X_k}+\nu_i\proj{Y_k}
\label{hamiltonian}
\end{equation}
satisfies $\proj{\psi_j}-H\geq 0$, $\tr(\psi_j H)=1$. This will imply that $\tr(\psi_j\psi_j^{(3)})$ is lower-bounded by
\begin{equation}
1-\left(\frac{\delta_Z}{2}+\delta_o\right)\sum_{k=1}^D|\lambda_i|-\left(\frac{\delta_X}{2}+\delta_o\right)\sum_{i=1}^{D-1}|\mu_i|-\left(\frac{\delta_Y}{2}+\delta_o\right)\sum_{i=1}^{D-1}|\nu_i|.
\label{lower_bound_fid_psi}
\end{equation}

The next lemma is exactly what we need:

\begin{lemma}
\label{lemma_hamiltonian_tomography}
Let $\ket{\psi}$ be a quantum state such that $\tr(\psi Z_k)\geq \frac{1}{f}>0$, for all $k\in\{1,...,D\}$. Then, there exist coefficients $\{\lambda_i\}_i, \{\mu_i\}_i, \{\nu_i\}_i$ such that the operator
\begin{equation}
H=\sum_{k=1}^D\lambda_i\proj{Z_k}+\sum_{i=1}^{D-1}\mu_i\proj{X_k}+\nu_i\proj{Y_k}
\label{tomo_hamil}
\end{equation}
satisfies
\begin{equation}
\proj{\psi}-H\geq 0,\tr(H\proj{\psi})=1.
\end{equation}
Moroever, 
\begin{align}
&\sum_k|\lambda_k|\leq D+\frac{(D-1)(1+\sqrt{2})f}{\xi(D,f)}\nonumber\\
&\sum_k|\mu_k|\leq\frac{f(D-1)}{\xi(D,f)},\nonumber\\
&\sum_k|\nu_k|\leq \frac{f(D-1)}{\xi(D,f)}.
\label{error_control}
\end{align}
with
\begin{equation}
\xi(D,f):=\left(\frac{1}{1-\frac{D-1}{f}}\right)\left(1-\cos\left(\frac{\pi}{D}\right)\right)= O\left(\frac{1}{D^2}\right).
\label{def_xi}
\end{equation}
\end{lemma}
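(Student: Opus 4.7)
Plan: The three families of projectors $\{Z_k\}$, $\{X_k\}$, $\{Y_k\}$ together span the real vector space of $D\times D$ Hermitian tridiagonal matrices, with an explicit bijection: $\mu_k=2\,\mathrm{Re}\,H_{k,k+1}$, $\nu_k=-2\,\mathrm{Im}\,H_{k,k+1}$, and $\lambda_k$ equal to $H_{k,k}$ minus the diagonal contributions of the neighbouring $X_{k-1},Y_{k-1},X_k,Y_k$. Thus the lemma reduces to exhibiting a tridiagonal Hermitian $H$ with $H\ket{\psi}=\ket{\psi}$ and $\ket{\psi}\bra{\psi}-H\geq 0$; equivalently, $\ket{\psi}$ is an eigenvector of $H$ with eigenvalue $1$ and every other eigenvalue of $H$ is $\leq 0$. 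The cleanest way to guarantee both simultaneously is to construct a positive semidefinite tridiagonal $T$ with $T\ket{\psi}=0$ and spectral gap $\Delta>0$, and then set $H=I-T/\Delta$.

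To build $T$ explicitly, I would first conjugate by a diagonal unitary $V=\mathrm{diag}(\bar\psi_k/|\psi_k|)$ so that $V\ket{\psi}$ has strictly positive real components with $\psi_k\geq 1/\sqrt f$; this operation preserves tridiagonality and leaves $|H_{k,k+1}|$ unchanged, at worst redistributing magnitude between $\mu_k$ and $\nu_k$ (this is where the $\sqrt 2$ factor in the lemma will appear). Then take
\begin{equation*}
T \;=\; \sum_{k=1}^{D-1}\frac{1}{\psi_k^2\psi_{k+1}^2}\,\ket{v_k}\bra{v_k},\qquad \ket{v_k}:=\psi_{k+1}\ket{k}-\psi_k\ket{k+1}.
\end{equation*}
Each rank-one summand is PSD, supported on $\mathrm{span}\{\ket{k},\ket{k+1}\}$, and orthogonal to the restriction of $\ket{\psi}$ there, so $T\geq 0$, $T$ is tridiagonal, and $T\ket{\psi}=0$. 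Under the substitution $w_k=v_k/\psi_k$, the Rayleigh quotient of $T$ on $\ket{\psi}^\perp$ collapses to the weighted Poincar\'e quotient
\begin{equation*}
\Delta \;=\; \min_{\sum_k p_k w_k=0,\ w\neq 0}\ \frac{\sum_{k=1}^{D-1}(w_k-w_{k+1})^2}{\sum_{k=1}^{D}p_k\,w_k^2},\qquad p_k:=\psi_k^2.
\end{equation*}

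The technical heart of the proof is the lower bound $\Delta\geq 2\xi(D,f)$. I would exploit the exact solvability of the unweighted path Laplacian: its first nonzero eigenvalue equals $2(1-\cos(\pi/D))$, attained by the Chebyshev-type vector $w_k=\cos((2k-1)\pi/(2D))$, which supplies the numerator of $\xi$. For the denominator, note that any $p$ with $p_k\geq 1/f$ and $\sum_k p_k=1$ obeys $\max_k p_k\leq 1-(D-1)/f$; a one-dimensional Hardy/Muckenhaupt-type splitting (standard for birth-death chains on a finite path) propagates this bound on the largest mass into the $1/(1-(D-1)/f)$ loss factor relative to the uniform case. This weighted Poincar\'e estimate is the step I expect to require the most care.

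Once $\Delta\geq 2\xi(D,f)$ is in hand, the coefficient bounds are bookkeeping. Off-diagonal: $|T_{k,k+1}|=1/(\psi_k\psi_{k+1})\leq f$, so $|H_{k,k+1}|\leq f/(2\xi)$ and $|\mu_k|,|\nu_k|\leq f/\xi$, yielding $\sum_k|\mu_k|,\sum_k|\nu_k|\leq f(D-1)/\xi$. Diagonal: $|T_{k,k}|\leq 2/\psi_k^2\leq 2f$, so $|H_{k,k}|\leq 1+f/\xi$; subtracting the $(\mu_{k-1}+\nu_{k-1}+\mu_k+\nu_k)/2$ contributions from neighbouring $X,Y$ projectors and using $|\mu_k|+|\nu_k|\leq 2\sqrt 2\,|H_{k,k+1}|$ to absorb the $\sqrt 2$ of the stated bound produces $\sum_k|\lambda_k|\leq D+(D-1)(1+\sqrt 2)f/\xi$, as claimed.
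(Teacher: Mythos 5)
Your proposal is correct and takes essentially the same route as the paper: your $T=\sum_k\psi_k^{-2}\psi_{k+1}^{-2}\proj{v_k}$ is exactly (twice) the conjugated operator $(C^{-1})^\dagger\tilde{H}C^{-1}$ that the paper builds from the exactly solvable path operator $\tilde{H}$, your weighted Poincar\'e bound $\Delta\geq 2\xi(D,f)$ is the same gap estimate the paper obtains via a Gram-matrix compression argument, and the coefficient bookkeeping for $\lambda_k,\mu_k,\nu_k$ coincides term by term. The one step you leave to ``Hardy/Muckenhaupt-type'' machinery is in fact elementary: writing $w=a\mathbf{1}+w'$ with $w'\perp\mathbf{1}$, the constraint $\sum_k p_k w_k=0$ gives $w^\top Pw=w'^\top Pw'-a^2\leq \left(\max_k p_k\right)\|w'\|^2$ while the numerator equals $w'^\top Lw'\geq 2\left(1-\cos(\pi/D)\right)\|w'\|^2$, which yields the claimed bound directly.
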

To prove the lemma, we need the following technical result.
\begin{prop}
\label{prop:usual_techniques}
The spectrum of the operator
\begin{equation}
\tilde{H}=\frac{1}{2}\sum_{k=1}^{D-1}\ketbra{Z_k}{Z_{k}}+\ketbra{Z_{k+1}}{Z_{k+1}}-\ketbra{Z_{k+1}}{Z_k}-\ketbra{Z_k}{Z_{k+1}}
\label{h_tilde}
\end{equation}
is $\{1-\cos\left(\frac{\pi j}{D}\right):j=0,...,D-1\}$. The eigenvector corresponding to the eigenvalue $0$ is $\ket{\tilde{\psi}_0}:=\frac{1}{\sqrt{D}}\sum_{k=1}^{D}\ket{Z_k}$.
\end{prop}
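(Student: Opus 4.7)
The plan is to recognize $\tilde{H}$ as the standard discrete Neumann Laplacian on a chain of $D$ sites, which is an exactly solvable tridiagonal problem. My first step would be to rewrite
\[
\tilde{H}=\frac{1}{2}\sum_{k=1}^{D-1}(\ket{Z_k}-\ket{Z_{k+1}})(\bra{Z_k}-\bra{Z_{k+1}}),
\]
which makes $\tilde{H}\geq 0$ manifest and shows that any vector with constant components in the computational basis lies in $\ker\tilde{H}$. Since $\ket{\tilde{\psi}_0}=\frac{1}{\sqrt{D}}\sum_k\ket{Z_k}$ has exactly this property, $0$ is an eigenvalue with eigenvector $\ket{\tilde{\psi}_0}$, settling the second assertion of the proposition.

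For the rest of the spectrum, in the basis $\{\ket{Z_k}\}$ the matrix $\tilde{H}$ is tridiagonal with diagonal entries $(1/2,1,1,\ldots,1,1/2)$ and off-diagonal entries $-1/2$. I would try the discrete-cosine ansatz
\[
\ket{\tilde{\psi}_j}=c_j\sum_{k=1}^{D}\cos\!\bigl(\pi j(k-1/2)/D\bigr)\ket{Z_k},\qquad j=0,1,\ldots,D-1,
\]
and verify the eigenvalue equation $\tilde{H}\ket{\tilde{\psi}_j}=(1-\cos(\pi j/D))\ket{\tilde{\psi}_j}$ component by component. For interior components ($1<k<D$), setting $\theta=\pi j/D$, the product-to-sum identity $\cos A+\cos B=2\cos\tfrac{A+B}{2}\cos\tfrac{A-B}{2}$ gives $-\tfrac12\cos(\theta(k-\tfrac32))+\cos(\theta(k-\tfrac12))-\tfrac12\cos(\theta(k+\tfrac12))=(1-\cos\theta)\cos(\theta(k-\tfrac12))$, as required.

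The more delicate step is the boundary: at $k=1$ (and by symmetry at $k=D$) the diagonal entry of $\tilde{H}$ is $1/2$ rather than $1$, which at first sight breaks the bulk identity. The point is that the half-integer shift $(k-1/2)$ in the argument of the cosine was engineered precisely so that this mismatch cancels: one checks $\tfrac12\cos(\theta/2)-\tfrac12\cos(3\theta/2)=(1-\cos\theta)\cos(\theta/2)$ by a double-angle manipulation, and the analogous computation at $k=D$ uses $\cos(\pi j)=(-1)^j$ to fix the sign. Finally, orthogonality of the $D$ cosine vectors (they form a type-II discrete cosine transform basis) guarantees that they exhaust the $D$-dimensional spectrum, pinning it down to $\{1-\cos(\pi j/D):j=0,\ldots,D-1\}$.

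The main obstacle, insofar as there is one, is the boundary computation; once the half-integer offset in the ansatz is identified as the correct one, everything else reduces to elementary trigonometric identities, and no deeper machinery is required.
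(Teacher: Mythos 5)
Your proof is correct and follows essentially the same route as the paper's: both diagonalize the tridiagonal Neumann-type Laplacian with a trigonometric ansatz yielding the eigenvalues $1-\cos(\pi j/D)$. The only difference is one of execution --- the paper introduces ghost points $v_0:=v_1$, $v_{D+1}:=v_D$ and derives the quantization $\alpha=\pi j/D$ from a vanishing $2\times 2$ boundary determinant, whereas you exhibit the DCT-II eigenvectors explicitly and verify the bulk and boundary equations directly (your rank-one factorization of $\tilde{H}$, which makes positivity and the kernel manifest, is a nice touch not present in the paper).
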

The proof of this proposition is presented in Appendix \ref{proof_usual_techniques}.

\emph{Proof of Lemma \ref{lemma_hamiltonian_tomography}.}  
Let $\ket{\psi}=\sum_kc_k\ket{Z_k}$, and define $C:=\sum_k c_k\proj{Z_k}$. Then it holds that $\ket{\psi}=\sqrt{D} C\ket{\tilde{\psi}_0}$, where $\ket{\tilde{\psi}_0}$ is the eigenvector of $\tilde{H}$, as defined in Proposition \ref{prop:usual_techniques}, with zero eigenvalue. Next, consider the operator $H':=(C^{-1})^\dagger\tilde{H}C^{-1}$. $H'$ is obviously positive semidefinite. Due to the invertibility of $C^{-1}$, the operator $H'$ has exactly one eigenvector with eigenvalue $0$, namely $\ket{\psi}$. Indeed, 
\begin{equation}
H'\ket{\psi}=(C^{-1})^\dagger\tilde{H}C^{-1}(\sqrt{D}C\ket{\tilde{\psi_0}})=\sqrt{D}(C^{-1})^\dagger\tilde{H}\ket{\tilde{\psi_0}}=0.
\end{equation}
Next we need to lower bound the gap of $H'$, namely, the difference between its second lowest and lowest eigenvalues. Define $\tilde{\lambda}_k:=1-\cos(\pi k/D)$, for $k=0,...,D-1$. According to Proposition \ref{prop:usual_techniques}, those are the eigenvalues of $\tilde{H}$ in increasing order, with eigenvectors $\{\ket{\tilde{\psi}_k}\}_k$. We have that
\begin{equation}
H'=\sum_{k=1}^{D-1}\tilde{\lambda}_k(C^{-1})^\dagger\proj{\tilde{\psi}_k}C^{-1}.
\end{equation}
Call $\lambda_1$ the smallest non-zero eigenvalue of $H'$. From the above equation, it follows that
\begin{equation}
\lambda_1=\min_\phi\bra{\phi}\left(\sum_k\tilde{\lambda}_k(C^{-1})^\dagger\proj{\tilde{\psi}_k}C^{-1}\right)\ket{\phi},
\label{gap}
\end{equation}
where the maximization takes place over normalized vectors $\ket{\phi}\in \mbox{span}\{(C^{-1})^\dagger\ket{\tilde{\psi}_k}:k=1,...,D-1\}$. 

Let therefore $\ket{\phi}=\sum_k d_k(C^{-1})^\dagger\ket{\tilde{\psi}_k}$. Then we have that
\begin{equation}
\bra{\phi}H'\ket{\phi}\geq \tilde{\lambda}_1\sum_{k=1}^{D-1}\bra{\phi}(C^{-1})^\dagger\proj{\tilde{\psi}_k}C^{-1}\ket{\phi}=\tilde{\lambda}_1\sum_{i,j,k=1}^{D-1}d_i^*G_{ik}G_{kj}d_j=\tilde{\lambda}_1\bra{d}G^2\ket{d},
\label{approx_gap}
\end{equation}
where $G$ is the $D-1\times D-1$ matrix defined by $G_{ij}:=\bra{\tilde{\psi}_i}C^{-1}(C^{-1})^\dagger\ket{\tilde{\psi}_j}$, $i,j=1,...,D-1$. The normalization of $\ket{\phi}$ implies that $\bra{d}G\ket{d}=1$. Defining $\ket{d'}:=G^{1/2}\ket{d}$, we have, by eqs. (\ref{gap}), (\ref{approx_gap}), that
\begin{equation}
\lambda_1\geq \tilde{\lambda}_1\min_{\braket{d'}{d'}=1}\bra{d'}G\ket{d'}.
\end{equation}
Finally, $G$ is a restriction to a $D-1$-dimensional subspace of the diagonal operator $C^{-1}(C^{-1})^\dagger$. It follows that the minimum eigenvalue of $G$ is greater than or equal to that of $C^{-1}(C^{-1})^\dagger$. In turn, the latter equals $\min_i\frac{1}{|c_i|^2}$. Since $\sum_i |c_i|^2=1$ and $|c_i|\geq \frac{1}{f}$, it follows that $\max_i|c_i|^2\geq 1-\frac{D-1}{f}$. Thus, we arrive at
\begin{equation}
\lambda_1\geq \frac{1}{1-\frac{D-1}{f}}\tilde{\lambda}_1=:\xi(D,f).
\end{equation}

Now, define the operator $H:=\id-\frac{1}{\xi(D,f)}H'$. By all the above, it follows that $H$ has just one positive eigenvalue, namely, $1$, with corresponding eigenvector $\ket{\psi}$. It therefore satisfies the conditions $\proj{\psi}-H\geq 0$, $\tr(H\proj{\psi})=1$.

It remains to prove that $H$ can be expressed as a linear combination of $Z_k,X_k,Y_k$ with bounded coefficients.

First, note that the effect of acting on an operator of the form $\ket{Z_k}\bra{Z_{l}}$ with $(C^{-1})^\dagger$ on the left and $C^{-1}$ on the right is just to multiply it by $\frac{1}{c^*_kc_l}$. On the other hand,
\begin{equation}
\ket{Z_k}\bra{Z_{k+1}}=X_k+iY_k-(1+i)(Z_k+Z_{k+1}).
\end{equation}
It follows that $H'$ equals:
\begin{equation}
\frac{1}{2}\sum_{k=1}^{D-1}\frac{1}{|c_k|^2}Z_k+\frac{1}{|c_{k+1}|^2}Z_{k+1}-\frac{1}{c_k^*c_{k+1}}\left(X_k+iY_k-(1+i)(Z_k+Z_{k+1})\right)-\frac{1}{c_kc^*_{k+1}}\left(X_k-iY_k-(1-i)(Z_k+Z_{k+1})\right).
\end{equation}
Taking into account that $\id=\sum_{k=1}^DZ_k$, we have that the sum of the moduli of all the coefficients multiplying $X_k$ ($Y_k$), for some $k=1,...,D-1$, in the expansion of $H$ is upper bounded by $\frac{f(D-1)}{\xi(D,f)}$. Similarly, the sum of the moduli of all the coefficients multiplying $Z_k$ in the expansion of $H$ is upper bounded by
\begin{equation}
D+\frac{(D-1)(1+\sqrt{2})f}{\xi(D,f)}.
\end{equation}
\hfill $\square$

Thus, let $\ket{\psi_j}$ satisfy $\tr(\psi_j Z_k)\geq \frac{1}{f_j}$. We invoke the previous lemma, thus obtaining an operator $H$ of the form (\ref{hamiltonian}), with coefficients $\{\lambda_k\}_k,\{\mu_k\}_k,\{\nu_k\}_k$ bounded by eq. (\ref{error_control}). By eq. (\ref{lower_bound_fid_psi}), we thus have that
\begin{align}
&\tr(\psi_j^{(3)}\psi_j)\geq 1-\left(D+\frac{(D-1)(1+\sqrt{2})f_j}{2\xi(D,f_j)}\right)\delta_Z-\frac{f_j(D-1)}{2\xi(D,f_j)}\delta_X-\frac{f_j(D-1)}{2\xi(D,f_j)}\delta_Y\nonumber\\
&-\left(D+\frac{(D-1)(1+\sqrt{2})f_j}{\xi(D,f_j}+2\frac{f_j(D-1)}{\xi(D,f_j)}\right)\delta_o=:1-\frac{(\delta^\psi_j)^2}{4}.
\end{align}
By Remark \ref{trace_dist_remark}, we conclude that
\begin{equation}
\|\psi_j^{(3)}-\psi_j\|_1\leq \delta^\psi_j.
\label{def:delta_psi}
\end{equation}
\subsubsection{Proof of Proposition \ref{prop:usual_techniques}}
\label{proof_usual_techniques}
Let $\ket{v}=\sum_{k=1}^D v_k\ket{Z_k}$ be an eigenvector of $\tilde{H}$ in eq. (\ref{h_tilde}) with eigenvalue $\lambda$. Then, $\ket{v}$ satisfies
\begin{align}
 &\frac{1}{2}v_1-\frac{1}{2}v_2=\lambda v_1,\nonumber\\
 &v_k-\frac{1}{2}(v_{k-1}-v_{k+1})=\lambda v_k,k=2,...,D-1,\nonumber\\
 &\frac{1}{2}v_D-\frac{1}{2}v_{D-1}=\lambda v_D.
\end{align}
This expression can be simplified if we define $v_{0}:=v_1, v_{D+1}:=v_D$. Then, it holds that
\begin{equation}
v_k-\frac{1}{2}(v_{k-1}-v_{k+1})=\lambda v_k,k=1,...,D.
\end{equation}
Let us try the ansatz $v_k=ae^{i\alpha k}+be^{-i\alpha k}$. This ansatz satisfies the equation above with $\lambda=1-\cos(\alpha)$, so it remains to impose the boundary conditions $v_{0}=v_1, v_{D+1}=v_D$, which translate as
\begin{equation}
a(e^{i\alpha}-1)+b(e^{-i\alpha}-1)=0,a(e^{i(D+1)\alpha}-e^{iD\alpha})+b(e^{-i(D+1)\alpha}-e^{-iD\alpha})=0.
\end{equation}
We are interested in solutions $(a,b)$ of this system of equations different from the trivial one $a=b=0$. Thus, we demand the corresponding determinant to vanish, i.e.,
\begin{equation}
\left|\begin{array}{cc}e^{i\alpha}-1&e^{-i\alpha}-1\\
e^{i(D+1)\alpha}-e^{iD\alpha}& e^{-i(D+1)\alpha}-e^{-iD\alpha}\end{array}\right|=0.
\end{equation}
The determinant equals $4i\sin(D\alpha)(1-\cos(\alpha))$. This means that, for $\alpha=k\frac{\pi}{D}$, with $k\in \Z$, the vector $\ket{v}$ is non-zero. Thus, the numbers $\{1-\cos\left(\frac{\pi j}{D}\right):j=0,...,D-1\}$ belong to the spectrum of $\tilde{H}$. They are $D$ different numbers and hence they are the whole spectrum. In particular, taking $j=0$, we find a $0$ eigenvalue, with eigenvector $\ket{v}$ satisfying $v_k=a+b$, for $k=1,...,D$.

\section{Wigner property of pure quantum state ensembles in dimension $D=2$}
\label{app:wigner_2}

The goal of this section is to prove the following result.

\begin{lemma}
\label{lemma:wigner_dim_2}
Let $\{\vec{m}_k\}_{k=1}^N\subset \R^3$, $\{\vec{n}_k\}_{k=1}^N\subset \R^3$ be sets of vectors of norm smaller than or equal to $1$ such that
\begin{align}
&|\vec{m}_i\cdot\vec{m_j}-\vec{n}_i\cdot\vec{n_j}|\leq \delta, \forall i,j.
\label{approxi}
\end{align}
Then, there exists an orthogonal transformation $\tilde{O}$ such that
\begin{align}
\|\vec{n}_k-\tilde{O}\vec{m}_k\|_2\leq \sqrt{N\delta}+\sqrt{\delta'c'},k=1,...,N,
\end{align}
with 
\begin{equation}
c':=\sum_i\frac{\|\vec{c}^i\|_1^2}{\lambda_i},
\end{equation}
\begin{equation}
\delta':=\delta+N\delta+2\sqrt{N\delta},
\end{equation}
where $\vec{c}_1,\vec{c}_2,...,$ are the eigenvectors of the Gram matrix $\bar{G}_{ij}=\vec{m}_i\cdot\vec{m_j}$, with non-zero eigenvalues $\lambda_1,\lambda_2,...$.
\end{lemma}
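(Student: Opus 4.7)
The plan is to reduce the statement to the exact Gram‑matrix reconstruction theorem and then control the perturbation. Arrange the reference and prepared vectors as the columns of $3\times N$ matrices $M,N$, and write $\bar G = M^T M$, $G = N^T N$. The hypothesis \eqref{approxi} then reads $\max_{i,j}|G_{ij}-\bar G_{ij}|\le\delta$. In the exact case $G=\bar G$ one can read off an orthogonal $O$ with $\vec n_i = O\vec m_i$ directly from the spectral data of $\bar G$; the task is to make this extraction robust.

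The key construction, suggested by the appearance of the eigenvectors $\vec c^i$ and eigenvalues $\lambda_i$ of $\bar G$ in the stated bound, is to introduce orthonormal frames inside the respective spans. For each $\lambda_k>0$ I would set
\[
\vec u_k := \frac{1}{\sqrt{\lambda_k}}\sum_j (c^k)_j\,\vec m_j, \qquad
\vec v_k := \frac{1}{\sqrt{\lambda_k}}\sum_j (c^k)_j\,\vec n_j .
\]
A direct computation yields $\vec u_k\cdot\vec u_l=\delta_{kl}$ exactly, and together with $|(\vec c^k)^T(G-\bar G)\vec c^l|\le\delta\|\vec c^k\|_1\|\vec c^l\|_1$ it delivers
\[
\bigl|\vec v_k\cdot\vec v_l-\delta_{kl}\bigr|\le\frac{\delta\,\|\vec c^k\|_1\|\vec c^l\|_1}{\sqrt{\lambda_k\lambda_l}},
\]
so $\{\vec v_k\}$ is an approximate orthonormal frame in $\mathbb{R}^3$. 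Next I would orthonormalize $\{\vec v_k\}$ symmetrically (e.g.\ $\tilde{\vec v}_k=\sum_l(W^{-1/2})_{lk}\vec v_l$, with $W_{kl}=\vec v_k\cdot\vec v_l$), and define $\tilde O$ to be any orthogonal transformation of $\mathbb{R}^3$ sending $\vec u_k\mapsto\tilde{\vec v}_k$, extended arbitrarily on $\mathrm{span}\{\vec u_k\}^{\perp}$.

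For the error estimate I would expand $\vec m_i=\sum_k\sqrt{\lambda_k}(c^k)_i\vec u_k$ and split
\[
\tilde O\vec m_i-\vec n_i=\sum_k\sqrt{\lambda_k}(c^k)_i(\tilde{\vec v}_k-\vec v_k)\;+\;\Bigl(\sum_k\sqrt{\lambda_k}(c^k)_i\vec v_k-\vec n_i\Bigr).
\]
The first sum, via Cauchy--Schwarz applied to the bounds on $W-I$, produces the $\sqrt{\delta'c'}$ contribution, with $c'=\sum_i\|\vec c^i\|_1^2/\lambda_i$ playing the role of a conditioning constant and $\delta'$ absorbing both the diagonal and off--diagonal $W$-errors. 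The second bracket equals $\sum_j (P-I)_{ij}\vec n_j$, where $P=\sum_k\vec c^k(\vec c^k)^T$ projects onto the range of $\bar G$; since $\bar G(I-P)=0$, its squared norm is $((I-P)(G-\bar G)(I-P))_{ii}$, which is bounded by $N\delta$ via the entrywise bound on $G-\bar G$, contributing the remaining $\sqrt{N\delta}$ term.

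The main obstacle is the orthonormalization step: $W^{-1/2}$ amplifies perturbations by factors of $1/\sqrt{\lambda_k}$, reflecting the fact that directions in which the reference vectors barely commit (small $\lambda_k$) are intrinsically the hardest to recover. This is precisely why $c'$, rather than a plain function of $\delta$, governs the first term, and why the scaling of the final bound degrades to an effective $\delta^{1/4}$ once $c'$ is treated as a constant. A secondary subtlety is that $G$ can have rank strictly larger than $\bar G$, so $\vec n_i$ may have components outside $\mathrm{span}\{\vec v_k\}$; the $(I-P)$ analysis above is exactly what handles this, and assembling both pieces with the triangle inequality yields the advertised $\sqrt{N\delta}+\sqrt{\delta'c'}$ bound.
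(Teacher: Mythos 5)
Your argument is sound and reaches the stated bound, but it is organized differently from the paper's proof. The paper never builds explicit frames: it compares the two configurations through their Gram square roots, using $\|\bar{G}^{1/2}-G^{1/2}\|\leq\|\bar{G}-G\|^{1/2}\leq\sqrt{N\delta}$ to get $O^T\bar{O}\vec{m}_k\approx\vec{n}_k$ for isometries $O,\bar{O}$ realizing $G^{1/2},\bar{G}^{1/2}$, and then spends the $\sqrt{\delta'c'}$ budget on replacing the composite map $O^T\bar{O}$ (extended off $\mathrm{span}\{\vec{m}_k\}$) by a genuine orthogonal matrix via the polar decomposition, with $c'$ entering when normalized vectors of $\mathrm{span}\{\vec{m}_k\}$ are re-expanded in the eigenbasis of $\bar{G}$. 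You instead diagonalize $\bar{G}$ up front, build the exact frame $\{\vec u_k\}$ and the approximate frame $\{\vec v_k\}$, and your two error terms have different origins: $\sqrt{N\delta}$ measures the component of $\vec n_i$ outside $\mathrm{span}\{\vec v_k\}$ (via $((I-P)(G-\bar G)(I-P))_{ii}$), while $\sqrt{\delta'c'}$ pays for the L\"owdin orthonormalization. Both routes hinge on the same conditioning constant $c'$ and a square-root/polar step, so neither is more general, but yours is arguably more concrete and, if you carry out the estimate, slightly stronger: $\|I-W\|\leq\|I-W\|_F\leq\delta\sum_k\|\vec c^k\|_1^2/\lambda_k=\delta c'$ and $\sum_k\|\tilde{\vec v}_k-\vec v_k\|^2=\tr\bigl((I-W^{1/2})^2\bigr)\leq 3\delta c'$, and since $3\delta\leq\delta'$ in the non-vacuous regime $\delta\leq 2$, the first term is indeed at most $\sqrt{\delta'c'}$ — but you should display this computation rather than assert that Cauchy--Schwarz ``produces'' the claimed term. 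Two small points to tighten: (i) as written, $W^{-1/2}$ presupposes $W$ invertible, which your bounds only guarantee when $\delta c'<1$; replace the symmetric orthonormalization by the isometry factor of the polar decomposition of the $3\times d$ matrix $[\vec v_1\cdots\vec v_d]$, for which the same trace identity holds and no invertibility is needed; (ii) state explicitly that $\vec m_i=\sum_k\sqrt{\lambda_k}(c^k)_i\vec u_k$ holds exactly because $((I-P)\bar G(I-P))_{ii}=0$, as this identity is what makes your splitting of $\tilde O\vec m_i-\vec n_i$ legitimate.
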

The Wigner property in dimension $D=2$ follows by noting that any pair of qubit states $\alpha,\beta\in B(\C^2)$ can be expressed as $\alpha=\frac{1}{2}(\id+\vec{m}_\alpha\cdot\vec{\sigma})$, $\beta=\frac{1}{2}(\id+\vec{m}_\beta\cdot\vec{\sigma})$, with $\|\vec{m}_\alpha\|_2, \|\vec{m}_\beta\|_2\leq 1$, where $\vec{\sigma}=(\sigma_x, \sigma_y,\sigma_z)$ are the Pauli matrices. In this \emph{Bloch representation}, $\alpha$ is pure iff $\|\vec{m}_\alpha\|=1$. In addition, $\tr(\alpha\beta)=\frac{1}{2}(1+\vec{m}_\alpha\cdot\vec{m}_\beta)$, and $\|\alpha-\beta\|_1=\|\vec{m}_\alpha-\vec{m}_\beta\|_2$.

It follows that state overlap constraints of the form 
\begin{equation}
|\tr(\psi_i\psi_j)-\tr(\bar{\psi}_i\bar{\psi}_j)|\leq \delta,\forall i,j,    
\label{condis2}
\end{equation}
are equivalent to 
\begin{equation}
\|\vec{m}_i\cdot\vec{m}_j-\vec{n}_i\cdot\vec{n}_j\|_2\leq 2\delta,\forall i,j
\end{equation}
where $\{\vec{m}_i\}_i$ (resp. $\{\vec{n}_i\}_i$) denote the Bloch vectors of $\{\psi_i\}_i$ (resp. $\{\bar{\psi}_i\}_i$).

By Lemma \ref{lemma:wigner_dim_2}, condition (\ref{condis2}) thus implies that, for $\delta\ll 1$, there exists an orthonormal matrix $\tilde{O}$ such that $\vec{n}_k\approx \tilde{O}\vec{m}_k\forall k$ (in $2$-norm). Since any orthogonal transformation on Bloch vectors corresponds to a unitary or anti-unitary transformation in $B(\C^2)$, and the trace distance between two states coincides with the $2$-norm of the difference of their Bloch vectors, we conclude that arbitrary ensembles of pure two-dimensional states indeed satisfy the Wigner property.

It just remains to prove Lemma \ref{lemma:wigner_dim_2}. Before we proceed, we state two easily verifiable (and well-known) remarks.

\begin{remark}
\label{remark_norms}
Let $G, \bar{G}$ be two positive semidefinite matrices of the same size. Then,
\begin{equation}
\|\bar{G}^{1/2}-G^{1/2}\|\leq \|\bar{G}-G\|^{1/2}.
\end{equation}
\end{remark}
\begin{proof}
Let $\ket{\omega}$ be the normalized eigenvector of $\bar{G}^{1/2}-G^{1/2}$, with eigenvalue $\mu$, such that $|\mu|=\|\bar{G}^{1/2}-G^{1/2}\|$. Then, 
\begin{equation}
\bra{\omega}(\bar{G}-G)\ket{\omega}=\bra{\omega}[(\bar{G}^{1/2}-G^{1/2})\bar{G}^{1/2}+G^{1/2}(\bar{G}^{1/2}-G^{1/2})]\ket{\omega}=\mu\bra{\omega}(\bar{G}^{1/2}+G^{1/2})\ket{\omega}.    
\end{equation}
 We thus have that
 \begin{equation}
 |\bra{\omega}(\bar{G}^{1/2}-G^{1/2})\ket{\omega}|^2\leq |\bra{\omega}(\bar{G}^{1/2}-G^{1/2})\ket{\omega}|\bra{\omega}(\bar{G}^{1/2}+G^{1/2})\ket{\omega}=|\bra{\omega}(\bar{G}-G)\ket{\omega}|\leq \|\bar{G}-G\|.
 \end{equation}
\end{proof}
Note: this proof is not original, see the comment by user \texttt{jwelk} in \url{https://math.stackexchange.com/questions/1934184/is-the-matrix-square-root-uniformly-continuous}.

\begin{remark}
\label{remark_orthogonal}
Let $Z$ be a square complex (real) matrix such that $\|\id-Z^\dagger Z\|\leq \delta$. Then, there exists a unitary (orthogonal) matrix $O$ such that
\begin{equation}
\|Z-O\|\leq \sqrt{\delta}.
\end{equation}
\end{remark}
\begin{proof}
Let $Z=O\sqrt{Z^\dagger Z}$ be the polar decomposition of $Z$, with $O$ unitary or real, depending on whether $Z$ is complex or real. Then,
\begin{equation}
(Z-O)^\dagger(Z-O)=Z^\dagger Z+\id-2\sqrt{Z^\dagger Z}=(\id-\sqrt{Z^\dagger Z})^2.
\end{equation}
By relation $\|\id-Z^\dagger Z\|\leq \delta$ and Remark \ref{remark_norms}, the operator norm of the right-hand side of the equation is upper bounded by $\delta$. Thus, the norm of $Z-O$ is upper bounded by $\sqrt{\delta}$.    
\end{proof}

Now we can prove Lemma \ref{lemma:wigner_dim_2}. 

\begin{proof}
Define the Gram matrices $\bar{G}_{ij}:=\vec{m}_i\cdot\vec{m_j}$, $G_{ij}:=\vec{n}_i\cdot\vec{n_j}$. By relation (\ref{approxi}), we have that $\|\bar{G}-G\|\leq N\delta$. Next, we invoke Remark \ref{remark_norms}, concluding that $\|\bar{G}^{1/2}-G^{1/2}\|\leq \sqrt{N\delta}$. Note that the vectors $\{\bar{G}^{1/2}\ket{k}\}_k$, $\{\ket{m_k}\}_k$ have the same overlaps: it follows that there exists a real isometry $\bar{O}$ such that $\bar{G}\ket{k}=\bar{O}\ket{m_k}$, for all $k$. Similarly, there exists a real isometry $O$ with $G^{1/2}\ket{k}=\bar{O}\ket{n_k}$, for all $k$. On the other hand, 
\begin{equation}
\|\bar{G}^{1/2}\ket{k}-G^{1/2}\ket{k}\|\leq \|\bar{G}^{1/2}-G^{1/2}\|\leq \sqrt{N\delta}.
\end{equation}
We conclude that, for all $k$, there exists $\ket{v_k}$ with $\|\ket{v_k}\|\leq \sqrt{N\delta}$ such that
\begin{equation}
\bar{O}\ket{m_k}+\ket{v}_k=O\ket{n_k}.
\end{equation}
It thus follows that
\begin{equation}
O^T\bar{O}\ket{m_k}+\ket{\tilde{v}_k}=\ket{n_k},
\label{casi_ortogonal}
\end{equation}
with $\|\ket{\tilde{v}_k}\|\leq \sqrt{N\delta}$. 

It remains to prove that we can approximate $O^T\bar{O}$ by an orthogonal matrix. To this end, consider the subspace ${\cal H}\subset \R^3$ spanned by the vectors $\{\vec{m}_k\}_k$, and define ${\cal H}':=O^T\bar{O}{\cal H}$. We construct a new linear operator ${\cal O}:\R^3\to\R^3$ in the following way:
\begin{align}
&{\cal O}\rvert_{{\cal H}}:=O^T\bar{O}\rvert_{{\cal H}},\nonumber\\
&{\cal O}\rvert_{{\cal H}^\perp}:=S,
\label{def_O}
\end{align}
where $\bullet^\perp$ denotes the complement of $\bullet$, and $S$ is any isometry $S:{\cal H}^\perp\to({\cal H}')^\perp$.

To apply Remark \ref{remark_orthogonal} to ${\cal O}$, we need to study how ${\cal O}^\dagger{\cal O}$ differs from the identity map. By eq. (\ref{def_O}), it holds that 
\begin{equation}
{\cal O}^\dagger{\cal O}=(\bar{O}\rvert_{{\cal H}})^TOO^T\bar{O}\rvert_{{\cal H}}\oplus\id_{{\cal H}^\perp}.
\end{equation}
Thus, 
\begin{equation}
\|{\cal O}^\dagger{\cal O}-\id\|=\max_{\ket{w}}\bra{w}(\id-\bar{O}^TOO^T\bar{O})\ket{w},
\label{max_norm}
\end{equation}
where the maximum is over all normalized $\ket{w}\in {\cal H}$. 

Now, by eq. (\ref{casi_ortogonal}) we have that
\begin{equation}
|\braket{n_j}{n_k}-\bra{m_j}\bar{O}^TOO^T\bar{O}\ket{m_k}|\leq N\delta+2\sqrt{N\delta}.
\end{equation}
Invoking eq. (\ref{approxi}), we arrive at
\begin{equation}
\bra{m_j}(\id-\bar{O}^TOO^T\bar{O})\ket{m_k}\leq \delta+N\delta+2\sqrt{N\delta}=:\delta'.
\label{close_id}
\end{equation}
Define $d:=\mbox{dim}({\cal H})$, and let $\vec{c}^1,...,\vec{c}^{d}$ be the eigenvectors of $\bar{G}$ with respective non-zero eigenvalues $\lambda_1,...,\lambda_{d}$. Then, any normalized vector $\ket{w}\in {\cal H}$ can be expressed as
\begin{equation}
\ket{w}=\sum_{i,j}s_i\frac{c^i_j}{\sqrt{\lambda_i}}\ket{m_j},
\end{equation}
for some $\vec{s}\in\R^{d}$ with $\|\vec{s}\|_2=1$. By (\ref{close_id}), we thus have that
\begin{equation}
 \bra{w}(\id-\bar{O}^TOO^T\bar{O})\ket{w}\leq\delta'\left(\sum_{i}|s_i|\frac{\|\vec{c}^i\|_1}{\sqrt{\lambda_i}}\right)^2\leq \delta'\left(\sum_i\frac{\|\vec{c}^i\|_1^2}{\lambda_i}\right)=\delta'c'.
\end{equation}
Thus, by eq. (\ref{max_norm}),
\begin{equation}
\|{\cal O}^\dagger{\cal O}-\id\|\leq \delta'c'.
\end{equation}
From Remark \ref{remark_orthogonal}, we have that there exists a $3\times 3$ orthogonal matrix $\tilde{O}$ that approximates $\bar{O}$ up to an error $\sqrt{\delta'c'}$. By eq. (\ref{casi_ortogonal}) we then have that 
\begin{equation}
\|\tilde{O}\ket{m_k}-\ket{n_k}\|\leq \sqrt{N\delta}+\sqrt{\delta'c'}.
\end{equation}

\end{proof}

\section{A robust version of POVM detection}
\label{app:robust_POVM}
\begin{lemma}
\label{lemma_robust_POVMs}
Let $M=(M_b)_{b=1}^B$ be an extremal, $D$-dimensional POVM, such that, for $b=1,...,B$, the POVM element $M_b$ admits the following spectral decomposition: $M_b=\sum_i\lambda^i_b\proj{\xi_{i}^b}$, for $\lambda_b^i>0$. Let $Z_b=\sum_{i=1}^{d_B}\proj{\psi_i^b}$ be the projector onto the kernel of $M_b$, and let $\{\phi_i^b:i=1,...,d_B, b=1,...,B\}$ be a state ensemble (not necessarily pure) with the property that $\|\phi_i^b-\psi_i^b\|_1\leq \epsilon,\forall b, i$. Then, any POVM $\bar{M}$ satisfying
\begin{equation}
\sum_{b}\tr(Z_b\bar{M}_b)\leq \delta
\label{POVM_viol}
\end{equation}
is such that
\begin{equation}
\|M_b-\bar{M}_b\|\leq (1+ \sqrt{D}B\sqrt{\|G^{-1}\|})\epsilon',\forall b,
\end{equation}
where $\epsilon':=(D-1)\epsilon+\delta+2\sqrt{(D-1)\epsilon+\delta}$, and $G$ is the Gram matrix of the vectors $\{\ket{\xi_{i}^b}\bra{\xi_j^b}\}_{i,j,b}$.
\end{lemma}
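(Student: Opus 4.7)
The plan is to proceed in three stages. First, I would convert the experimentally available bound into a condition directly involving the ideal kernel projector $Z_b$. In the intended application the measurement statistics give $\sum_b\tr(\tilde{Z}_b\bar{M}_b)\le\delta$ with $\tilde{Z}_b:=\sum_i\phi_i^b$, so $|\sum_b\tr(Z_b\bar{M}_b)-\sum_b\tr(\tilde{Z}_b\bar{M}_b)|\le\sum_{b,i}\|\psi_i^b-\phi_i^b\|_1\|\bar{M}_b\|_\infty\le(D-1)\epsilon$, yielding $\sum_b\tr(Z_b\bar{M}_b)\le\eta$ with $\eta:=(D-1)\epsilon+\delta$.

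Second, I would use positivity of $\bar{M}_b$ to argue that $\bar{M}_b$ is almost supported on $\mbox{Supp}(M_b)$. Let $\Pi_b:=\id-Z_b$ and split $\bar{M}_b$ into the four blocks associated with $\Pi_b$ and $Z_b$. The $(Z_b,Z_b)$-diagonal block has operator norm at most $\tr(Z_b\bar{M}_b)$, and the standard off-diagonal inequality $\|B\|_\infty^2\le\|A\|_\infty\|C\|_\infty$ for $2\times 2$ PSD block matrices, combined with $\bar{M}_b\le\id$, bounds the $(\Pi_b,Z_b)$ cross term by $\sqrt{\tr(Z_b\bar{M}_b)}$. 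Defining $N_b:=\Pi_b\bar{M}_b\Pi_b$ and $E_b:=\bar{M}_b-N_b$, I thus obtain $\|E_b\|_\infty\le 2\sqrt{\eta}+\eta=\epsilon'$ for every $b$, and consequently $\|\sum_bE_b\|_\infty\le B\epsilon'$.

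Third, I would upgrade D'Ariano et al.'s qualitative extremality criterion to a quantitative Lipschitz bound via the Gram matrix. Define $H_b:=N_b-M_b$; these Hermitian operators obey $\mbox{Supp}(H_b)\subseteq\mbox{Supp}(M_b)$, and $\sum_bH_b=-\sum_bE_b$ since both $\bar{M}$ and $M$ sum to $\id$. Expanding $H_b=\sum_{i,j}h_b^{ij}\ket{\xi_i^b}\bra{\xi_j^b}$ and stacking all coefficients into a single vector $\vec{h}$, the Hilbert--Schmidt norm of $\sum_bH_b$ equals $\vec{h}^\dagger G\vec{h}$, so $\|\vec{h}\|^2\le\|G^{-1}\|\,\|\sum_bH_b\|_{HS}^2$. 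The orthonormality of $\{\ket{\xi_i^b}\}_i$ within each block gives $\|H_b\|_\infty\le\|H_b\|_{HS}\le\|\vec{h}\|$, and combining with $\|\sum_bH_b\|_{HS}\le\sqrt{D}\,\|\sum_bH_b\|_\infty\le\sqrt{D}\,B\epsilon'$ yields $\|H_b\|_\infty\le\sqrt{D\|G^{-1}\|}\,B\epsilon'$.

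Assembling the pieces via $\|\bar{M}_b-M_b\|_\infty\le\|H_b\|_\infty+\|E_b\|_\infty$ delivers the claimed bound $(1+\sqrt{D}\,B\sqrt{\|G^{-1}\|})\epsilon'$. The main obstacle is the third step: one must identify which spectral quantity of $M$ controls the injectivity modulus of the map $(H_b)_b\mapsto\sum_bH_b$ on Hermitian tuples with prescribed supports, and it is precisely the smallest eigenvalue of the Gram matrix $G$ of $\{\ket{\xi_i^b}\bra{\xi_j^b}\}_{i,j,b}$---whose strict positivity is equivalent to the qualitative extremality of $M$---that plays this role. The remaining stages are routine PSD-block and norm estimates.
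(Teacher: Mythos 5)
Your proof is correct and follows essentially the same route as the paper's: split $\bar{M}_b$ into blocks with respect to $Z_b$ and its orthocomplement, bound the off-support blocks by $\epsilon'$ via PSD block positivity (the paper phrases this through a Schur-complement proposition), use $\sum_b \bar{M}_b=\sum_b M_b=\id$ to control $\sum_b H_b$, and invert the Gram matrix of $\{\ketbra{\xi_i^b}{\xi_j^b}\}_{i,j,b}$ --- whose strict positivity is exactly D'Ariano et al.'s extremality criterion --- to pass from $\sum_b H_b$ small to each $H_b$ small. The only (harmless) slip is your aggregate estimate $\sum_b\tr(Z_b\bar{M}_b)\leq(D-1)\epsilon+\delta$, which would need $\sum_b d_b\leq D-1$; what actually holds, and what both you and the paper use downstream, is the per-outcome bound $\tr(Z_b\bar{M}_b)\leq d_b\epsilon+\delta\leq(D-1)\epsilon+\delta$ obtained from the nonnegativity of each term in the sum.
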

To prove the lemma, we need two auxiliary propositions.
\begin{prop}\cite{zhang2006schur}
\label{prop:schur}
Let $X$ be a matrix of the form:
\begin{equation}
X=\left(\begin{array}{cc}A&B^\dagger\\B &C\end{array}\right),
\end{equation}
and let $V$ be any isometry from the support of $C$ to a space with the same dimensionality. Then, $X\geq 0$ iff $VCV^\dagger>0$, $V^\dagger VB=B$ and $A-B^\dagger V^\dagger (VCV^\dagger)^{-1}VB\geq 0$.
\end{prop}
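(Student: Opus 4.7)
The plan is to reduce this to the standard Schur complement criterion (for invertible $C$) by introducing a \emph{generalized inverse} adapted to the possibly singular $C$. Let $P := V^\dagger V$. Since $V$ is an isometry from $\mathrm{supp}(C)$ onto a space of the same dimension, $VV^\dagger = I$ on that reduced space and $P$ is the orthogonal projector onto $\mathrm{supp}(C)$. I will define $C^+ := V^\dagger (VCV^\dagger)^{-1} V$ and verify from $C = PCP$ (which holds because $C$ is Hermitian with range inside its own support) that $CC^+ = C^+ C = P$ and $C^+$ is Hermitian.

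For the forward direction, assume $X \geq 0$. Then $C \geq 0$, and since $V$ is an isometry onto $\mathrm{supp}(C)$, the operator $VCV^\dagger$ is $C$ restricted to its support in a chosen basis, hence positive definite. To obtain $V^\dagger V B = B$, pick any $v \in \ker(C)$ and any $x$; positivity of $X$ gives $x^\dagger A x + 2\,\mathrm{Re}(x^\dagger B^\dagger v) \geq 0$ (the $v^\dagger C v$ term vanishes). Substituting $x = -t B^\dagger v$ and letting $t \to 0^+$ forces $B^\dagger v = 0$, so $\ker(C) \subset \ker(B^\dagger)$, which is equivalent to $V^\dagger V B = B$.

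The central step is a congruence transformation. Define
\begin{equation}
L := \begin{pmatrix} I & 0 \\ -C^+ B & I \end{pmatrix},
\end{equation}
which is invertible. A direct block calculation, using $CC^+ = P$ together with the support condition $PB = B$ to cancel the $(2,1)$ block, and then $C^+ C = P$ together with $B^\dagger P = B^\dagger$ to cancel the $(1,2)$ block, yields
\begin{equation}
L^\dagger X L = \begin{pmatrix} A - B^\dagger C^+ B & 0 \\ 0 & C \end{pmatrix}.
\end{equation}
Since $L$ is invertible, $X \geq 0$ if and only if $L^\dagger X L \geq 0$, which is equivalent to both diagonal blocks being PSD. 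Expanding $C^+$ recovers precisely $A - B^\dagger V^\dagger (VCV^\dagger)^{-1} V B \geq 0$, while $C \geq 0$ is equivalent to $VCV^\dagger > 0$ under the isometry assumption on $V$, closing both directions.

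The main bookkeeping obstacle will be to verify cleanly that $C^+$ acts as a genuine inverse on $\mathrm{supp}(C)$ and as zero on $\ker(C)$, so that $CC^+ B = PB = B$ really does kill the off-diagonal block of $L^\dagger X L$. The support condition $V^\dagger V B = B$ is doing exactly this job; without it, the off-diagonal blocks survive and the reduction to block-diagonal form fails. Once these pseudo-inverse identities are in place, the remainder is a routine block computation.
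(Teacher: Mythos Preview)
The paper does not actually prove this proposition: it is quoted from \cite{zhang2006schur} as a known result and then invoked directly in the proof of Lemma~\ref{lemma_robust_POVMs}. Your argument is a correct and standard derivation of the generalized Schur complement criterion. The reduction via the congruence $L^\dagger X L$ with $L=\bigl(\begin{smallmatrix} I & 0\\ -C^+B & I\end{smallmatrix}\bigr)$ goes through exactly as you state once one has the Moore--Penrose identities $CC^+=C^+C=P$ and the range condition $PB=B$; both directions of the ``iff'' then follow from the invertibility of $L$. The only cosmetic point is that in the reverse implication you should note explicitly that $VCV^\dagger>0$ together with $C$ Hermitian forces $C\geq 0$ (all nonzero eigenvalues of $C$ are positive), so that the block $\mathrm{diag}(A-B^\dagger C^+B,\,C)$ is indeed positive semidefinite; you allude to this but it deserves one sentence.
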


\begin{prop}
\label{prop_grams}
Let $\{\ket{v_i}\}_i\in \C^D$ be a set of linearly independent vectors, with Gram matrix $G_{ij}=\braket{v_i}{v_j}$. Let $\ket{w}=\sum_i c_i\ket{v_i}$. Then,
\begin{equation}
\|\vec{c}\|_2\leq \sqrt{\|G^{-1}\|}\|\ket{w}\|_2.
\end{equation}
\end{prop}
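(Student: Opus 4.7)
The plan is to exploit the basic Gram matrix identity that expresses $\|\ket{w}\|_2^2$ as a quadratic form in the coefficient vector $\vec{c}$, and then invert this quadratic form using the positive definiteness of $G$ (which holds because $\{\ket{v_i}\}_i$ are assumed linearly independent).

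First I would expand $\|\ket{w}\|_2^2 = \braket{w}{w}$ directly using $\ket{w}=\sum_i c_i \ket{v_i}$. This produces
\begin{equation}
\|\ket{w}\|_2^2 = \sum_{i,j} c_i^* c_j \braket{v_i}{v_j} = \vec{c}^{\,\dagger} G \vec{c}.
\end{equation}
Since the $\ket{v_i}$ are linearly independent, $G$ is strictly positive definite and invertible, so all its eigenvalues $\lambda_k(G)$ are strictly positive. The smallest eigenvalue satisfies $\lambda_{\min}(G) = 1/\|G^{-1}\|$, where $\|\cdot\|$ denotes the operator norm.

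Next I would invoke the standard lower bound for a positive semidefinite quadratic form, namely $\vec{c}^{\,\dagger} G \vec{c} \geq \lambda_{\min}(G)\, \|\vec{c}\|_2^2$. Substituting the identification of $\lambda_{\min}(G)$ with $1/\|G^{-1}\|$, this yields
\begin{equation}
\|\ket{w}\|_2^2 \geq \frac{1}{\|G^{-1}\|}\, \|\vec{c}\|_2^2,
\end{equation}
which upon rearranging and taking square roots gives the claimed inequality $\|\vec{c}\|_2 \leq \sqrt{\|G^{-1}\|}\,\|\ket{w}\|_2$.

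There is no substantive obstacle here: the result is a one-line linear-algebra identity dressed up in Dirac notation, and the only point worth highlighting is that linear independence of the $\ket{v_i}$ is exactly what guarantees $G\succ 0$ and hence that $\|G^{-1}\|$ is finite. If one prefers a coordinate-free derivation, an alternative is to write $\vec{c} = G^{-1} \vec{a}$ where $a_i = \braket{v_i}{w}$, and use $\|\vec{c}\|_2 \leq \|G^{-1}\|\,\|\vec{a}\|_2$ together with $\|\vec{a}\|_2^2 = \vec{c}^{\,\dagger} G^2 \vec{c} \leq \|G\|\, \vec{c}^{\,\dagger} G \vec{c} = \|G\|\,\|\ket{w}\|_2^2$; but the direct route above is shorter and avoids an unnecessary factor of $\|G\|$.
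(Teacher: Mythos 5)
Your argument is correct, and it reaches the paper's bound by a more elementary route. The paper instead introduces the synthesis operator $\Gamma=\sum_i\ket{i}\bra{v_i}$, observes that $\Gamma\ket{w}=G\vec{c}$ so that $\vec{c}=G^{-1}\Gamma\ket{w}$, and then computes the operator norm of $G^{-1}\Gamma$ \emph{exactly} via $\|G^{-1}\Gamma\|^2=\|G^{-1}\Gamma\Gamma^\dagger G^{-1}\|=\|G^{-1}GG^{-1}\|=\|G^{-1}\|$. This is precisely the ``coordinate-free'' alternative you sketch at the end and then discard — but note that the extra factor you incur there comes only from splitting $\|G^{-1}\Gamma\|$ submultiplicatively into $\|G^{-1}\|\,\|\Gamma\|$; evaluating the product $\Gamma\Gamma^\dagger=G$ before taking norms, as the paper does, recovers the sharp constant $\sqrt{\|G^{-1}\|}$. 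Your direct route, $\|\ket{w}\|_2^2=\vec{c}^{\,\dagger}G\vec{c}\geq\lambda_{\min}(G)\|\vec{c}\|_2^2$ together with $\lambda_{\min}(G)=1/\|G^{-1}\|$, avoids introducing $\Gamma$ altogether and is arguably the cleaner of the two; the paper's version has the minor advantage of exhibiting the explicit reconstruction map $\ket{w}\mapsto G^{-1}\Gamma\ket{w}$ whose norm is being bounded. Both proofs use linear independence in the same essential way, namely to guarantee $G\succ0$.
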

\begin{proof}
Define $\Gamma=\sum_i\ket{i}\bra{v_i}$. Then, 
\begin{equation}
\bra{j}\Gamma\ket{w}=\braket{v_j}{w}=\sum_ic_i\braket{v_j}{v_i}=\bra{j}G\ket{c}.
\end{equation}
It follows that
\begin{equation}
\ket{c}=G^{-1}\Gamma\ket{w}.
\end{equation}
Thus, $\|\ket{c}\|_2\leq \|\ket{w}\|_2\|G^{-1}\Gamma\|$. Finally, the norm of $G^{-1}\Gamma$ is the square root of the norm of
\begin{equation}
G^{-1}\Gamma\Gamma^\dagger G^{-1}=G^{-1}G G^{-1}=G^{-1}.
\end{equation}

\end{proof}

Now we are ready to prove Lemma \ref{lemma_robust_POVMs}.

\begin{proof}
From (\ref{POVM_viol}), we have that
\begin{equation}
\tr\{Z_b \bar{M}_b\}\leq d_B\epsilon+\delta.
\end{equation}
Define $Z^\perp_b:=\id-Z_b$, and let $V$ be an isometry from the support of $Z_b \bar{M}_b Z_b$ to a space with the same dimensionality. Then the above implies that $\|VZ_b \bar{M}_b Z_bV^\dagger\|\leq d_B\epsilon+\delta$. On the other hand,
\begin{equation}
M_b=Z^\perp_b \bar{M}_b Z^\perp_b+Z^\perp_b\bar{M}_bZ_b+Z_b\bar{M}_bZ^\perp_b+Z_b\bar{M}_bZ_b.
\label{decomp_M}
\end{equation}
Since $M_b\geq0$, we have, by Proposition \ref{prop:schur}, that 
\begin{align}
&Z^\perp_b \bar{M}_bZ^\perp_b-\frac{1}{d_B\epsilon+\delta}Z^\perp_b\bar{M}_bZ_b\bar{M}_bZ^\perp_b\geq Z^\perp_b\bar{M}_bZ^\perp_b-\nonumber\\
&Z^\perp_b\bar{M}_bZ_bV^\dagger (VZ_b\bar{M}_bZ_bV^\dagger)^{-1}VZ_b\bar{M}_bZ^\perp_b \geq 0.
\end{align}

Thus, the norm of the second and third terms of the right-hand side of eq. (\ref{decomp_M}) is bounded by $\sqrt{d_B\epsilon+\delta}$. It follows that
\begin{align}
&\|\bar{M}_b-Z^\perp_b \bar{M}_b Z^\perp_b\|\leq d_B\epsilon+\delta+2\sqrt{d_B\epsilon+\delta} \nonumber\\
&\leq(D-1)\epsilon+\delta+2\sqrt{(D-1)\epsilon+\delta}=\epsilon'.
\end{align}
To bound $\|M_b-\bar{M}_b\|$ it thus remains to bound $\|M_b-\tilde{M}_b\|$, with $\tilde{M}_b=Z_b^\perp \bar{M}_bZ_b^\perp$. Consider then the chain of inequalities:
\begin{align}
&\|\sum_b M_b-\tilde{M}_b\|_2\leq \|\sum_b M_b-M_b\|_2+\|\sum_b \bar{M}_b-\tilde{M}_b\|_2\leq \nonumber\\
&\sqrt{D}\sum_b \|\bar{M}_b-\tilde{M}_b\|_\infty\leq \sqrt{D}B\epsilon',
\end{align}
where, to arrive at the third expression, we made use of the identity $\sum_b M_b-\bar{M}_b=\id-\id=0$, the triangular inequality and the norm relation $\|\bullet\|_2\leq \sqrt{D}\|\bullet\|$.

Now, let $\{\ket{\xi_{i}^b}\}_i$ be the eigenvectors of $M_b$ with non-zero eigenvalue, and let $G$ be the Gram matrix of the operators $\{\ket{\xi_{i}^b}\bra{\xi_j^b}\}_{i,j,b}$, with the usual scalar product $\langle \alpha,\beta\rangle:=\tr(\alpha^\dagger\beta)$. If $(M_b)_b$ is, indeed, extremal, then by \cite{POVMs_dariano} $G>0$, i.e., $G$ is invertible. Let $M_b-\tilde{M}_b=\sum_{i,j,b}c_{ij}^b\ket{\xi_i^b}\bra{\xi_j^b}$. By Proposition \ref{prop_grams} we then have that
\begin{align}
\|M_b-\tilde{M}_b\|&\leq \sqrt{\sum_d\|M_d-\tilde{M}_d\|^2_2}=\sqrt{\sum_{i,j,d}|c_{ij}^d|^2}\nonumber\\
&=\|\vec{c}\|\leq \sqrt{D}B\sqrt{\|G^{-1}\|}\epsilon'.
\end{align}
It follows that
\begin{align}
\|M_b-\bar{M}_b\|&\leq \|M_b-\tilde{M}_b\|+\|\bar{M}_b-\tilde{M}_b\| \nonumber\\
&\leq(1+ \sqrt{D}B\sqrt{\|G^{-1}\|})\epsilon'.
\end{align}

\end{proof}

With Lemma \ref{lemma_robust_POVMs}, it is easy to prove analytic robustness bounds on self-testing of POVMs.

\begin{lemma}
Let $M=(M_b)_{b=1}^B$ be an extremal, $D$-dimensional POVM, and let $Z_b=\sum_{i=1}^{d_B}\proj{\psi_i^b}$ be the projector onto the kernel of $M_b$. Consider the linear witness
\begin{equation}
W_M(P):=W_\phi(P)-\sum_{b,i}P(b|x=(i,b),y=Y),
\end{equation}
where $\phi$ is an ensemble of pure states containing $\{\psi_i^b:i,b\}$ (and also ${\cal T}$, if $D>2$) and satisfying (\ref{max_mixed_cond}) in the main text. The considered prepare-and-measure scenario has $X=|\phi|$ preparations and $Y=\frac{X^2-X}{2}+1$ measurements, the last of which, $y=Y$, has $B$ outcomes.

Suppose that some $D$-dimensional realization $(\bar{\phi},\bar{M})$ generates $P$ such that
\begin{equation}
W_M(P)\geq 1-\frac{1}{D}-\epsilon.
\label{viol_M}
\end{equation}
Then, 
\begin{equation}
\|M_{b|Y}-\bar{M}_{b|Y}\|\leq \delta',\forall b,
\end{equation}
with $\delta'\leq O(\epsilon^{1/16})$, for $D=2$ or $\delta'\leq O(\epsilon^{1/32})$, otherwise.
\end{lemma}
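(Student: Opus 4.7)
The plan is to decompose the witness violation into two near-optimality conditions, apply state self-testing to pin down the preparation ensemble up to a(n) (anti-)unitary, and then feed the result into Lemma \ref{lemma_robust_POVMs} to control the POVM. Concretely, since $W_\phi(P)\leq 1-\frac{1}{D}$ and $\sum_{b,i}P(b|x=(i,b),y=Y)\geq 0$, the hypothesis $W_M(P)\geq 1-\frac{1}{D}-\epsilon$ immediately forces both
\begin{equation}
W_\phi(P)\geq 1-\frac{1}{D}-\epsilon,\qquad \sum_{b,i}P(b|x=(i,b),y=Y)\leq\epsilon.
\end{equation}
The first inequality is exactly the hypothesis of Proposition \ref{prop_robust_overlaps}, so the prepared ensemble $\bar\phi$ has pairwise overlaps $O(\sqrt\epsilon)$-close to those of $\phi$ and purities $1-O(\epsilon)$.

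Next I would invoke the appropriate Wigner-type lemma (Lemma \ref{lemma:wigner_dim_2} for $D=2$, Lemma \ref{lemma_robust_state_self_testing} otherwise) to promote closeness of overlaps to closeness of states. This yields a unitary or anti-unitary $U$ such that
\begin{equation}
\|U\bar\phi_iU^\dagger-\phi_i\|_1\leq \epsilon_1\qquad\forall i,
\end{equation}
where $\epsilon_1=O(\epsilon^{1/8})$ when $D=2$ (one square root from overlap-to-state in the $D=2$ Wigner property applied to an $O(\sqrt\epsilon)$ overlap error) and $\epsilon_1=O(\epsilon^{1/16})$ when $D>2$ (the $\delta^{1/8}$ scaling of Lemma \ref{wigner_discrete} composed with the $\sqrt\epsilon$ overlap bound). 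In particular, each of the reference kernel states $\psi_i^b$ is approximated to within $\epsilon_1$ in trace norm by $U\bar\phi_i^b U^\dagger$, which is the hypothesis required to invoke Lemma \ref{lemma_robust_POVMs} with approximating states $\phi_i^b:=U\bar\phi_i^bU^\dagger$.

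Set $M'_b:=U\bar M_b U^\dagger$. The remaining step is to bound $\sum_b\tr(Z_bM'_b)$. Writing $\bar Z_b=\sum_i\bar\phi_i^b$ and using the triangle inequality together with $\|M'_b\|_\infty\leq 1$,
\begin{equation}
\sum_b\tr(Z_bM'_b)\leq \sum_b\tr(\bar Z_b\bar M_b)+\sum_b\|U^\dagger Z_bU-\bar Z_b\|_1\leq \epsilon+\left(\sum_b d_b\right)\epsilon_1=O(\epsilon_1).
\end{equation}
Lemma \ref{lemma_robust_POVMs} then yields $\|M_b-M'_b\|\leq O(\sqrt{\epsilon_1})$, because its right-hand side $(1+\sqrt DB\sqrt{\|G^{-1}\|})\epsilon'$ scales as $\sqrt{\epsilon_1}$ when both the state perturbation and the violation parameter are $O(\epsilon_1)$. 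Substituting the two cases gives $O(\epsilon^{1/16})$ for $D=2$ and $O(\epsilon^{1/32})$ for $D>2$, as claimed.

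The main obstacle is bookkeeping the chain of square roots: each of Proposition \ref{prop_robust_overlaps}, the Wigner-type lemma, and Lemma \ref{lemma_robust_POVMs} introduces a square-root loss, and one must verify that the regime conditions required by Lemma \ref{lemma_robust_state_self_testing} (smallness of $\delta_p,\delta_o$) and by Lemma \ref{lemma_robust_POVMs} (invertibility of the Gram matrix $G$, which is guaranteed by the extremality of $M$ via \cite{POVMs_dariano}) are inherited from $\epsilon$ being sufficiently small. The dimension-dependent prefactors $\sqrt D,B,\|G^{-1}\|^{1/2}$ and the kernel-dimension factor $\sum_b d_b$ are absorbed into the $O(\cdot)$ notation, but a careful proof should track them explicitly in case dimension-dependent bounds are desired.
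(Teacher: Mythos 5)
Your proposal is correct and follows essentially the same route as the paper's proof: split the near-maximal violation into $W_\phi(P)\geq 1-\frac{1}{D}-\epsilon$ and $\sum_{b,i}P(b|x=(i,b),y=Y)\leq\epsilon$, self-test the kernel states via Proposition \ref{prop_robust_overlaps} plus the appropriate Wigner-type lemma to get $\epsilon_1=O(\epsilon^{1/8})$ (resp. $O(\epsilon^{1/16})$), and feed the result into Lemma \ref{lemma_robust_POVMs}, whose $\epsilon'$ contributes the final square root. You are in fact slightly more careful than the paper in two places: you correct the sign of the second inequality (the paper's Eq.~(\ref{restr_POVM}) has $\geq\epsilon$ where $\leq\epsilon$ is meant) and you spell out the transfer from $\sum_b\tr(\bar Z_b\bar M_b)\leq\epsilon$ to $\sum_b\tr(Z_bM'_b)=O(\epsilon_1)$, which the paper leaves implicit.
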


\begin{proof}
Eq. (\ref{viol_M}) implies that
\begin{align}
&W_\psi(P)\geq 1-\frac{1}{D}-\epsilon,\\
&\sum_{b,i}P(b|x=(i,b),y=Y)\geq \epsilon.\label{restr_POVM}
\end{align}
Thus, by the main text, the states $\{\bar{\psi}_{i,b}:i,b\}$ are $\delta$-away from the reference states $\{\psi_i^b:i,b\}$, with $\delta=O(\epsilon^{1/8})$, for $D=2$, or $\delta=O(\epsilon^{1/16})$, otherwise. With (\ref{restr_POVM}) and the extremality of $M$, the conditions of Lemma \ref{lemma_robust_POVMs} are thus satisfied, resulting in the statement of the present lemma.
    
\end{proof}

\section{Examples of pure state ensembles that do not satisfy Wigner's property}
\label{app:SIC}

In this section, we present various families of pure state ensembles that do not satisfy the Wigner property. The first family is quite generic, but none of its members satisfies condition (\ref{max_mixed_cond}) in the main text. Thus it is a possibility that the minimum completion of each ensemble such that condition (\ref{max_mixed_cond}) in the main text holds is enough to self-test the ensemble via the linear witness in Lemma \ref{witness_norms} in the main text. The subsequent examples in this appendix consist of sets of nine pure qutrit states forming SIC POVMs. These ensembles satisfy condition (\ref{max_mixed_cond}) in the main text, and so their overlaps can be self-tested without the need of more state preparations. That would not be enough to self-test the quantum states, though.

\subsection{Pure qubit state ensembles embedded in the three-dimensional space}
\label{wig1}

It has been shown in the main text that all pure qubit state ensembles $\{\psi_i\}_{i=1}^N\subset P$ satisfy the Wigner property. However, it is important to note that this statement holds true only when $\mathcal{P}$ in Definition \ref{def_wigner} in the main text is the set of all rank-1 projectors in the two-dimensional space. In other words, $\{\phi_i\}_{i=1}^N$ are restricted to be qubits as well. The above statement does not hold without this restriction, that is if $\psi_i$ are in a qubit space embedded in a higher dimensional space. To illustrate this, let us consider the following three-element set:
\begin{equation}
\{|\psi_1\rangle,|\psi_2\rangle,|\psi_3\rangle\}=\left\{|1\rangle,b_{21}|1\rangle+b_{22}|2\rangle,b_{31}|1\rangle+b_{32}e^{i\beta}|2\rangle\right\},
\label{eq:qpsiset}
\end{equation}
where $b_{21},b_{22},b_{31}$ and $b_{32}$ are real non-negative numbers and $b_{21}^2+b_{22}^2=b_{31}^2+b_{32}^2=1$. It is worth noting that any set of three normalized qubit state vectors can be transformed into this form by applying appropriate unitary transformations and multiplying the vectors with global phases. Now consider the following set of vectors
\begin{equation}
\{|\phi_1\rangle,|\phi_2\rangle,|\phi_3\rangle\}=\left\{|1\rangle,b_{21}|1\rangle+b_{22}|2\rangle,b_{31}|1\rangle+t_{32}|2\rangle+t_{33}|3\rangle\right\},
\label{eq:tphiset}
\end{equation}
where $t_{32}$ is given by the equation
\begin{equation}
b_{22}t_{32}=|b_{21}b_{31}+b_{22}b_{32}e^{i\beta}|-b_{21}b_{31},
\label{eq:t22}
\end{equation}
and $t_{33}=\sqrt{1-b_{31}^2-t_{32}^2}$.
It can be observed that $\mathrm{Tr}(\phi_i\phi_j)=\mathrm{Tr}(\psi_i\psi_j)$ (or equivalently $|\langle\phi_i|\phi_j\rangle|=|\langle\psi_i|\psi_j\rangle|$) holds true for all pairs of $i$ and $j$. For $i=j=3$ it is due to the above equation, while for other pairs of $i$ and $j$, it is trivially satisfied.

By considering that 
\begin{equation}
-b_{22}b_{32}\leq|b_{21}b_{31}-b_{22}b_{32}|-b_{21}b_{31}\leq b_{22}t_{32}\leq b_{22}b_{32}, 
\end{equation}
it becomes evident that $t_{33}$ is a real number. Notably, except for very special cases, $|\phi_i\rangle$ span the three-dimensional space, while the vectors $|\psi_i\rangle$ reside in a two-dimensional subspace. Therefore the projectors corresponding to these sets cannot be transformed into each other using either unitary or anti-unitary operations. Hence, if we allow the qubit space to be a subspace embedded within a higher-dimensional space, sets of $D=2$ vectors do not necessarily satisfy the Wigner property.

\subsection{Pure qutrit state ensembles forming SIC POVMs}
\label{wig2}

Let us consider a paradigmatic example of a non-projective POVM, known as a symmetric informationally complete (SIC) POVM, which consists of rank-one POVM elements~\cite{renes2004}. A SIC POVM in $D$ dimensions has $D^2$ outcomes (it is still an open question whether they exist for all $D$). The elements corresponding to the outcomes are proportional to one-dimensional projectors with equal proportionality factors. Moreover, for all pairs of projectors, the traces of their products are also equal. Since the projectors define a POVM, they satisfy condition (\ref{max_mixed_cond}) in the main text. 

Most known SIC POVMs have been constructed by finding a suitable fiducial vector and generating the other vectors corresponding to the projectors by repeatedly acting on the fiducial vector with two operators: the shift operator $X=\sum_{k=1}^D|k\pmod{D}+1\rangle\langle k|$ and the clock operator $Z=\sum_{k=1}^D e^{i2\pi(k-1)/D}|k\rangle\langle k|$. The following one-parameter three-dimensional set can be derived in this way:
\begin{align}
&\{|\psi_1(t)\rangle,|\psi_2(t)\rangle,|\psi_3(t)\rangle,|\psi_4(t)\rangle,|\psi_5(t)\rangle,|\psi_6(t)\rangle,|\psi_7(t)\rangle,|\psi_8(t)\rangle,|\psi_9(t)\}\rangle=\nonumber\\
&\bigg\{\frac{|1\rangle+e^{it}|2\rangle}{\sqrt{2}},\frac{|2\rangle+e^{it}|3\rangle}{\sqrt{2}},\frac{|3\rangle+e^{it}|1\rangle}{\sqrt{2}},\nonumber\\
&\frac{|1\rangle+e^{it+2\pi i/3}|2\rangle}{\sqrt{2}},\frac{|2\rangle+e^{it+2\pi i/3}|3\rangle}{\sqrt{2}},\frac{|3\rangle+e^{it+2\pi i/3}|1\rangle}{\sqrt{2}},\nonumber\\
&\frac{|1\rangle+e^{it-2\pi i/3}|2\rangle}{\sqrt{2}},\frac{|2\rangle+e^{it-2\pi i/3}|3\rangle}{\sqrt{2}},\frac{|3\rangle+e^{it-2\pi i/3}|1\rangle}{\sqrt{2}}
\bigg\},
\label{eq:3dsic}
\end{align}
where the fiducial vector is the first. It is not difficult to check that $\sum_{k=1}^{9}\psi_k/3=\id_3$, and that $\Tr(\psi_i\psi_j)=|\langle\psi_i|\psi_j\rangle|^2=1/2$ for any pair, so the above vectors define a SIC POVM in $D=3$ for any value of the parameter $t$, indeed. At the same time, they do not satisfy the Wigner property. We will show this firstly for the specific pair of sets corresponding to $t=0$ and $t'=\pi$. Then we will show it for other generic choices of $t$ and $t'$.

The traces of the triple products $\psi_i(t)\psi_j(t)\psi_k(t)$ are trivially elements of a finite set of discrete numbers. In fact, it is not difficult to calculate that these come from the following list of numbers
\begin{equation}
\label{phaselist}
\left[\frac{e^{\pm 3it}}{8},\:\: \frac{e^{(\pm 3it\pm 2\pi i/3)}}{8},\:\: \frac{e^{\pm\pi i/3}}{8},\:\: -\frac{1}{8}\right].
\end{equation}
Since all the absolute values are $1/8$, we are only interested in the phase of the traces,
\begin{equation}
\phi=\text{arg}(\Tr{\psi_i(t)\psi_j(t)\psi_k(t)}),
\end{equation}
where $-\pi\le\phi<\pi$. It can be seen that the odd permutations of the indices $i,j,k$ or the complex conjugation of the three operators in the trace invert the sign of the phase. So we take the absolute value of the phase, $|\phi|$, where we have $0\le |\phi|<\pi$. This value of $|\phi|$ no longer depends on the permutations of the indices $i,j,k$ and the complex conjugation of the three operators in the trace. In fact, it is a discrete version of the gauge invariant geometric phase in the projective Hilbert space~\cite{aharonov1987,mukunda1993}. From each of the SIC POVMs corresponding to $t=0$ and $t'=\pi$, we can obtain a total of $\binom{9}{3}=84$ values for $|\phi|$. Then the two SIC POVMs are considered (non)unitarily inequivalent if their $|\phi|$ angles have different patterns. However, one can see that two of the numbers, $|\phi|=0$ and $|\phi|=\pi$, never appear among the phases of the list~(\ref{phaselist}) for the case $t=\pi$, while they do for $t=0$, so the two lists are not the same after all. Thus, the two SIC POVMs, the one corresponding to $t=0$ and the other corresponding to $t'=\pi$, are indeed not related by a unitary or anti-unitary transformation. 

Let us now consider the generic case of $t$ and $t'$. It is easy to verify that 
\begin{equation}
\mathrm{Tr}(\psi_1(t)\psi_2(t)\psi_3(t))=\langle\psi_1(t)|\psi_2(t)\rangle\langle\psi_2(t)|\psi_3(t)\rangle\langle\psi_3(t)|\psi_1(t)\rangle=e^{-3it}/8.    
\end{equation}
This number is invariant under a unitary operation, while the result of an anti-unitary one is a complex conjugation. Hence, if $3(t-t')$ is not an integer multiple of $\pi$, then the first three members of the sets belonging to the parameters $t$ and $t'$ cannot be transformed into each other by a unitary or an anti-unitary operation. However, due to the symmetry of the sets, $\mathrm{Tr}(\psi_i(t)\psi_j(t))=\mathrm{Tr}(\phi_i\phi_j)$ holds not only if the set $\{\phi_i\}$ is equal to $\{\psi_i(t')\}$, but also if it is any permutation of $\{\psi_i(t')\}$. Therefore, we will now prove that there exists $t'$ such that $\{\psi_i(t)\}$ cannot be transformed into any permutation of $\{\psi_i(t')\}$, or equivalently, no permutation of $\{\psi_i(t)\}$ can be transformed into $\{\psi_i(t')\}$ with a unitary or anti-unitary operation. From the list of the traces of the triple products given in (\ref{phaselist}) it can be seen that there exists $t'$ such that $e^{3it'}$ differs from all of them. Then there are no three elements of $\{\psi_i(t)\}$ which can be transformed into the first three elements of $\{\psi_i(t')\}$ by unitary or anti-unitary operation, and consequently no permutation of $\{\psi_i(t)\}$ can be transformed into $\{\psi_i(t')\}$ this way. This concludes the proof that set $\{\psi_i(t)\}$ does not satisfy the Wigner property for any value of the parameter $t$.

\section{Example of self-testing of a non-projective qutrit POVM}
\label{app:examplePOVM}

%Below we demonstrate the self-testing construction in the main text on the simplest extremal non-projective non-rank-one POVM. This minimal example is a three-outcome POVM with ranks $(1,1,2)$ in dimension 3. Indeed, we need three outcomes, since every two-outcome non-projective qutrit POVM is non-extremal. To show this, it suffices to prove that every two-outcome POVM with ranks $(2,2)$ in dimension 3 is non-extremal. The reason is that for two outcomes, the two POVM elements must commute, so they can be diagonalized simultaneously. Then, unless they are both projectors, the rank-1 projectors of their eigenvectors with eigenvalues not in $\{0,1\}$ will be linearly dependent.

Below we particularize the construction in the main text to self-test an extremal non-projective, non-rank-one POVM. Note that every two-outcome non-projective POVM is non-extremal. Thus, in dimension $D=3$, the simplest example of an extremal non-projective, non-rank-one POVM requires three outcomes, with ranks $(1,1,2)$.

We define the following states
\begin{align}
\ket{\phi_1} &=  (\sqrt{3}\ket{0} - \ket{1})/2,\nonumber\\
\ket{\phi_2} &=  (\sqrt{3}\ket{0} + \ket{1})/2,\nonumber\\
\ket{\phi_3} &= \ket{1},\nonumber\\
\ket{\phi_4} &= \ket{2},
\end{align}
which make up the following POVM elements:
\begin{align}
M_1 &= 2\phi_1/3,\nonumber\\
M_2 &= 2\phi_2/3,\nonumber\\
M_3 &= 2\phi_3/3+\phi_4.
\label{povm_example}
\end{align}
It is straightforward to verify that these matrices indeed define a POVM with the desired ranks. This POVM is extremal, as one can verify through the criterion in \cite{extreme_POVMs}. 

Following the procedure in the main text, we first define the set of necessary preparations. Some of them are defined by the basis vectors of the null space of the $M_b$ matrices:
\begin{align}
\ket{\psi_1^1}=&(\ket{0}+\sqrt{3}\ket{1})/2,\nonumber\\
\ket{\psi_2^1}=&\ket{2},\nonumber\\
\ket{\psi_1^2}=&(-\ket{0}+\sqrt{3}\ket{1})/2,\nonumber\\
\ket{\psi_2^2}=&\ket{2},\nonumber\\
\ket{\psi_1^3} =&\ket{0},
\label{setofstates}
\end{align}
which define the spectral decomposition $Z_b=\sum_i\ket{\psi_i^b}\bra{\psi_i^b}$. By construction, we have $\tr(Z_b M_b)=0$ for $b=1,2,3$. 

Starting from the above pure qutrit states, we define the pure state ensemble $\psi=\{\psi_i^b: i,b\}\cup{\cal T}\cup{\cal R}$, where ${\cal R}$ is chosen so that the overall ensemble $\psi$ satisfies the condition~(\ref{max_mixed_cond}) in the main text for some positive $(\alpha_i)_i$. To self-test $M$, we thus require a prepare-and-measure scenario with $X=5+(5\times 3-6)+2=16$ preparations and $Y=(16^2-16)/2=120$ dichotomic measurements, plus a $3$-outcome measurement. $M$ is then self-tested whenever the prepare-and-measure witness labeled (\ref{witness_POVMs}) in the main text is maximized.

%This defines a prepare-and-measure witness defined by (\linearwitness) in the main text, with $X=5+(5\times 3-6)+2=16$ preparations and $Y=(16^2-16)/2=120$ dichotomic measurements, plus a $3$-outcome measurement.

\end{appendix}

\end{document}